  \newcommand{\R}{\mathbb{R}} 
  \newcommand{\C}{\mathbb{C}} 
  \newcommand{\N}{\mathbb{N}} 
  \newcommand{\Z}{\mathbb{Z}} 
  \newcommand{\pmset}[1]{\{-1,1\}^{#1}} 
  \newcommand{\bset}[1]{\{0,1\}^{#1}} 
  \newcommand{\st}{:\,} 
  \newcommand{\ie}{{i.e.}}
\newcommand{\condeq}[1]{
	\left\{\begin{array}{ll}
	#1
	\end{array}\right.
}
  \newcommand{\F}{\mathbb{F}}
  \newcommand{\ceil}[1]{\lceil{#1}\rceil}
  \DeclareMathOperator{\Tr}{\mathsf{Tr}}
  \renewcommand{\Pr}{\mbox{\rm Pr}}
  \DeclareMathOperator{\Diag}{Diag}
\newcommand{\beq}{\begin{equation}}
\newcommand{\eeq}{\end{equation}}
\newcommand{\beqn}{\begin{equation*}}
\newcommand{\eeqn}{\end{equation*}}
\newcommand{\beqr}{\begin{eqnarray}}
\newcommand{\eeqr}{\end{eqnarray}}
\newcommand{\beqrn}{\begin{eqnarray*}}
\newcommand{\eeqrn}{\end{eqnarray*}}
\newcommand{\bmline}{\begin{multline}}
\newcommand{\emline}{\end{multline}}
\newcommand{\bmlinen}{\begin{multline*}}
\newcommand{\emlinen}{\end{multline*}}
\newtheorem{defin}{Definition}[section]
\newtheorem{definition}[defin]{Definition}
\newtheorem{proposition}[defin]{Proposition}
\newtheorem{theorem}[defin]{Theorem}
\newtheorem{corollary}[defin]{Corollary}
\newtheorem{lemma}[defin]{Lemma}
\newtheorem*{rep@theorem}{\rep@title}
\newcommand{\newreptheorem}[2]{%
\newenvironment{rep#1}[1]{%
 \def\rep@title{#2 \ref{##1}}%
 \begin{rep@theorem}}%
 {\end{rep@theorem}}}
\def\namedlabel#1#2{\begingroup
    #2%
    \def\@currentlabel{#2}%
    \phantomsection\label{#1}\endgroup
}
\newcommand{\qindep}{\alpha^{\star}} 
\newcommand{\qchrom}{\chi^{\star}} 
\newcommand{\qwits}{R^{\star}} 
\newcommand{\scrate}{\eta}
\newcommand{\qscrate}{\eta^{\star}}
\newcommand{\qcap}{c^{\star}}
\newcommand{\thplus}{\vartheta^+}
\newcommand{\ignore}[1]{}
\begin{document}


\title[Entanglement-assisted zero-error source-channel coding]{\bf \large Entanglement-assisted zero-error source-channel coding}
\thanks{
J.~B.\ was partially supported by a Rubicon grant from the Netherlands Organization for Scientific Research (NWO).
J.~B.\ and H.~B.\ were supported by the European Commission under the project QCS (Grant No. 255961). 
H.~B.\ and T.~P.\ were supported by the EU grant SIQS.
G.~S.\ was supported by Ronald de Wolf's Vidi grant 639.072.803 from the Netherlands Organization for Scientific Research (NWO)}

\author[J. Bri\"{e}t]{Jop Bri\"et}
\address{Courant Institute, New York University, 251 Mercer Street, New York NY 10012, USA}
\email{jop.briet@cims.nyu.edu}

\author[H. Buhrman]{Harry Buhrman}
\address{Centrum Wiskunde \& Informatica (CWI), Science Park 123, 1098 XG Amsterdam, The Netherlands}
\email{buhrman@cwi.nl}

\author[M. Laurent]{Monique Laurent}
\address{Centrum Wiskunde \& Informatica (CWI), Science Park 123, 1098 XG Amsterdam,  and   Tilburg University,
              PO Box 90153, 5000 LE Tilburg, The Netherlands}
              \email{m.laurent@cwi.nl}

\author[T. Piovesan]{Teresa Piovesan}
\address{Centrum Wiskunde \& Informatica (CWI), Science Park 123, 1098 XG Amsterdam, The Netherlands}
\email{t.piovesan@cwi.nl}

\author[G. Scarpa]{Giannicola Scarpa}
\address{Universitat Aut\`onoma de Barcelona, E-08193 Bellaterra (Barcelona), Spain}
\email{giannicolascarpa@gmail.com}

\maketitle

\begin{abstract}
We study the use of quantum entanglement in the zero-error source-channel coding problem.
Here, Alice and Bob are connected by a noisy classical one-way channel, and are given correlated inputs from a random source.
Their goal is for Bob to learn Alice's input while using the channel as little as possible.
In the zero-error regime, the optimal rates of source codes and channel codes are given by graph parameters known as the Witsenhausen rate and Shannon capacity, respectively.
The Lov\'asz theta number, a graph parameter defined by a semidefinite program, gives the best efficiently-computable upper bound on the Shannon capacity and it also upper bounds its entanglement-assisted counterpart.
At the same time it was recently shown that the Shannon capacity can be increased if Alice and Bob may use entanglement.

Here we partially extend these results to the source-coding problem and to the more general source-channel coding problem.
We prove a lower bound on the rate of entanglement-assisted source-codes in terms of Szegedy's number (a strengthening of the theta number).
This result implies that the theta number  lower bounds the entangled variant of the Witsenhausen rate.
We also show that entanglement can allow for an unbounded improvement of the asymptotic rate of both classical source codes and classical source-channel codes.
Our separation results use low-degree polynomials due to Barrington, Beigel and Rudich, Hadamard matrices due to Xia and Liu and a new application of remote state preparation.\\
\newline
\textbf{Keywords:}
Entanglement, Shannon capacity, Witsenhausen rate,  quantum remote state preparation, graph homomorphism, graph coloring, Lov\'asz theta number, semidefinite programming.
\end{abstract}


%
%

%
\section{Introduction}\label{sec:intro}

We study a problem from classical zero-error information theory:  the {\em zero-error source-channel coding problem}, in the setting where the sender and receiver may use quantum entanglement.
Viewed separately, the source coding problem asks a sender, Alice, to efficiently communicate data about which a receiver, Bob, already has some information\footnote{This setting is usually called the {\em dual} source coding problem, referring to the fact that it involves two parties. Since we only consider the two-party setting we omit the word {\em dual}.}, while the channel coding problem asks Alice to transmit data reliably in the presence of noise.
In the combination of these two problems, Alice and Bob are each given an input from a random source and get access to a noisy channel through which Alice can send messages to Bob.
Their goal is to minimize the average number of channel uses per source input such that Bob can learn Alice's inputs with zero probability of error.

Shannon's seminal paper~\cite{Shannon:1956} on zero-error channel capacity kindled a large research area which involves not only information theorists but also researchers from combinatorics, computer science and mathematical programming (see for example K\"{o}rner and Orlitsky~\cite{Korner:1998} for an extensive survey and Lubetzky's PhD thesis~\cite{Lubetzky:2007} for more recent results).
The branch of this line of research involving entanglement was started only recently by Cubitt et al.~\cite{Cubitt:2010}.
The possibility for a pair of quantum systems to be entangled is one of the most striking features of quantum mechanics.
The typical setting in which this phenomenon manifests itself is where two parties, Alice and Bob, each have a quantum system and perform on it a measurement of their choice.
In a celebrated response to a paper of Einstein, Podolsky and Rosen~\cite{Einstein:1935}, Bell~\cite{Bell:1964} showed that entanglement between Alice's and Bob's systems can cause their measurement outcomes to be distributed according to probability distributions that fall outside the realm of classical physics.
In particular, entanglement can give outcome pairs which do not follow a product distribution, nor any convex combination of such distributions.
Entanglement therefore allows spatially separated parties to produce so-called {\em non-local} correlations without needing to communicate.
Our main results concern {\em lower bounds} on the optimum rate of entanglement-assisted source codes and the {\em advantage} that entanglement can give in the source-channel coding problem. 
Below we set the stage in detail and state our results precisely.

First, we recall some preliminary definitions needed for this paper.
Throughout, all graphs are assumed to be finite, undirected and without self-loops.
For a graph $G = (V,E)$, $V$ and $E$ (or equivalently $V(G)$ and $E(G)$) denote its vertex and edge set, respectively.
The complement of $G$ is $\overline G$, the graph with vertex set $V(G)$ where distinct vertices are adjacent if and only if they are not adjacent in $G$.
An independent set is a subset of the vertex set such that no pair is adjacent and the {\em independence number}~$\alpha(G)$ is the maximum cardinality of an independent set  in $G$.
A clique is a subset of vertices in which each pair is adjacent and the {\em clique number} $\omega(G)$ is the maximum cardinality of a clique in $G$. 
Clearly $\alpha(G) = \omega(\overline G)$.
A {\em proper coloring} is a set of pairwise disjoint independence sets that cover $V(G)$, \ie,~an assignment of a color to each vertex such that adjacent vertices receive distinct colors.
The {\em chromatic number}~$\chi(G)$ is the minimum number of colors needed for a proper coloring.
Thus $\alpha(G) \cdot \chi(G) \geq |V(G)|$.
The {\em strong product} $G \boxtimes H$ of two graphs $G$ and $H$ is the graph whose vertex set is the cartesian product $V(G) \times V(H)$ and where two distinct vertices $(u_1,u_2)$ and $(v_1,v_2)$ are adjacent if and only if it holds that $u_1=v_1$ or $\{u_1,v_1\} \in E(G)$ and that $u_2=v_2$ or $\{u_2,v_2\} \in E(H)$. 
For a graph~$G$ and $m\in \N$, $G^{\boxtimes m}$ denotes the strong product of $m$ copies of $G$, with vertex set $V(G)^m$ and where two distinct vertices 
$(u_1,\ldots,u_m)$ and $(v_1,\ldots,v_m)$ are adjacent if, for all $i\in [m]$, either $u_i=v_i$ or $\{u_i,v_i\}\in E(G)$.
A {\em homomorphism} from a graph $G$ to a graph $H$ is a map $\phi:V(G)\to V(H)$ such that every edge $\{u,v\}$ in $G$ is mapped to an edge $\{\phi(u),\phi(v)\}$ in~$H$. 
If such a map exists, we write $G\longrightarrow H$. 
The complete graph on $t$ vertices, denoted by~$K_t$, is the graph where every pair of distinct vertices is adjacent.
A {\em $d$-dimensional orthogonal representation} of a graph $G$ is a map $f$ from $V(G)$ to non-zero vectors in~$\C^d$ such that adjacent vertices are mapped to orthogonal vectors.\footnote{We stress that in our definition {\em orthogonality corresponds to adjacency}. Some authors prefer to demand orthogonality for non-adjacent vertices instead.}
The {\em orthogonal rank~$\xi(G)$} of $G$ is the minimum~$d$ such that there exists a $d$-dimensional orthogonal representation of~$G$.
Following~\cite{Cameron:2007} we define $\xi'(G)$ to be the minimum~$d$ for which there exists a $d$-dimensional orthogonal representation~$f$ of~$G$ such that for each~$u\in V(G)$ the entries of the vector~$f(u)$ all have absolute value one.
To indicate that a Hermitian matrix $A$ is positive semidefinite we write $A \succeq 0$.
Recall that~$A\succeq 0$ if and only if there exist vectors~$a_i$ such that~$A_{ij} = a_i^*a_j$.
The trace inner product of matrices $A,B$ of equal size is defined by $\langle A,B \rangle = \Tr(A^*B)$. 
Finally, for $n\in\N$ we denote  $[n] = \{1,2,\dots,n\}$.

\subsection{Classical source-channel coding}
In this section we describe the classical zero-error source, channel, and source-channel coding problems.
A {\em source} ${\mathcal{M}= (\mathsf X, \mathsf U, P)}$ consists of a finite set~$\mathsf X$, a (possibly infinite) set~$\mathsf U$ and a probability distribution~$P$ over~$\mathsf X\times\mathsf U$.
In a source instance, Alice is given an input $x\in \mathsf X$ and Bob an input $u\in \mathsf U$ with probability~$P(x,u)$.
Bob's input may already give him some information about Alice's. 
But if his input does not uniquely identify hers, she has to supply additional information for him to learn it exactly.
For this they get access to a noiseless one-way binary channel which they aim to use as little as possible.\footnote{From now on we will assume that all binary channels are noiseless.}
Here we consider only {\em memoryless} sources, which means that the probability distribution $P(x,u)$ of the source is unchanged after every instance.

The source-coding problem can sometimes be solved more efficiently by jointly encoding sequences of inputs into single codewords.
If the parties use {\em block codes} of length $n$ to deal with length $m$ input sequences, then after receiving an input sequence $\mathbf x = (x_1,\dots,x_m)$, Alice applies encoding function $\mathsf C:\mathsf X^m\to\bset{n}$ and sends $\mathsf C(\mathbf x)$ through the binary channel by using it $n$ times in a row.
Bob, who received an input $\mathbf{u} = (u_1,\dots,u_m)\in\mathsf U^m$, then applies a decoding function $\mathsf D:\mathsf U^m\times \bset{n} \to \mathsf X^m$ to the pair $(\mathbf u,\mathsf C(\mathbf x))$ to get a string in~$\mathsf X^m$.
The scheme works if Bob always gets the string~$\mathbf x$.
The {\em cost rate} of the scheme $(\mathsf C,\mathsf D)$ is then~$n/m$, which counts the average number of channel uses per source-input symbol.

Witsenhausen~\cite{Witsenhausen:1976} and Ferguson and Bailey~\cite{Ferguson:1975}  showed that the zero-error source coding problem can be studied in graph-theoretic terms.
Associated with a source $\mathcal M = (\mathsf X,\mathsf U,P)$ is its {\em characteristic graph} $G = (\mathsf X, E)$, where $\{x,y\}\in E$ if there exists a $u\in\mathsf U$ such that $P(x,u)>0$ and $P(y,u)>0$.
As such, the edge set identifies the pairs of inputs for Alice which Bob cannot distinguish based on his input.
It is not difficult to see that every graph is the characteristic graph of a (non-unique) source.
Solving one instance of the zero-error source coding problem for~$\mathcal M$ is equivalent to finding a proper coloring of~$G$.
Indeed, Bob's input~$u$ reduces the list of Alice's possible inputs to the set $\{x\in\mathsf X\st P(x,u)>0\}$ and this set forms a clique in~$G$.
So Bob can learn Alice's input if she sends him its color.
Conversely, a length $1$ block-code for~$\mathcal M$ defines a proper coloring of~$G$.
To deal with length $m$ input sequences we take the graph~$G^{\boxtimes m}$ (the strong product of~$m$ copies of $G$), whose edges are the pairs of input sequences for Alice which Bob cannot distinguish.
The {\em Witsenhausen rate} 
\beq\label{eq:witsdef}
R(G) = \lim_{m\to\infty}\frac{1}{m}\log\chi(G^{\boxtimes m})
\eeq
is the minimum asymptotic cost rate of a zero-error code for a source (all logarithms in this paper are in base~2). 
As is well known, the chromatic number is sub-multiplicative, \ie,
$\chi(G^{\boxtimes (m+m')}) \le \chi(G^{\boxtimes m})\chi(G^{\boxtimes m'})$.
Therefore, by Fekete's lemma\footnote{If a sequence $(a_m)_{m\in\N}$  is sub-additive (\ie, $a_{m+m'}\le a_m+a_{m'}$ for all $m,m'\in\N$), Fekete's lemma claims that  the sequence 
$\left( a_m/m \right)_{m\in\N}$ has a limit, which is equal to its infimum:  $\lim_{m\rightarrow \infty} a_m/m = \inf_{m\in\N} a_m/m$.} 
  the above limit exists  and  is equal to the infimum: $R(G)=\inf_{m}\log\chi(G^{\boxtimes m})/m$. 

A (noisy) {\em discrete channel} $\mathcal N = (\mathsf S,\mathsf V, Q)$ consists of a finite input set~$\mathsf S$, a (possibly infinite) output set~$\mathsf V$ and a probability distribution $Q(\cdot | s)$ over~$\mathsf V$ for each $s\in\mathsf S$.
Throughout the paper we consider only memoryless channels, where the probability distribution of the output depends only on the current channel input.
If Alice sends an input $s\in\mathsf S$ through the channel, then Bob receives the output $v\in\mathsf V$ with probability~$Q(v|s)$.
Their goal is to transmit a binary string~$\mathbf y$ of, say, $m$ bits from Alice to Bob while using the channel as little as possible.
If the parties use a block code of length~$n$, then Alice has an encoding function~${\mathsf C:\bset{m}\to\mathsf S^n}$ and sends $\mathsf C(\mathbf y)$ through the channel by using it~$n$ times in sequence.
Bob then receives an output sequence $\mathbf v = (v_1,\dots,v_n)$ on his side of the channel and applies a decoding function~${\mathsf D:\mathsf V^n\to\bset{m}}$.
The coding scheme $(\mathsf C,\mathsf D)$ works if $\mathsf D(\mathbf v) = \mathbf y$.
The {\em communication rate} of the scheme is~$m/n$, the number of bits transmitted per channel use.

Shannon~\cite{Shannon:1956} showed that the zero-error channel coding problem can also be studied in graph-theoretic terms.
Associated to a channel~$\mathcal N = (\mathsf S, \mathsf V, Q)$ is its {\em confusability graph} $H = (\mathsf S, F)$ where $\{s,t\}\in F$ if there exists a $v\in\mathsf V$ such that both $Q(v|s)>0$ and $Q(v|t)>0$.
The edge set identifies pairs which can lead to identical channel outputs on Bob's side.
Sets of non-confusable inputs thus correspond to independent sets in~$H$. 
Codes of block-length~$n$ then allow the zero-error transmission of~$\alpha(H^{\boxtimes n})$ distinct messages.
The {\em Shannon capacity}
\beq\label{eq:capdef}
c(H) = \lim_{n\to\infty}\frac{1}{n}\log\alpha(H^{\boxtimes n})
\eeq
is the maximum communication rate of a zero-error coding scheme.
As for the Witsenhausen rate, we can replace the above limit with the supremum:
$c(H)=\sup_n \log\alpha(H^{\boxtimes n})/n$.

In the {\em source-channel coding problem} the parties receive inputs from a source~$\mathcal M = (\mathsf X,\mathsf U,P)$ and get access to a channel~$\mathcal N = (\mathsf S,\mathsf V,Q)$.
Their goal is to solve the source coding problem, but now using the (noisy) channel~$\mathcal N$ instead of a (noiseless) binary channel.
An {\em $(m,n)$-coding scheme} for this problem consists of an encoding function~$\mathsf C:\mathsf X^m\to\mathsf S^n$ and a decoding function~$\mathsf D:\mathsf U^m\times\mathsf V^n\to\mathsf X^m$ (see Figure~\ref{fig:qwits-classic}).
The {\em cost rate} is~$n/m$.

\begin{figure}[t]
\begin{center}
 \includegraphics[width=8.87cm]{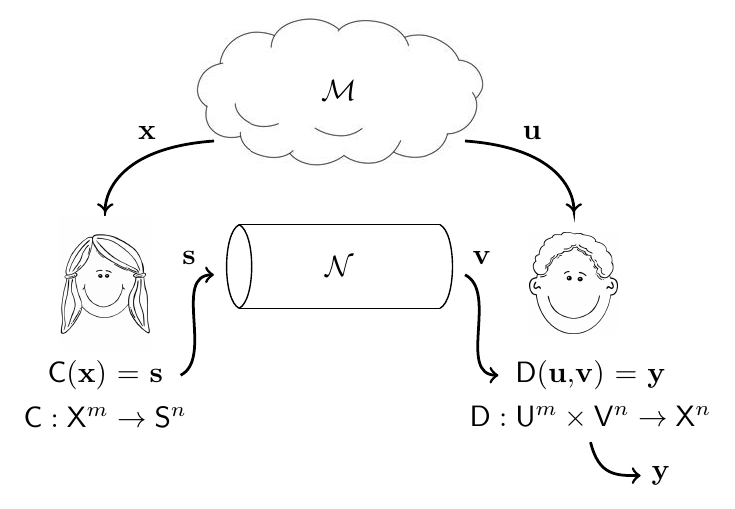}
 \caption{The figure illustrates a classical source-coding instance where Alice and Bob use an~$(m,n)$-coding scheme~$(\mathsf C,\mathsf D)$. The parties receive length $m$ input strings~$\mathbf x$ and $\mathbf u$, respectively, from a source~$\mathcal M = (\mathsf X, \mathsf U, P)$ and have a one-way channel~$\mathcal N = (\mathsf S, \mathsf V, Q)$. Using~$\mathsf C$, Alice encodes her input into a string~$\mathbf s\in \mathsf S^n$ which she sends through the channel. After receiving a channel output~$\mathbf v$, Bob applies~$\mathsf D$ to the pair~$(\mathbf u, \mathbf v)$ to get a string~$\mathbf y$. The scheme works if~$\mathbf y = \mathbf x$.}
\label{fig:qwits-classic}
\end{center}
\end{figure}

Nayak, Tuncel and Rose~\cite{Nayak:2006} showed that if~$\mathcal M$ has characteristic graph~$G$ and~$\mathcal N$ has confusability graph~$H$, then a zero-error $(m,n)$-coding scheme is equivalent to a homomorphism
 from~$G^{\boxtimes m}$ to~$\overline{H^{\boxtimes n}}$.
Then,
 the parameter
\begin{equation}\label{eqeta}
\eta(G,H):= \lim_{m\to\infty}\frac{1}{m}\min\Big\{n\in\N\st  G^{\boxtimes m} \longrightarrow  \overline{H^{\boxtimes n}}\Big\}
\end{equation}
gives the minimum asymptotic cost rate of a zero-error code. 
We will assume throughout that both $G$ and $\overline H$ contain at least one edge. (Indeed, if $G$ has no edge then $\eta(G,H)=0$ for any $H$ and, if $G$ has at least one edge, then $\eta(G,H)$ is well defined only if $\overline H$ has at least one edge.)
To see that the limit exists,  observe that the parameter
$$
\scrate_m(G,H):= \min\Big\{n\in\N\st  G^{\boxtimes m} \longrightarrow \overline{H^{\boxtimes n}}\Big\}
$$
is sub-additive and apply  Fekete's lemma, which shows that  $\eta(G,H)=\lim_{m\rightarrow \infty} \scrate_m(G,H)/m $ is also equal to the infimum $\inf_m \scrate_m(G,H)/m$.

If the channel $\mathcal N$ is replaced by a binary channel we regain the source coding problem.
Conversely, if Alice receives binary inputs from the source and Bob's source inputs give him no information about Alice's at all, then we regain the channel coding problem.
More formally, we can reformulate $R(G)$ and $c(H)$ in the following way.

\begin{lemma}\label{lemRc}
Let $G$ and $H$ be graphs such that both $G$ and $\overline H$  have at least one edge.
Then, 
$$R(G) = \scrate(G,\overline K_2) \ \text{ and } \ 
  1/c(H) = \scrate(K_2,H).$$
\end{lemma}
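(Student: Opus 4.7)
\medskip

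\noindent\textbf{Proof plan.} The lemma is a compatibility check: both identities follow once one recognizes what the graphs $K_2$ and $\overline{K_2}$ mean in the source-channel framework and unpacks the definition of $\scrate(G,H)$ accordingly. My plan is to treat the two identities separately; each reduces, after unfolding the strong product and complement, to an elementary comparison between sub-multiplicative and super-multiplicative sequences.

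For $R(G)=\scrate(G,\overline{K_2})$, I would first observe that $\overline{K_2}$ is the edgeless graph on two vertices, so by the definition of the strong product $\overline{K_2}^{\boxtimes n}$ is the edgeless graph on $2^n$ vertices, and hence $\overline{\overline{K_2}^{\boxtimes n}} = K_{2^n}$. A homomorphism into $K_{2^n}$ is exactly a proper coloring with at most $2^n$ colors, so
\[
\min\bigl\{n\in\N \st G^{\boxtimes m}\longrightarrow \overline{\overline{K_2}^{\boxtimes n}}\bigr\} \;=\; \bigl\lceil\log\chi(G^{\boxtimes m})\bigr\rceil.
\]
Dividing by $m$ and sending $m\to\infty$ absorbs the ceiling and returns exactly $R(G)$ as defined in \eqref{eq:witsdef}.

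For $1/c(H)=\scrate(K_2,H)$, I would use that the strong product of complete graphs is complete, so $K_2^{\boxtimes m}=K_{2^m}$. A homomorphism $K_{2^m}\longrightarrow\overline{H^{\boxtimes n}}$ then amounts to a clique of size $2^m$ in $\overline{H^{\boxtimes n}}$, equivalently an independent set of size $2^m$ in $H^{\boxtimes n}$. This identifies the inner minimum as $n^*(m):=\min\{n\in\N\st \alpha(H^{\boxtimes n})\geq 2^m\}$. Checking subadditivity of $n^*$ via super-multiplicativity of $\alpha(H^{\boxtimes \cdot})$ justifies applying Fekete's lemma, so $\scrate(K_2,H)=\inf_m n^*(m)/m$.

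The only substantive step remaining is to match $\inf_m n^*(m)/m$ with $1/\sup_n \log\alpha(H^{\boxtimes n})/n$. One direction is immediate: $\alpha(H^{\boxtimes n^*(m)})\geq 2^m$ gives $c(H)\geq m/n^*(m)$, hence $n^*(m)/m\geq 1/c(H)$ for every $m$. For the reverse, given $\epsilon>0$ I would pick $n$ with $\log\alpha(H^{\boxtimes n})/n > c(H)-\epsilon$ and set $m=\lfloor(c(H)-\epsilon)n\rfloor$, which forces $n^*(m)\leq n$ and $n/m\to 1/(c(H)-\epsilon)$; letting $\epsilon\to 0$ closes the gap. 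I do not anticipate a genuine obstacle in this argument; the hardest part is merely the bookkeeping of ceilings and floors in the final limiting step, and the assumption that both $G$ and $\overline{H}$ contain an edge ensures that $\chi(G^{\boxtimes m})\geq 2$ and $\alpha(H^{\boxtimes n})\to\infty$, so that the relevant quantities are well defined throughout.
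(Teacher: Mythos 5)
Your proposal is correct and follows essentially the same route as the paper: both identities are reduced to $\min\{n\st G^{\boxtimes m}\to K_{2^n}\}=\lceil\log\chi(G^{\boxtimes m})\rceil$ and $\min\{n\st\alpha(H^{\boxtimes n})\geq 2^m\}$ respectively, and the rest is the same limit bookkeeping. The one point to make explicit in your reverse inequality for $1/c(H)$ is that the $n$ with $\log\alpha(H^{\boxtimes n})/n>c(H)-\epsilon$ must range over a sequence tending to infinity for ``$n/m\to 1/(c(H)-\epsilon)$'' to be meaningful; this is available because Fekete's lemma makes the supremum defining $c(H)$ an actual limit (or by passing to powers of a single good $n$ via super-multiplicativity of $\alpha$), whereas the paper sidesteps it by parametrizing in $m$ and sandwiching $n(m)/m$ between ratios evaluated at $n(m)$ and $n(m)-1$.
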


\begin{proof}
For the proof of the identity $R(G)=\scrate(G,\overline K_2)$ we use the following simple fact: for a graph $G'$ and ${t\in \N}$,   there exists a homomorphism from $G'$ to $K_{t}$ if and only if~${\chi(G') \leq t}$, which implies
\beqn
\log\chi(G') \leq \min\{n \st G' \longrightarrow K_{2^n}\} < \log\chi(G') +1.
\eeqn
Combining these inequalities applied to $G'=G^{\boxtimes m}$ with the identity  $\overline{ \overline{K}_2^{\boxtimes n}} = K_{2^n}$, we obtain  
\beqrn
\scrate(G,\overline K_2) 
&=& \lim_{m \to \infty} \frac{1}{m}\min\{n \st G^{\boxtimes m} \longrightarrow \overline{ \overline{K}_2^{\boxtimes n}} =K_{2^n}\} 
=
 \lim_{m \to \infty} \frac{1}{m}\log \chi(G^{\boxtimes m})
=R(G).
\eeqrn

The proof of the identity $1/c(H) = \scrate(K_2,H)$ uses  the fact that, for a graph~$H'$ and $t\in\N$, there exists a homomorphism from~$K_t$ to $\overline{H'}$ if and only if $\alpha(H') \geq t$.
Since~$K_2^{\boxtimes m} = K_{2^m}$, we get
\beqrn
\scrate_m(K_2,H) &=& \min\big\{n\st K_2^{\boxtimes m}=K_{2^m} \longrightarrow \overline{H^{\boxtimes n}}\big\}\\
&=& 
\min\big\{n \st \alpha(H^{\boxtimes n}) \geq 2^m\big\}
=
\min\big\{n \st  \log \alpha(H^{\boxtimes n})\geq m\big\}.
\eeqrn
Setting $n(m):=\scrate_m(K_2,H)$, this implies
$$
\log \alpha(H^{\boxtimes(n(m)-1)}) < m\leq \log \alpha(H^{\boxtimes n(m)})$$
and thus 
\begin{equation}\label{relc}
\frac{n(m)}{\log \alpha(H^{\boxtimes n(m)})} \leq \frac{n(m)}{m} < \frac{n(m)}{\log \alpha(H^{\boxtimes(n(m)-1)})}.
\end{equation}
As $c(H)=\sup_n \log \alpha(H^{\boxtimes n})/n$, using the left most inequality in (\ref{relc}) we deduce that for all $m$
$$\frac{1}{c(H)}\leq \frac{n(m)}{\log \alpha(H^{\boxtimes n(m)})} \leq \frac{n(m)}{m}.$$ 
Taking the limit of the above, we obtain the inequality 
${1/c(H)\leq \lim_{m\rightarrow \infty} n(m)/m= \scrate_m(K_2,H).}$
Next, as $\scrate_m(K_2,H)= \inf_m n(m)/m$, using the right most inequality in (\ref{relc}) we deduce that
\beqrn
\scrate_m(K_2,H) \leq \frac{n(m)}{m} &<&  \frac{n(m)}{\log \alpha(H^{\boxtimes(n(m)-1)})}
= \frac{n(m)-1}{\log \alpha(H^{\boxtimes(n(m)-1)})} \frac{n(m)}{n(m)-1}.
\eeqrn
It is clear that $\lim_{m\rightarrow \infty} n(m)=\infty$. Therefore we can conclude that the limit of the right most term in the above inequalities is equal to $1/c(H)$.
This  shows the reverse inequality
~$\scrate(K_2,H) \leq 1/c(H)$ and thus the equality ${\scrate(K_2,H) = 1/c(H)}$.
\end{proof}

{\bf Separation theorems.} Source and channel coding are often treated separately (as such, they motivate the two main branches of Shannon theory).
The main reason for this are so-called {\em separation theorems}, which roughly say that source and channel code design can be separated without asymptotic loss in the code rate in the limit of large block lengths.
Such results typically hold in a setting of asymptotically vanishing error probability~\cite{Vembu:1995}.
But when errors cannot be tolerated, Nayak, Tuncel and Rose~\cite{Nayak:2006} showed that separated codes can be highly suboptimal.
In terms of the above graph parameters, this says that in general $\scrate(G,H) \leq R(G)/c(H)$ holds (this is implied in~\cite{Nayak:2006}; see Proposition~\ref{prop:eta} for an explicit proof), but that for some families of graphs there can be a large separation: ~$\scrate(G,H) \ll R(G)/c(H)$.

\subsection{Entanglement-assisted source-channel coding}\label{subsec:defqscrate}

Here we describe the information-theoretic problems introduced above, but now in the setting where the parties are allowed to use entanglement.
We refer readers who are unfamiliar with quantum information theory to Section~\ref{sec:model}. 
We will make repeated use of the following simple lemma (already implicitly used in a similar context in~\cite{Cubitt:2010}), which we will refer to here as the {\em Orthogonality Lemma}.

\begin{lemma}[Orthogonality Lemma]\label{lem:helho}
Let~$\rho_1,\dots,\rho_\ell\in\C^{d\times d}$ be a collection of Hermitian positive semidefinite matrices. Then the following are equivalent:
\begin{enumerate}
	\item We have $\rho_i\rho_j = 0$ for every~$i\ne j \in [\ell]$.
	\item There exists a measurement consisting  of positive semidefinite matrices~$P^1,\dots,P^\ell$, $P^\perp\in\C^{d\times d}$ such that
$\Tr(P^i\rho_j) = \delta_{ij}\Tr(\rho_j)$ and  $\Tr(P^\perp \rho_j) = 0$ for every~$i,j\in[\ell]$.
\end{enumerate} 
\end{lemma}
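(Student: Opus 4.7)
The plan is to prove both implications using the standard fact that for positive semidefinite matrices $A, B \in \C^{d\times d}$, one has $\Tr(AB)=0$ if and only if $AB=0$. This is because $\Tr(AB) = \Tr(A^{1/2} B A^{1/2})$ is the trace of a PSD matrix, and PSD matrices of trace zero are zero; then $A^{1/2} B A^{1/2}=0$ gives $A^{1/2} B^{1/2}=0$, hence $AB=0$. With this tool in hand, both directions reduce to straightforward bookkeeping.

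For $(2)\Rightarrow(1)$, I would first apply the above fact to $\Tr(P^i \rho_j)=0$ for $i\neq j$ to conclude $P^i \rho_j = 0$. Next, the measurement condition $\sum_k P^k + P^\perp = I$ together with $P^k, P^\perp \succeq 0$ implies that $Q^i := I - P^i = \sum_{k\neq i} P^k + P^\perp$ is PSD. Computing $\Tr(Q^i \rho_i) = \Tr(\rho_i) - \Tr(P^i \rho_i) = 0$ and applying the trace fact again yields $Q^i \rho_i = 0$, i.e.\ $\rho_i = P^i \rho_i$. Taking the Hermitian conjugate (using $\rho_i = \rho_i^*$ and $P^i=(P^i)^*$) gives $\rho_i = \rho_i P^i$. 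Hence for any $j \neq i$,
\[
\rho_i \rho_j = \rho_i P^i \rho_j = 0,
\]
since $P^i \rho_j = 0$.

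For $(1)\Rightarrow(2)$, I would construct the measurement explicitly. The hypothesis $\rho_i \rho_j = 0$ says that the image of $\rho_j$ lies in the kernel of $\rho_i$. Since $\rho_i \succeq 0$ is Hermitian, its kernel is the orthogonal complement of its image, so the subspaces $V_i := \im(\rho_i)$ are pairwise orthogonal. Let $P^i$ be the orthogonal projection onto $V_i$ and set $P^\perp := I - \sum_i P^i$. Pairwise orthogonality of the $V_i$ ensures $\sum_i P^i \preceq I$, so $P^\perp \succeq 0$ and $\{P^1,\dots,P^\ell,P^\perp\}$ is a valid measurement. Because $P^i$ acts as the identity on $V_i$ (so $P^i \rho_i = \rho_i$) and annihilates each $V_j$ for $j\neq i$ (so $P^i \rho_j = 0$), the identities $\Tr(P^i \rho_j) = \delta_{ij}\Tr(\rho_j)$ follow, and $\Tr(P^\perp \rho_j) = \Tr(\rho_j) - \sum_i \Tr(P^i \rho_j) = 0$.

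There is no substantial obstacle here; the only subtlety is remembering to invoke the measurement condition to obtain PSD-ness of $I - P^i$ (which is what unlocks the one-sided identity $\rho_i = P^i \rho_i$), and then using Hermiticity of $\rho_i$ to upgrade a one-sided product to the symmetric conclusion $\rho_i \rho_j = 0$.
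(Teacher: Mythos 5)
Your proof is correct. The direction $(1)\Rightarrow(2)$ is essentially identical to the paper's: both observe that $\rho_i\rho_j=0$ together with Hermiticity forces the column spaces $V_i=\im(\rho_i)$ to be pairwise orthogonal, and both take $P^i$ to be the orthogonal projection onto $V_i$ with $P^\perp = I-\sum_i P^i$. For $(2)\Rightarrow(1)$ you take a genuinely different, and arguably cleaner, route. The paper expands $\rho_i$ in its spectral decomposition, uses $P^i\preceq I$ to deduce from $\Tr(P^i\rho_i)=\Tr(\rho_i)$ that every eigenvector of $\rho_i$ with positive eigenvalue is a $1$-eigenvector of $P^i$ (so $V_i\subseteq\im(P^i)$), and separately uses $\Tr(P^i\rho_j)=0$ to get $\im(P^i)\perp V_j$, concluding via orthogonality of subspaces. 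You instead invoke once and for all the fact that $\Tr(AB)=0$ implies $AB=0$ for positive semidefinite $A,B$, which converts all the trace hypotheses directly into operator identities: $P^i\rho_j=0$ for $i\ne j$ and, using positivity of $I-P^i$ (which, like the paper's appeal to $P^i\preceq I$, is where the measurement condition enters), $\rho_i=P^i\rho_i=\rho_i P^i$; the conclusion $\rho_i\rho_j=\rho_i P^i\rho_j=0$ then drops out in one line. What your approach buys is a shorter argument that avoids spectral decompositions entirely and sidesteps a minor imprecision in the paper's version (the claim that $a[\ell]^*P^i a[\ell]=1$ for \emph{each} $\ell$ only holds for the eigenvectors with strictly positive eigenvalue, though this does not affect the paper's conclusion); what the paper's version buys is a more explicit geometric picture of how the supports of the $\rho_i$ sit inside the supports of the $P^i$.
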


\begin{proof}
$(1) \Rightarrow (2)$:
Let~$V_i\subseteq\C^d$ be the column space (or image) of the matrix~$\rho_i$. 
Observe that since~$\rho_i$ is Hermitian, its row space is~$V_i^*$.
Condition~$(1)$ implies that the spaces~$V_1,\dots,V_\ell$ are pairwise orthogonal.
To see this, observe that for any vectors~$u\in V_i$ and $v\in V_j$ there exist~$x,y\in \C^d$ such that~$u = \rho_i x$ and~$v=\rho_j y$. 
By Hermiticity, we have~$u^* v = x^* \rho_i\rho_j y = 0$.
Let~$P^i$ be the orthogonal projection onto~$V_i$ and let~$P^\perp = I - \sum_{i = 1}^\ell P^i$.
It is now trivial to verify that these projections satisfy the desired properties.

$(2)\Rightarrow(1)$:
Let~$V_i$ be the column space of~$\rho_i$ and let~$W_i$ be the column space of~$P^i$.
By Hermiticity, these spaces are the complex conjugates of the respective row spaces.
We claim that~$V_i\subseteq W_i$.
Indeed, expanding $\rho_i$ in its spectral decomposition, writing~$\lambda[\ell]\in\R_+$ and~$a[\ell]\in\C^d$, $\ell\in[d]$, for its eigenvalues and eigenvectors, respectively,
\beqr\label{eq:ortheq}
\sum_{\ell=1}^d\lambda[\ell] 
=
\Tr(\rho_i) = \Tr(P^i\rho_i) = 
\sum_{\ell=1}^d\lambda[\ell]\, \Tr\big(P^i a[\ell]a[\ell]^*) 
=
\sum_{\ell=1}^d\lambda[\ell]\, a[\ell]^*P^i a[\ell].
\eeqr
Since~$P^i$ has eigenvalues in~$[0,1]$, we have that~\eqref{eq:ortheq} holds if and only if~$a[\ell]^*P^i a[\ell] = 1$ for each~$\ell\in[d]$.
This implies that each eigenvector~$a[\ell]$ of~$\rho_i$ is an eigenvector of~$P^i$ with eigenvalue~$1$.
The condition~$\Tr(P^i\rho_j) = 0$ if~$i\ne j$ implies that~$W_i$ is orthogonal to~$V_j$.
Hence, for~$i\ne j$, we have ${V_i\subseteq W_i}$ and~$W_i$ is orthogonal to~$V_j$  and so the row space~$V_i$ of~$\rho_i$ is orthogonal to the column space~$V_j$ of~$\rho_j$, giving~$\rho_i\rho_j = 0$.
\end{proof}

{\bf Entanglement-assisted source coding.}
We continue by describing the special case of entanglement-assisted source coding.
The setup is as before, except now Alice and Bob have quantum registers~$\mathcal A$ and~$\mathcal B$, respectively, that are initialized to be in some entangled state.
If the source gives the parties inputs~$x\in\mathsf X$ and $u\in\mathsf U$, respectively, then their most general course of action is as follows:

\begin{enumerate}
\item After receiving her input~$x$ Alice performs a measurement on her register~$\mathcal A$ and communicates the measurement outcome to Bob;
\item After receiving both his input~$u$ and Alice's measurement outcome, Bob performs a  measurement on his register~$\mathcal B$ and obtains a measurement outcome~$y\in\mathsf X$.
\end{enumerate}

If this protocol is successful, that is, Bob gets outcome~${y=x}$ for each possible input pair~$(x,u)$, then it yields a system of matrices which we use below to derive the definition of the {\em entangled chromatic number} (Definition~\ref{def:qchrom}).
We get these matrices as follows.

Let~$\sigma$ denote the state in which the pair of registers~$(\mathcal A,\mathcal B)$ is initialized before the protocol starts.
The measurement Alice performs in step~(1) is given by a collection of positive semidefinite matrices~$A_x^1,\dots,A_x^t$ that add up to the identity.
If Alice gets outcome~$i\in[t]$, then after step~(1) Bob's register is left in a state proportional to~$\rho^i_x = \Tr_\mathcal A((A_x^i\otimes I)\sigma)$.
Note that for each~$x\in\mathsf X$ the matrices~$(\rho_x^i)_{i\in[t]}$ sum to Bob's reduced density matrix~$\rho:= \Tr_\mathcal A(\sigma)$.
Observe that Bob's input~$u$ already allows him to reduce the list of Alice's possible inputs  to a clique~$\mathcal C$ in the characteristic graph~$G$ that contains Alice's input~$x$.
After Alice does her measurement and sends the outcome~$i$, he thus knows that his register is in one of the states proportional to~$\rho_y^i$ for~$y\in \mathcal C$.
The measurement he does in step~(2) allows him to learn exactly which of these states his register is in. 
More explicitly, Bob has a measurement of positive semidefinite matrices~$(B^x)_{x\in\mathcal C}$ and~$B^\perp$ such that $\Tr(B^x\rho^i_y) = \delta_{xy}\Tr(\rho^i_y)$ and $\Tr(B^\perp\rho_x^i) = 0$ for every $x,y\in\mathcal C$.
Hence, by the Orthogonality Lemma (Lemma~\ref{lem:helho}), the states~$\rho_i^x/\Tr(\rho_x^i)$ must be pairwise orthogonal and we have
\beqr
\rho_x^i\rho_y^i =& 0 & \quad \forall i  \in [t] \text{ and every } 
 \{x,y\}\in E(G) \label{eq-intro:orth}\\
\sum_{i\in[t]}\rho_x^i &= & \rho  \quad \forall x\in V(G).\label{eq-intro:sumrho}
\eeqr
In the opposite direction we now show that such a system of matrices can be used to construct a protocol for the source-coding problem.
Cubitt et al.~\cite{Cubitt:2010} implicitly used a similar fact in the context of channel coding, albeit without proof.

\begin{proposition}\label{prop:system-to-protocol}
Suppose that there exists a collection of positive semidefinite matrices~$\rho$ and~$\{\rho_x^i\st x\in V(G),\, i\in[t]\}$ in~$\C^{d\times d}$ such that~$\Tr(\rho) = 1$ and~\eqref{eq-intro:orth} and~\eqref{eq-intro:sumrho} hold.
Then, there exists a $t$-message protocol for the source-coding problem.
\end{proposition}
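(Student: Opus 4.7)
\medskip

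\textbf{Proof plan.} The strategy is to reverse-engineer the protocol from the data $(\rho,\{\rho^i_x\})$: (i) use $\rho$ to define a shared entangled pure state whose reduction on Bob's side is $\rho$; (ii) build Alice's $x$-dependent measurement so that the post-measurement ensemble on Bob's side is exactly $(\rho^i_x)_{i\in[t]}$; (iii) let Bob decode using the Orthogonality Lemma, leveraging that on his input $u$ the possible values of $x$ form a clique in $G$.

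First I would purify $\rho$. Write a spectral decomposition $\rho=\sum_k \lambda_k \ketbra{e_k}{e_k}$ with $\lambda_k>0$ on the support, and set
\[
\ket{\psi}_{\mathcal A\mathcal B}=\sum_{k}\sqrt{\lambda_k}\,\ket{e_k}_{\mathcal A}\otimes\ket{e_k}_{\mathcal B},
\]
so that $\Tr_{\mathcal A}(\ketbra{\psi}{\psi})=\rho$. This is the shared entangled state. The key computational identity for this purification is
\[
\Tr_{\mathcal A}\!\big((A\otimes I)\ketbra{\psi}{\psi}\big)=\rho^{1/2} A^{T}\rho^{1/2},
\]
where the transpose is taken in the basis $\{\ket{e_k}\}$; a quick entrywise check verifies this.

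Next I would define Alice's measurement. On the support of $\rho$ let $\rho^{-1/2}$ be the Moore--Penrose-type inverse of $\rho^{1/2}$, and set
\[
A_x^i \;=\; \bigl(\rho^{-1/2}\,\rho_x^i\,\rho^{-1/2}\bigr)^{T},\qquad x\in V(G),\ i\in[t].
\]
Each $A_x^i$ is positive semidefinite because $\rho^{-1/2}\rho_x^i\rho^{-1/2}$ is (conjugation of a PSD matrix), and transposition preserves PSD-ness. Using $\sum_i \rho_x^i=\rho$ from~\eqref{eq-intro:sumrho}, we get $\sum_i A_x^i = (\rho^{-1/2}\rho\rho^{-1/2})^T = P$, the projection onto the support of $\rho$; I would complete the measurement by appending the extra outcome $A_x^{\perp}=I-P$, which will never occur on $\rho$ and so is harmless. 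The identity above yields $\Tr_{\mathcal A}((A_x^i\otimes I)\ketbra{\psi}{\psi})=\rho_x^i$, so after Alice measures on input $x$ and obtains outcome $i\in[t]$ (which she sends to Bob), Bob's register collapses to the state $\rho_x^i/\Tr(\rho_x^i)$.

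Finally I would specify Bob's decoding. Given his source input $u$, let $\mathcal C_u=\{x\in\mathsf X:P(x,u)>0\}$; by construction of the characteristic graph, $\mathcal C_u$ is a clique in $G$. Fixing $i\in[t]$ and ranging $x,y$ over $\mathcal C_u$, condition~\eqref{eq-intro:orth} gives $\rho_x^i\rho_y^i=0$ for all distinct $x,y\in\mathcal C_u$. Applying the Orthogonality Lemma (Lemma~\ref{lem:helho}) to the family $(\rho_x^i)_{x\in\mathcal C_u}$ produces PSD matrices $\{B_{u,i}^x\}_{x\in\mathcal C_u}\cup\{B_{u,i}^{\perp}\}$ summing to $I$ with $\Tr(B_{u,i}^x\rho_y^i)=\delta_{xy}\Tr(\rho_y^i)$. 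Bob, knowing $(u,i)$, measures his register with this POVM and outputs the label $x$ of the outcome he obtains. The only nontrivial point is that on the actual post-communication state $\rho_x^i/\Tr(\rho_x^i)$ the lemma guarantees outcome $x$ with probability one, so Bob always recovers Alice's input exactly. Since the message $i$ takes at most $t$ values, this is a $t$-message protocol, completing the proof.
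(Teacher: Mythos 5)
Your proof is correct, but it takes a more self-contained route than the paper's. The paper proves this proposition by black-boxing the Hughston--Jozsa--Wootters theorem: it normalizes $p_x^i=\Tr(\rho_x^i)$, $\tilde\rho_x^i=\rho_x^i/p_x^i$, notes $\sum_i p_x^i\tilde\rho_x^i=\rho$ for every $x$, and then invokes HJW --- crucially using the clause that the state $\sigma$ depends only on the average $\rho$, so a single shared state serves all of Alice's inputs. You instead prove the needed special case of HJW inline: you exhibit the standard purification $\ket{\psi}=\sum_k\sqrt{\lambda_k}\ket{e_k}\otimes\ket{e_k}$ and the "square-root" measurement $A_x^i=(\rho^{-1/2}\rho_x^i\rho^{-1/2})^T$, verified via the identity $\Tr_{\mathcal A}((A\otimes I)\ketbra{\psi}{\psi})=\rho^{1/2}A^T\rho^{1/2}$. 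This buys explicitness (concrete formulas for the state and POVMs, and the single-state-for-all-$x$ property is manifest since the purification depends only on $\rho$) at the cost of a little linear algebra; the paper's version is shorter but leans on an external theorem. One line you should add for completeness: $\rho^{-1/2}\rho_x^i\rho^{-1/2}$ recovers $\rho_x^i$ after conjugating back by $\rho^{1/2}$ only because $\mathrm{supp}(\rho_x^i)\subseteq\mathrm{supp}(\rho)$, which follows from $\rho_x^i\preceq\sum_j\rho_x^j=\rho$; without that remark the computation $\rho^{1/2}(A_x^i)^T\rho^{1/2}=P\rho_x^iP=\rho_x^i$ is not justified. The decoding step via the Orthogonality Lemma on the clique $\mathcal C_u$ is identical in both arguments.
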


The proposition follows almost directly from the following well-known theorem (see e.g., \cite[HJW Theorem, pp.~74]{Spekkens:2002}, where it is attributed to Hughston, Jozsa and Wootters).

\begin{theorem}[Hughston--Jozsa--Wootters Theorem]\label{lemma:hjw}
Let $d,t$ be positive integers, $p_1,\dots,p_t\geq 0$ satisfying $p_1 + \cdots + p_t = 1$, and let $\rho_1,\dots,\rho_t\in\C^{d\times d}$ be positive semidefinite matrices with trace~1.
Then, there exists a state $\sigma$ for a pair of registers $(\mathcal A, \mathcal B)$, and a measurement on $\mathcal A$ consisting of a collection of positive semidefinite matrices $A^1,\dots,A^t$ that add up to the identity, such that for each $i\in[t]$, we have ${\Tr_\mathcal A((A^i\otimes I)\sigma) = p_i\rho_i}$.
Moreover, $\sigma$ depends only on $p_1\rho_1 + \cdots + p_t\rho_t$. 
\end{theorem}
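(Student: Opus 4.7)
The plan is to use the Hughston--Jozsa--Wootters Theorem~\ref{lemma:hjw} to produce the shared entangled state together with Alice's measurements, and then use the Orthogonality Lemma~\ref{lem:helho} to produce Bob's measurements.

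First I would set $p_x^i = \Tr(\rho_x^i)$ and, whenever $p_x^i>0$, normalize $\hat\rho_x^i = \rho_x^i/p_x^i$. Hypothesis~\eqref{eq-intro:sumrho} then reads $\sum_{i\in[t]} p_x^i \hat\rho_x^i = \rho$ for every $x\in V(G)$, so the convex decomposition on the left-hand side is a decomposition of the \emph{same} density matrix $\rho$ for every $x$ (trace one after possibly absorbing the scaling of $\rho$ into the normalization; the assumption $\Tr(\rho)=1$ makes this clean).

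The key step is to apply Theorem~\ref{lemma:hjw} for each $x\in V(G)$ separately to the decomposition $\rho = \sum_i p_x^i \hat\rho_x^i$. By the ``moreover'' clause, the resulting bipartite state $\sigma$ on $(\mathcal A,\mathcal B)$ depends only on the mixture $\rho$, so the \emph{same} $\sigma$ works simultaneously for every $x$; this is what allows Alice and Bob to share entanglement \emph{before} knowing the inputs. For each $x$, HJW additionally supplies a measurement $\{A_x^i\}_{i\in[t]}$ on $\mathcal A$ with $\Tr_{\mathcal A}((A_x^i\otimes I)\sigma) = p_x^i\hat\rho_x^i = \rho_x^i$. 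Alice's protocol is then: on input $x$, measure $\mathcal A$ with $\{A_x^i\}$, obtain $i\in[t]$, and transmit $i$ to Bob. This uses a single classical message from an alphabet of size $t$.

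Next I would construct Bob's decoding. For each $u\in\mathsf U$, the set $\mathcal C_u = \{y\in\mathsf X \st P(y,u)>0\}$ is a clique in $G$, so~\eqref{eq-intro:orth} gives $\rho_x^i\rho_y^i = 0$ for all distinct $x,y\in \mathcal C_u$ and every $i\in[t]$. Applying the Orthogonality Lemma~\ref{lem:helho} to the family $\{\rho_x^i : x\in\mathcal C_u\}$ yields, for each pair $(u,i)$, a POVM $\{B_u^{y,i}\}_{y\in\mathcal C_u}\cup\{B_u^{\perp,i}\}$ satisfying $\Tr(B_u^{y,i}\rho_x^i) = \delta_{xy}\Tr(\rho_x^i)$. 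On receiving $u$ from the source and $i$ from Alice, Bob measures $\mathcal B$ with this POVM and outputs the label $y$ of the outcome.

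Finally I would verify zero error: conditional on Alice's input being $x\in\mathcal C_u$ and her having transmitted $i$ (an event of probability $p_x^i$), Bob's register is in state $\hat\rho_x^i$, and the probability that he outputs $y\in\mathcal C_u$ is $\Tr(B_u^{y,i}\hat\rho_x^i) = \delta_{xy}$. Summing over $i$ shows Bob outputs $x$ with probability one, as required. The main conceptual obstacle is precisely the ``moreover'' clause of HJW: one needs a single global state $\sigma$ compatible with the decompositions $\{\rho_x^i\}_i$ for \emph{all} $x$ simultaneously, and this is exactly what that clause supplies via the common marginal $\rho$ guaranteed by~\eqref{eq-intro:sumrho}; everything else is bookkeeping.
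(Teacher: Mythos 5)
Your proposal does not prove the statement it is supposed to prove. The statement in question is Theorem~\ref{lemma:hjw} (the Hughston--Jozsa--Wootters Theorem) itself; what you have written is a proof of Proposition~\ref{prop:system-to-protocol}, which \emph{uses} Theorem~\ref{lemma:hjw} as a black box. As an argument for the theorem this is circular: the existence of a single state $\sigma$ together with measurements $A^1,\dots,A^t$ satisfying $\Tr_{\mathcal A}((A^i\otimes I)\sigma)=p_i\rho_i$, and the fact that $\sigma$ depends only on the mixture $p_1\rho_1+\cdots+p_t\rho_t$, is precisely what must be established, and your first two paragraphs simply invoke it. (The paper itself does not prove the theorem either --- it cites it as well known --- and your argument essentially coincides with the paper's proof of Proposition~\ref{prop:system-to-protocol}; but that is a different statement from the one at hand.)

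A self-contained proof of the theorem is short. Set $\rho=p_1\rho_1+\cdots+p_t\rho_t$, take its spectral decomposition $\rho=\sum_j\lambda_j e_je_j^*$, and define the purification $\sigma=\psi\psi^*$ with $\psi=\sum_j\sqrt{\lambda_j}\,\overline{e_j}\otimes e_j$, so that $\Tr_{\mathcal A}(\sigma)=\rho$; this $\sigma$ manifestly depends only on $\rho$, which is the ``moreover'' clause. A direct computation gives, for any matrix $M$ acting on $\mathcal A$,
\begin{equation*}
\Tr_{\mathcal A}\big((M\otimes I)\sigma\big)=\sqrt{\rho}\;M^{\mathsf T}\sqrt{\rho}.
\end{equation*}
Since $\rho-p_i\rho_i=\sum_{j\neq i}p_j\rho_j\succeq 0$, the support of $p_i\rho_i$ is contained in that of $\rho$, so $A^i:=\big(\rho^{-1/2}\,p_i\rho_i\,\rho^{-1/2}\big)^{\mathsf T}$ (with $\rho^{-1/2}$ the pseudo-inverse square root) is positive semidefinite and satisfies $\Tr_{\mathcal A}((A^i\otimes I)\sigma)=p_i\rho_i$. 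These matrices sum to $\Pi^{\mathsf T}$, where $\Pi$ is the projection onto the support of $\rho$; adding $I-\Pi^{\mathsf T}$ to, say, $A^1$ completes them to a measurement without changing any partial trace, because $\sqrt{\rho}(I-\Pi)\sqrt{\rho}=0$. This construction is the missing content of your proposal.
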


\begin{proof}[Proof of Proposition \ref{prop:system-to-protocol}]
It suffices to find an entangled state~$\sigma$ and measurement $\{A_x^i\st i\in[t]\}$ for each of Alice's inputs~$x$ such that~$\rho_x^i = \Tr_\mathcal A((A_x^i\otimes I)\sigma)$.
Indeed, recall that after getting his input~$u$ and Alice's measurement outcome~$i$, Bob knows that his register is in the state~$\rho_y^i/\Tr(\rho_y^i)$ for some~$y\in\mathcal C$ and clique~$\mathcal C$ in~$G$ that contains Alice's input~$x$.
By~\eqref{eq-intro:orth} and the Orthogonality Lemma there exists a measurement~$\{B_y\st y\in\mathcal C\}\cup\{B^\perp\}$ such that~$\Tr(B_y\rho_x^i) = \delta_{xy}\Tr(\rho_x^i)$, which thus allows Bob to correctly identify Alice's input~$x$.

For each~$x\in V(G)$, define the nonnegative numbers ${p^i_x = \Tr(\rho_x^i)}$, $i\in[t]$, and trace-1 matrices $\tilde\rho_x^i$ given by $\rho_x^i/p_i^x$ if $p_i^x$ is nonzero and an arbitrary trace-1 matrix otherwise.
Then, $p^1_x + \cdots + p^t_x = 1$ and $p_x^1\tilde \rho_x^1 + \cdots + p_x^t\tilde \rho_x^t = \rho$.
Hence, Theorem~\ref{lemma:hjw} gives the desired state and measurements.
\end{proof}

An optimal entanglement-assisted protocol minimizes the number of outcomes~$t$ of Alice's measurement, since these outcomes need to be communicated.
This and above considerations now justify defining the following variants of the chromatic number and Witsenhausen rate.
\medskip

\begin{definition}[Entangled chromatic number and Witsenhausen rate] \label{def:qchrom}
For a graph $G$, define $\qchrom(G)$ as the minimum integer~$t\in\N$ for which there exist $d\in\N$ and positive semidefinite matrices $\rho$ and $\{\rho_x^i\st x\in V(G),\, i\in[t]\}$ in~$\C^{d \times d}$ such that~$\Tr(\rho) = 1$ and~\eqref{eq-intro:orth} and~\eqref{eq-intro:sumrho} hold.
The {\em entangled Witsenhausen rate} is defined by
$$\qwits(G) = \lim_{m \to \infty} \frac{1}{m} \log \qchrom(G^{\boxtimes m}).$$
\end{definition}

The operational interpretation of the parameter~$\qchrom(G)$ makes it easy to see that it is sub-multiplicative with respect to strong graph products, that is~${\qchrom(G\boxtimes G') \leq \qchrom(G)\,\qchrom(G')}$.
Indeed, suppose that two sources~$\mathcal M$ and~$\mathcal M'$ admit entanglement-assisted protocols involving~$t$- and $t'$-outcome measurements done by Alice, respectively. 
For the combined source~$\mathcal M\otimes \mathcal M'$ (with characteristic graph~$G\boxtimes G'$), Alice and Bob get input {\em pairs} $(x,x')$ and $(u,u')$ respectively and Bob must learn~$(x,x')$.
By running the two protocols separately on an instance of~$\mathcal M\otimes \mathcal M'$ we get a protocol involving~$t\times t'$-outcome measurements, which gives the claim.
Alternatively, the sub-multiplicativity can be derived from Definition~\ref{def:qchrom} using simple matrix manipulations.
Sub-multiplicativity of~$\qchrom(G)$ implies that the entangled Witsenhausen rate is also given by the infimum:  ${\qwits(G) =\inf_m \log \qchrom(G^{\boxtimes m})/m}$.
\medskip

{\bf Entanglement-assisted channel coding.}
In a similar fashion as above one arrives at the following entangled variants of the independence number and Shannon capacity.

\begin{definition}[Entangled independence number and Shannon capacity]\label{def:qindep}
For a graph $H$, define $\qindep(H)$ as the maximum integer $M\in\N$ for which  there exist $d\in\N$ and positive semidefinite matrices~$\rho$ and $\{\rho_i^u\st i\in[M],\, u\in V(H)\}$ in~$\C^{d \times d}$ such that~$\Tr(\rho) = 1$ and
\beqrn
\rho^u_i\rho^v_j =& 0 & \quad \forall i\ne j \text{ and }\, \forall u,v\in V(H) \text{ s.t. }\, u=v \text{ or } \{u,v\}\in E(H),\\
\sum_{u\in V(H)} \rho^u_i &=& \rho \quad \forall i\in[M].
\eeqrn


The {\em entangled Shannon capacity} is defined by 
$$\qcap(H) = \lim_{n\to\infty}\frac{1}{n}\log\qindep(H^{\boxtimes n}).$$
\end{definition}
The parameter~$\qindep(H)$ was introduced by Cubitt et al.~\cite{Cubitt:2010}, who showed that it equals the maximum number of messages that can be sent without error using entanglement and a single use of a channel with confusability graph~$H$.
It follows that~$\qcap(H)$ equals the maximum asymptotic communication rate of such a channel when we allow for entanglement.
The parameter~$\qindep(H)$ is super-multiplicative (which also follows easily from its operational interpretation) and so in the definition of $\qcap(H)$ the limit can be replaced with the supremum.

In~\cite{Cubitt:2010} it is shown that~$\qindep(H)$ can be strictly larger than~$\alpha(H)$, meaning that the number of messages that can be sent with a single use of a channel can be increased with the use of entanglement (see also Man\v{c}inska, Severini and Scarpa~\cite{Mancinska:2012}).
This result was subsequently strengthened by Leung, Man\v{c}inska, Matthews, Ozols and Roy~\cite{Leung:2012} and Bri\"{e}t, Buhrman and Gijswijt~\cite{Briet:2012c}, who found families of graphs for which~$\qcap(H)>c(H)$.
\medskip

{\bf Entanglement-assisted source-channel coding.}
Finally, we consider the general source-channel coding problem in the entanglement-assisted setting.
As before, Alice and Bob receive inputs from a source $\mathcal M = (\mathsf X, \mathsf U, P)$ and Alice can send messages through a classical channel $\mathcal N = (\mathsf S,\mathsf V,Q)$. 
Their goal is for Bob to learn Alice's input, minimizing the number of channel uses per input sequence of a given length.
But in addition Alice and Bob possess quantum registers $\mathcal A$ and~$\mathcal B$, respectively, that are initialized to be in some entangled state~$\sigma$.
The entanglement-assisted version of an~$(m,n)$-coding scheme is as follows (Figure~\ref{fig:qwits}):

\begin{figure}
\begin{center}
 \includegraphics[width=7.5cm]{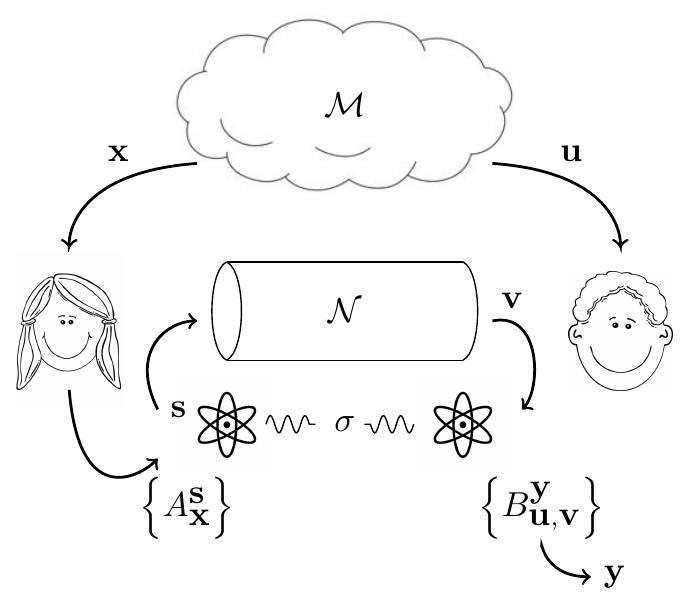}
 \caption{The  figure illustrates the entanglement-assisted source-channel coding protocol.
 After receiving a source-input ${\bf x} \in\mathsf X^m$, Alice performs a measurement $\{A_{\bf x}^{\bf s}\st {\bf s}\in \mathsf S^n\}$ on her part of an entangled state~$\sigma$ which she shares with Bob. She sends her outcome~$\bf s$ through the channel, upon which Bob---who received a source-input~$\bf u$---receives a channel-output~${\bf v} \in\mathsf V^n$.
 Bob performs a measurement $\{B_{\bf{u},\bf{v}}^{\bf y}\st {\bf y}\in \mathsf X^m\}$ on his part of~$\sigma$ and obtains an outcome~${\bf y}\in\mathsf X^m$. }
\label{fig:qwits}
\end{center}
\end{figure}

\begin{enumerate}
 \item Alice and Bob receive inputs ${\bf x}\in \mathsf X^m$ and ${\bf u} \in \mathsf U^m$, respectively, from the source $\mathcal M$;
 \item Alice performs a measurement $ \{ {A}_{\bf x}^{\bf s}\}_{{\bf s} \in \mathsf S^n} $ (which can depend on~${\bf x}$) on $\mathcal A$ and gets some sequence~$\bf s$ as outcome;
 \item Alice sends ${\bf s}$ through the channel $\mathcal N$ after which Bob receives some sequence~${\bf v} \in \mathsf V^n$;
 \item Bob performs a measurement $ \{ B_{{\bf u},{\bf v}}^{\bf y}\}_{{\bf y} \in \mathsf X^m}$ (which can depend on ${\bf u}$ and ${\bf v}$) on $\mathcal B$ and gets some sequence~${{\bf y}\in \mathsf X^m}$ as outcome.
\end{enumerate}

Using the same arguments as above one then arrives at the following variants of the cost-rate.

\begin{definition}[Entangled cost rate]\label{def:qscrate}
For graphs $G,H$ and~$m\in\N$,  define $\qscrate_m(G,H)$ as the minimum integer $n\in\N$ for which 
there exist ~$d\in\N$ and positive semidefinite matrices~$\rho$ and $\{\rho_{\bf x}^{\bf s}\st {\bf x}\in V(G^{\boxtimes m}),\, {\bf s}\in V(H^{\boxtimes n})\}$ in~$\C^{d \times d}$ such that~$\Tr(\rho) = 1$ and
\begin{align*}
\rho_{\bf x}^{\bf s}\rho_{\bf y}^{\bf t} = 0& \quad \forall{\bf x},{\bf y} \mbox{ s.t. }  \{{\bf x},{\bf y}\}\in E(G^{\boxtimes m}) \text{ and }\\ 
& \quad \forall {\bf s},{\bf t} \text{ s.t. }\,  {\bf s = \bf t} \text{ or } \{{\bf s},{\bf t}\}\in E(H^{\boxtimes n}),\\
\sum_{{\bf s}\in V(H^{\boxtimes n})} \rho_{\bf x}^{\bf s} = \rho& \quad \forall {\bf x}\in V(G^{\boxtimes m}).
\end{align*}
The {\em entangled cost rate}  is defined by
 $$\qscrate(G,H) = \lim_{m\to\infty} \frac{1}{m} \qscrate_m(G,H).$$
\end{definition}

As for the classical counterpart, we assume throughout that both graphs $G$ and $\overline{H}$ contain at least one edge, thereby excluding trivial settings.
It is not difficult to see that we regain the parameter~$\scrate(G,H)$ if we restrict the above matrices~$\rho$ and~$\rho_{\bf x}^{\bf s}$ to be~$\bset{}$-valued scalars.
Thus sharing an entangled quantum system cannot make the coding scheme worse and so $\qscrate(G,H) \leq \scrate(G,H)$.
As in the classical case, the parameter $\qscrate_m(G,H)$ is sub-additive 
(as can be easily be derived by its operational interpretation or by matrix manipulations using Definition~\ref{def:qscrate}),
hence the parameter $\qscrate(G,H)$ is well defined and can equivalently be given by~$\inf_m\qscrate_m(G,H)/m$.

We finish this subsection by noticing that the following analogue of~Lemma~\ref{lemRc} holds.

\begin{lemma}
Let $G$ and $H$ be graphs such that both $G$ and $\overline H$ have at least one edge. Then,
\beqn \qwits(G) = \qscrate(G,\overline K_2) \;\text{ and }\; 1/\qcap(H) = \qscrate(K_2, H).
\eeqn 
\end{lemma}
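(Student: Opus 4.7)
The plan is to reduce both identities directly to Definitions~\ref{def:qchrom} and~\ref{def:qindep} by parsing Definition~\ref{def:qscrate} in the two degenerate cases $H=\overline{K_2}$ and $G=K_2$, and then to reuse the Fekete-style limit argument from Lemma~\ref{lemRc} essentially verbatim.

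First I would prove $\qwits(G)=\qscrate(G,\overline{K_2})$. The key observation is that $\overline{K_2}^{\boxtimes n}$ is the empty graph on $2^n$ vertices: in the strong product, two distinct $n$-tuples of vertices of $\overline{K_2}$ can only be adjacent if for every coordinate they are either equal or adjacent in $\overline{K_2}$, but $\overline{K_2}$ has no edges, forcing equality in every coordinate. Thus, taking $H=\overline{K_2}$ in Definition~\ref{def:qscrate}, the clause ``$\{{\bf s},{\bf t}\}\in E(H^{\boxtimes n})$'' is vacuous, and the orthogonality condition collapses to $\rho_{\bf x}^{\bf s}\rho_{\bf y}^{\bf s}=0$ for all ${\bf s}\in\{0,1\}^n$ and every edge $\{{\bf x},{\bf y}\}\in E(G^{\boxtimes m})$, together with $\sum_{{\bf s}\in\{0,1\}^n}\rho_{\bf x}^{\bf s}=\rho$. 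This is exactly the defining condition of Definition~\ref{def:qchrom} for $\qchrom(G^{\boxtimes m})\le 2^n$. Hence $\qscrate_m(G,\overline{K_2})=\lceil\log\qchrom(G^{\boxtimes m})\rceil$, which after dividing by $m$ and letting $m\to\infty$ gives $\qscrate(G,\overline{K_2})=\qwits(G)$.

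Next I would prove $1/\qcap(H)=\qscrate(K_2,H)$. Here $K_2^{\boxtimes m}=K_{2^m}$, so every pair of distinct ${\bf x},{\bf y}\in[2^m]$ is an edge of $G^{\boxtimes m}$, making the source-side clause ``$\{{\bf x},{\bf y}\}\in E(G^{\boxtimes m})$'' equivalent to ``${\bf x}\ne{\bf y}$''. Relabeling ${\bf x}$ as $i\in[2^m]$ and ${\bf s}$ as ${\bf u}\in V(H^{\boxtimes n})$, the orthogonality and summation conditions in Definition~\ref{def:qscrate} become identical to those of Definition~\ref{def:qindep} with $M=2^m$ applied to $H^{\boxtimes n}$. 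Therefore $\qscrate_m(K_2,H)=\min\{n\in\N:\qindep(H^{\boxtimes n})\ge 2^m\}$, which is the same expression as in the proof of Lemma~\ref{lemRc} with $\alpha$ replaced by $\qindep$. Since $\qindep$ is super-multiplicative (as noted after Definition~\ref{def:qindep}), one has $\qcap(H)=\sup_n\log\qindep(H^{\boxtimes n})/n$, and I would then repeat verbatim the squeeze argument around inequality~\eqref{relc} from the classical proof, with every occurrence of $\alpha$ replaced by $\qindep$, to conclude $\qscrate(K_2,H)=1/\qcap(H)$.

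I do not expect a real obstacle: both identities hinge on the same principle that making $H$ edgeless trivializes the channel-side orthogonality constraint (reducing the source-channel problem to pure source coding) while making $G$ complete trivializes the source-side constraint (reducing it to pure channel coding). The only thing to check is that the dictionary among Definitions~\ref{def:qscrate}, \ref{def:qchrom}, and~\ref{def:qindep} is as tight as the classical dictionary among $\scrate_m$, $\chi$, and $\alpha$, and this is immediate from their formulations; no new quantum reasoning (e.g., no further invocation of the Hughston--Jozsa--Wootters theorem or of the Orthogonality Lemma) is needed at this level.
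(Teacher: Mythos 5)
Your proposal is correct and follows essentially the same route as the paper: identify $\overline{K_2}^{\boxtimes n}$ as the edgeless graph on $2^n$ vertices to get $\qscrate_m(G,\overline{K_2})=\lceil\log\qchrom(G^{\boxtimes m})\rceil$, identify $K_2^{\boxtimes m}=K_{2^m}$ to get $\qscrate_m(K_2,H)=\min\{n:\qindep(H^{\boxtimes n})\ge 2^m\}$, and then reuse the Fekete/squeeze argument from Lemma~\ref{lemRc} with $\alpha$ replaced by $\qindep$. No gaps.
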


\begin{proof}
Notice that the graph ${\overline K_2}^{\boxtimes n}$ has $2^n$ vertices and no edges. It then follows from the definitions that
${\qscrate_m(G,\overline K_2) = \lceil \log \qchrom(G^{\boxtimes m})\rceil}$. 
The identity  ${\qwits(G)=\qscrate(G,\overline K_2)}$ follows by dividing by $m$ and letting $m$ go to infinity.

Since $K_2^{\boxtimes m} = K_{2^m}$, it follows from the definitions that $\qscrate_m(K_2,H)$ is the minimum $n\in \N$ such that ${\qindep(H^{\boxtimes n})\ge 2^m}$ or, equivalently,
$\log \qindep(H^{\boxtimes n}) \geq m$.
Now we use the same techniques as in Lemma~\ref{lemRc} to prove that ${1/\qcap(H) = \qscrate(K_2, H)}.$
\end{proof}

\subsection{Further relations to previous work}

To the best of our knowledge, neither source nor source-channel coding were considered in the context of shared entanglement before.
However, in the context of Bell inequalities,  Cameron et al.~\cite{Cameron:2007} studied the {\em quantum chromatic number}~$\chi_q(G)$, and Roberson and Man\v{c}inska~\cite{Roberson:2012} considered a variant of the quantum independence number~$\alpha_q(H)$.   
These parameters can be obtained from the respective definitions of~$\qchrom$ and~$\qindep$ given above, if we require~$\rho$ to be the identity matrix and if we further restrict the other positive semidefinite matrices  to be orthogonal projections.
Moreover, we regain~$\chi$ and~$\alpha$ if we further restrict these matrices to be $\{0,1\}$-valued scalars.
It thus follows immediately that
$$
\qchrom(G)\leq \chi_q(G) \leq \chi(G) \; \text{ and } \; \alpha(H) \leq \alpha_q(H) \leq \qindep(H).
$$

It is well-known that determining the classical chromatic and independence numbers of a graph are NP-hard problems. 
Determining the Shannon capacity and the Witsenhausen rate appears to be even harder (we do not even know if they are computable).
Despite  substantial efforts, the properties of these parameters are still only partially understood (see~\cite{Alon:2002c,Alon:2006b} and references therein). For example, the largest odd cycle for which the Shannon capacity has been determined is~$C_5$ and the decidability of the Shannon capacity and the Witsenhausen rate are still unknown.
Clearly the parameter~$\eta$ is  at least as hard to compute as~$R$ and~$c$ since it contains them as special cases.
Even less is known about the quantum variants of these parameters and determining the computational complexity of the parameters~$\qchrom, \qindep, \alpha_q, \qwits$  and~$\qcap$ is an open problem.
Very recently, however, Ji~\cite[Theorem~1 and Theorem~4]{Ji:2013} showed that it is NP-hard to decide if~$\chi_q(G) \leq 3$.

\subsection{Outline of the paper}
In Section~\ref{sec:intro} we introduced the problems and basic notions.
In Section~\ref{sec:results} we present our main results.
In Section~\ref{sec:model} we give a brief introduction to quantum information theory and we explain quantum remote state preparation. Some properties of the entangled parameters are also presented there.
The proofs of our main results are given in Sections~\ref{sec:thetabound} -~\ref{sec:scratesep}.
Finally in Section~\ref{sec:concl} we summarize our results and mention open questions.

\section{Our results}\label{sec:results}

\subsection{The entangled chromatic number and Szegedy's number}

Here we explain our lower bound on the entangled chromatic number.
We show that~$\qchrom(G)$ is lower bounded by an efficiently computable graph parameter, namely a variant of the famous {\em theta number} introduced by Szegedy~\cite{Szegedy:1994}. 
The theta number itself was originally introduced by Lov\'asz~\cite{Lovasz:1979} to solve a long-standing problem posed by Shannon~\cite{Shannon:1956}: computing the Shannon capacity of the five-cycle.
Out of the many equivalent formulations of the theta number (see~\cite{Knuth:1993} for a survey), the following is the most appropriate for our setting:
\begin{equation}
\label{opt:theta-gbar}
\begin{split}
\vartheta(G) = \min\Big\{\lambda : \, & \exists\,  Z \in \R^{V(G) \times V(G)},\;Z\succeq 0,\\
&Z(u, u) = \lambda -1 \;\;\text{for~$u \in V(G)$},\\
&Z(u, v) = -1 \;\; \text{for~$\{u,v\} \notin E(G)$}\Big\}.
\end{split}
\end{equation}
Lov\'asz~\cite{Lovasz:1979} proved that $\alpha(G)\leq \vartheta(G) \leq \chi(\overline G)$ holds (this inequality is often referred to as the Sandwich Theorem~\cite{Knuth:1993}).
The theta number is defined by a semidefinite program and it can be approximated to within arbitrary precision in polynomial time. It thus  gives a tractable and in many cases useful bound for both $\alpha$ and $\chi$.

Szegedy~\cite{Szegedy:1994} introduced the following strengthening of the theta number, which includes an extra linear constraint:
\begin{equation}
\label{opt:thplus-g}
\begin{split}
\thplus(G) = \min\Big\{\lambda : \, & \exists \, Z \in \R^{V(G) \times V(G)},\; Z\succeq 0,\\
&Z(u, u) = \lambda -1 \;\;\text{for~$u \in V(G)$},\\
&Z(u, v) = -1 \;\;\text{for~$\{u,v\} \notin E(G)$},\\
&Z(u, v) \geq -1 \;\; \text{for~$\{u,v\} \in E(G)$}\,\Big\}.
\end{split}
\end{equation}

Szegedy's number satisfies the chain of inequalities:
${\alpha(G) \leq\vartheta(G) \leq  \vartheta^+(G) \leq \chi(\overline G)}$.
Lov\'asz \cite{Lovasz:1979} proved that $\vartheta$ is {\em multiplicative} under the strong graph product, that is, $\vartheta(G\boxtimes H) = \vartheta(G)\vartheta(H)$. 
Moreover Knuth~\cite{Knuth:1993} showed that $\vartheta(\overline{G\boxtimes H}) = \vartheta(\overline G)\vartheta(\overline H)$ (recently Cubitt et al.~\cite{Cubitt:2013} showed that this identity fails for~$\vartheta^+$). 
The identities of Lov\'asz and Knuth give  for any graph $G$ and $m\in \N$:
\begin{equation}\label{relLK}
\vartheta(\overline G ^{\boxtimes m}) =\vartheta(\overline{G^{\boxtimes m}}) =\vartheta(\overline G)^m.
\end{equation}
Combining these properties of $\vartheta$ with the Sandwich Theorem shows that
\beqn
c(G) \leq \log\vartheta(G) \leq R(\overline G).
\eeqn
These inequalities capture the best known efficiently computable bounds for the Shannon capacity and the Witsenhausen rate.

Our first main result is  that the parameter $\thplus$ (and thus $\vartheta$ as well) lower bounds the entangled chromatic number and hence $\log \vartheta$ lower bounds the entangled Witsenhausen rate. For the proof we refer to Section~\ref{sec:thetabound}.

\begin{theorem}\label{thm:theta}
For any graph~$G$, we have
\begin{equation}\label{reftheta1}
 \thplus(G) \leq \qchrom(\overline{G}),
 \end{equation}
 \begin{equation}\label{reftheta2} 
\log \vartheta (G)\le R^\star (\overline G).
\end{equation}
\end{theorem}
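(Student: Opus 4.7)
The plan is to prove (\ref{reftheta1}) by converting any feasible $\qchrom(\overline G) = t$ configuration into a feasible solution of the SDP (\ref{opt:thplus-g}) with objective $\lambda = t$, then deduce (\ref{reftheta2}) by applying (\ref{reftheta1}) to the graph $\overline{\overline G^{\boxtimes m}}$ and invoking the Lov\'asz--Knuth identities (\ref{relLK}). For the first part, let PSD matrices $\rho$ and $\{\rho_x^i : x \in V(G),\, i \in [t]\}$ in $\C^{d\times d}$ witness $\qchrom(\overline G) = t$, so $\Tr(\rho) = 1$, $\sum_i \rho_x^i = \rho$ for every $x$, and $\rho_x^i \rho_y^i = 0$ whenever $\{x,y\}$ is a non-edge of $G$. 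The natural candidate is
\[
Z_{xy} \;=\; \frac{t}{\Tr(\rho^2)}\sum_{i=1}^t \Tr(\rho_x^i \rho_y^i) \;-\; 1 \quad (x\neq y), \qquad Z_{xx} \;=\; t-1.
\]
The off-diagonal constraints of (\ref{opt:thplus-g}) are then immediate: on non-edges of $G$, the orthogonality $\rho_x^i\rho_y^i=0$ forces $Z_{xy}=-1$; on edges, $\Tr(\rho_x^i\rho_y^i)\geq 0$ (product of PSDs has nonnegative trace) yields $Z_{xy}\geq -1$; and the diagonal matches $\lambda-1$ with $\lambda=t$.

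The main work is to verify $Z\succeq 0$. My plan is to exhibit a Gram decomposition by lifting the $\rho_x^i$ to vectors in $\C^{d\times d}\otimes\C^t$ equipped with the Hilbert--Schmidt inner product on the first factor, via
\[
w_x \;=\; \sqrt{\tfrac{t}{\Tr(\rho^2)}}\,\sum_{i=1}^t \rho_x^i \otimes \ket{i},
\]
so that $\langle w_x, w_y\rangle = Z_{xy}+1$ for $x\neq y$. Using $\sum_i \rho_x^i = \rho$, a direct computation shows that the unit ``handle'' vector $e = \rho\otimes\bigl(\sum_j \ket{j}\bigr)\big/\sqrt{t\,\Tr(\rho^2)}$ satisfies $\langle w_x, e\rangle = 1$ for every $x$. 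With $P=I-\ketbra{e}{e}$, the Gram matrix $Y_{xy} := \langle Pw_x, Pw_y\rangle = \langle w_x, w_y\rangle - 1$ is PSD, agrees with $Z$ off the diagonal, and has diagonal $Y_{xx} = \|w_x\|^2 - 1$. Expanding $\Tr(\rho^2) = \Tr\bigl((\sum_i \rho_x^i)^2\bigr) = \sum_i \Tr((\rho_x^i)^2) + \sum_{i\neq j}\Tr(\rho_x^i\rho_x^j)$ and using $\Tr(\rho_x^i\rho_x^j)\geq 0$ gives $\|w_x\|^2 \leq t$, so $Z = Y + D$ for a nonnegative diagonal matrix $D$, whence $Z\succeq 0$ and $\thplus(G)\leq t$.

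For (\ref{reftheta2}), applying (\ref{reftheta1}) to the graph $\overline{\overline G^{\boxtimes m}}$ yields $\vartheta(\overline{\overline G^{\boxtimes m}}) \leq \thplus(\overline{\overline G^{\boxtimes m}}) \leq \qchrom(\overline G^{\boxtimes m})$. Instantiating the identities (\ref{relLK}) with $\overline G$ in place of $G$ gives $\vartheta(\overline{\overline G^{\boxtimes m}}) = \vartheta(G^{\boxtimes m}) = \vartheta(G)^m$, so $\vartheta(G)^m \leq \qchrom(\overline G^{\boxtimes m})$; taking logarithms, dividing by $m$, and letting $m\to\infty$ yields $\log\vartheta(G)\leq R^\star(\overline G)$. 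The main obstacle is the PSD step of (\ref{reftheta1}): the direct Gram construction produces a diagonal $\|w_x\|^2-1$ that depends on $x$ and is in general strictly smaller than $t-1$, and only by combining \emph{both} defining properties of the $\rho_x^i$ (orthogonality on non-edges \emph{and} summing to $\rho$) via the inequality $\sum_i \Tr((\rho_x^i)^2)\leq \Tr(\rho^2)$ can one bound the defect and correct it by a nonnegative diagonal. A related subtlety is that (\ref{reftheta2}) must be routed through $\vartheta$ rather than $\thplus$, since $\thplus$ is not multiplicative under the strong product.
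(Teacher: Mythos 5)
Your proposal is correct, and in fact it builds exactly the same feasible matrix as the paper: after the normalization $\Tr(\rho^2)=1$, your $Z_{xy}=t\sum_i\Tr(\rho_x^i\rho_y^i)-1$ agrees off the diagonal with the paper's $Z=t(t-1)(A-B)$, and both arguments then pad the diagonal up to $t-1$ with a nonnegative diagonal matrix. Where you genuinely diverge is in certifying $Z\succeq 0$. The paper symmetrizes the $|V|t\times|V|t$ Gram matrix $X_{ui,vj}=\langle\rho_u^i,\rho_v^j\rangle$ over $\mathrm{Sym}(t)$ to get a block matrix with constant diagonal block $A$ and off-diagonal block $B$, invokes the structural Lemma~\ref{lem:block} to conclude $A-B\succeq 0$, and uses $\sum_i\rho_u^i=\rho$ only through the scalar identity $t(A_{uv}+(t-1)B_{uv})=1$. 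You instead lift each vertex to the single vector $w_x\propto\sum_i\rho_x^i\otimes\ket{i}$, observe that every $w_x$ has inner product exactly $1$ with the common unit vector $e\propto\rho\otimes\sum_j\ket{j}$ --- this is where $\sum_i\rho_x^i=\rho$ enters for you --- and subtract the all-ones matrix by projecting onto the orthogonal complement of $e$, which preserves positive semidefiniteness; the diagonal defect $t-\|w_x\|^2\ge0$ then follows from $\sum_i\Tr\bigl((\rho_x^i)^2\bigr)\le\Tr(\rho^2)$, again a consequence of $\sum_i\rho_x^i=\rho$ and the nonnegativity of traces of products of positive semidefinite matrices. Your route is more elementary and self-contained (no group averaging, no block-matrix lemma) and produces an explicit Gram decomposition, while the paper's symmetrization explains structurally why the solution has the form $t(t-1)(A-B)$; the two are equally rigorous. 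Your derivation of (\ref{reftheta2}) from (\ref{reftheta1}) via $\vartheta\le\thplus$ and the identities (\ref{relLK}) is the same as the paper's, and your closing remark that the argument must pass through the multiplicativity of $\vartheta$ rather than $\thplus$ is exactly the point the paper makes by citing the failure of Knuth's identity for $\vartheta^+$.
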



In~\cite{Roberson:2012} it is shown  that the inequality $\vartheta(G) \leq \chi_q(\overline G)$ holds.
Theorem~\ref{thm:theta} thus strengthens this bound as it gives~${\vartheta(G) \leq \thplus(G) \leq \qchrom(\overline G)\leq \chi_q(\overline G)}$.

Beigi~\cite{Beigi:2010} and Duan, Severini and Winter~\cite{Duan:2013} proved that~$\vartheta(G)$ upper bounds~$\qindep(G)$.
The above-mentioned relations therefore imply the following sequence of inequalities:
\beqn
c(G)\leq \qcap(G)\leq \log\vartheta(G) \leq \qwits(\overline G) \leq R(\overline G).
\eeqn

Moreover, building on this work, Cubitt et al.~\cite{Cubitt:2013} proved that $\eta^*(G,H) \geq {\log \vartheta(\overline{G})} / {\log \vartheta(H)}$.

\subsection{Lower cost rates with entanglement}\label{sec:resultsrate}

We give quantitative bounds on the advantage of sharing entanglement for the following three parameters: the Witsenhausen rate, the Shannon capacity and  the cost rate of certain source-channel combinations.

\subsubsection{Quarter-orthogonality graphs and Hadamard matrices}

To show separations between the classical and entangled variants of the above-mentioned parameters, we use the following family of graphs (also considered in~\cite{Briet:2012c} for similar reasons).

\begin{definition}[Quarter-orthogonality graph $H_k$]\label{def:H_k}
For an odd positive integer~$k$, the {\em quarter-orthogonality graph}~$H_k$ has as vertex set all vectors in~$\pmset{k}$ that have an even number of~``$-1$'' entries, and as edge set the pairs with inner product~$-1$.
Equivalently, the vertices of~$H_k$ are the $k$-bit binary strings with even Hamming weight and its edges are the pairs with Hamming distance~$(k+1)/2$.
\end{definition}

The usual {\em orthogonality graph} has vertex set $\{-1,1\}^k$ (for~$k$ even) and two vertices are adjacent if they are orthogonal.
The quarter-orthogonality graph is a subgraph of the orthogonality graph, which can be seen by appending a~`$1$' to the vertices of~$H_k$.
We record the following simple lower bound on the independence number of~$H_k$ results for later use.

\begin{lemma}\label{lem:alphaHk}
For every odd positive integer $k$, we have $\alpha(H_k) \geq 2^{(k-3)/2}$.
\end{lemma}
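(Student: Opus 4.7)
The plan is to exhibit an explicit independent set $I \subseteq V(H_k)$ of cardinality $2^{(k-3)/2}$ obtained by fixing a long all-$(+1)$ suffix. Specifically, I would let $I$ consist of those $\pmset{k}$-vectors whose last $(k+1)/2$ coordinates are all equal to $+1$ and whose first $(k-1)/2$ coordinates are arbitrary, subject only to the global parity condition (imposed by the definition of $V(H_k)$) that the total number of $-1$ entries be even. Since the fixed suffix contributes no $-1$ entries, this parity condition is equivalent to the first $(k-1)/2$ coordinates having an even number of $-1$s, so $|I| = 2^{(k-1)/2 - 1} = 2^{(k-3)/2}$.

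Next, I would verify that $I$ is independent in $H_k$. Recall that two vertices of $H_k$ are adjacent iff their inner product equals $-1$. For any $u,v \in I$, decompose $\langle u,v\rangle$ as a sum of the contributions from the last $(k+1)/2$ coordinates and from the first $(k-1)/2$ coordinates. The former contribution is exactly $(k+1)/2$, since $u$ and $v$ are identically $+1$ there. The latter contribution is at least $-(k-1)/2$, the worst case when the two prefixes are coordinate-wise opposite. Adding these bounds gives
\[
\langle u,v\rangle \;\geq\; \frac{k+1}{2} - \frac{k-1}{2} \;=\; 1 \;>\; -1,
\]
so no pair in $I$ forms an edge of $H_k$, and hence $\alpha(H_k)\geq|I|=2^{(k-3)/2}$.

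The construction is essentially a short calculation, so there is no substantial obstacle; the only bookkeeping to double-check is that $(k-1)/2$ and $(k+1)/2$ are nonnegative integers (true since $k$ is a positive odd integer) and that the boundary cases $k=1$ and $k=3$ behave as expected — in both of these the claimed bound reduces to $\alpha(H_k)\geq 1$, which holds trivially since $V(H_k)$ is nonempty (and indeed $H_3\cong K_4$).
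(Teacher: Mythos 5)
Your proof is correct and is essentially the paper's own argument: you construct the same independent set (all vertices with a fixed trivial suffix on the last $(k+1)/2$ coordinates), merely verifying non-adjacency via the inner product in the $\pm 1$ picture rather than via Hamming distance in the $0/1$ picture.
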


\begin{proof}
The lemma follows by considering the subset~$W$ of all the vectors in~$V(H_k)$ (in the~$\{0,1\}^k$ setting) that have zeros in their last~$(k+1)/2$ coordinates. It is easy to see that~${|W| = 2^{(k-3)/2}}$ and that~$W$ is an independent set since it does not contain pairs of strings at Hamming distance~${(k+1)/2}$. 
\end{proof}

Some of our results  rely on the existence of certain Hadamard matrices.
A {\em Hadamard matrix} is a square matrix $A\in\pmset{\ell\times \ell}$ that satisfies $AA^{\mathsf T} = \ell I$.
The size $\ell$ of a Hadamard matrix must necessarily be 2 or a multiple of~$4$ and
the famous Hadamard conjecture (usually attributed to Paley~\cite{Paley:1933}) states that for every~$\ell$ that is a multiple of~$4$ there exists an $\ell\times \ell$ Hadamard matrix.
Although this conjecture is still open, many infinite families of Hadamard matrices are known. We will use a family constructed by~Xia and Liu~\cite{Xia:1991} (see for example~\cite{Xia:1996, Wilson:1997, Chen:1997, Xiang:1998, Xia:2006} for closely related constructions).

\begin{theorem}[Xia and Liu~\cite{Xia:1991}]\label{thm:xialiu}
Let $q$ be a prime power such that $q\equiv 1\bmod 4$.
Then, there exists a Hadamard matrix of size~$4q^2$.
\end{theorem}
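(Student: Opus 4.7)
The plan is to build the Hadamard matrix via the \emph{Williamson construction}. I need to produce four symmetric $\pm 1$ matrices $A_0, A_1, A_2, A_3 \in \{-1,1\}^{q^2\times q^2}$ that pairwise commute and satisfy the single scalar identity
\[
A_0^2 + A_1^2 + A_2^2 + A_3^2 \;=\; 4q^2\, I.
\]
Given such a system, arranging them in the Williamson array
\[
H \;=\; \begin{pmatrix} A_0 & A_1 & A_2 & A_3 \\ -A_1 & A_0 & -A_3 & A_2 \\ -A_2 & A_3 & A_0 & -A_1 \\ -A_3 & -A_2 & A_1 & A_0 \end{pmatrix}
\]
yields a matrix in $\{-1,1\}^{4q^2 \times 4q^2}$, and the commutativity together with the displayed identity collapses $H H^{\mathsf T}$ to $4q^2 I$ after a short block computation, verifying the Hadamard property.

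To construct the $A_i$, I will work inside the field $\mathbb{F}_{q^2}$ and exploit its cyclotomic structure. Since $q$ is a prime power with $q \equiv 1 \pmod 4$, the order $q^2 - 1$ of $\mathbb{F}_{q^2}^*$ is divisible by $8$; fixing a generator $g$ partitions $\mathbb{F}_{q^2}^*$ into eight cyclotomic classes $C_0, \ldots, C_7$ of equal size, where $C_i = \{g^{8j+i} : 0 \le j < (q^2-1)/8\}$. For carefully chosen subsets $D_0, D_1, D_2, D_3 \subseteq \mathbb{F}_{q^2}^*$, each a union of cyclotomic classes (closed under negation, which is automatic because $q^2 \equiv 1 \pmod 8$ implies $-1 \in C_0$), I define $A_i$ as the matrix indexed by $\mathbb{F}_{q^2}$ with $(A_i)_{x,y} = -1$ if $x - y \in D_i$ and $+1$ otherwise. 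Each $A_i$ is then a symmetric $\mathbb{F}_{q^2}$-group matrix, and since all four are simultaneously diagonalized by the additive characters of $\mathbb{F}_{q^2}$, they commute pairwise automatically.

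The main obstacle is pinning down which four unions of cyclotomic classes force the Williamson identity. Expanding $\sum_i A_i^2$ entrywise transforms the diagonal requirement $\sum_i A_i^2 = 4q^2 I$ and the off-diagonal cancellations into a system of linear equations in the cyclotomic numbers $(i,j)_8$ of order $8$ over $\mathbb{F}_{q^2}$, i.e.\ into a requirement that $(D_0, D_1, D_2, D_3)$ form a \emph{supplementary difference family} in $(\mathbb{F}_{q^2}, +)$ with the right parameters. The hard part is therefore the explicit cyclotomic-number evaluation: in $\mathbb{F}_{q^2}$ the numbers $(i,j)_8$ admit a uniform closed form (substantially simpler than over $\mathbb{F}_p$ because $q$ lies in the prime subfield structure, forcing all relevant Jacobi sums to be real integers of controlled absolute value), and this explicit form is what lets one verify by direct substitution that Xia and Liu's specific choice of $D_i$ zeroes out every off-diagonal contribution while summing to $4q^2$ on the diagonal. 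Once this cyclotomic calculation is carried out, plugging the resulting $A_i$ into the Williamson array completes the construction.
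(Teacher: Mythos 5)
The paper does not prove this statement at all---it is imported verbatim from Xia and Liu's paper as a black-box citation---so there is no internal proof to compare against; your proposal must therefore stand on its own, and it does not. The Williamson-array framework you set up is correct (four symmetric, pairwise-commuting $\pm1$ matrices of order $q^2$ with $A_0^2+A_1^2+A_2^2+A_3^2=4q^2I$ do yield a Hadamard matrix of order $4q^2$), but that reduction is the easy, classical part. The entire content of the theorem is the \emph{explicit} exhibition of the four sets $D_0,\dots,D_3$ as unions of eighth-power cyclotomic classes of $\mathbb{F}_{q^2}$ and the verification, via evaluation of the order-$8$ cyclotomic numbers (equivalently, the relevant Jacobi sums) over $\mathbb{F}_{q^2}$, that they form a supplementary difference family with the parameters forcing the Williamson identity. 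Your write-up names this as ``the hard part'' and then defers both the choice of the $D_i$ and the cyclotomic computation to ``Xia and Liu's specific choice''---that is, to the very result being proved. No candidate sets are specified, no cyclotomic number is computed, and nothing is verified, so the argument is circular where it is not empty.

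A secondary error: you assert that closure of each $D_i$ under negation is automatic because $q^2\equiv 1\pmod 8$ forces $-1\in C_0$. That implication is false in general: $-1=g^{(q^2-1)/2}$ lies in $C_0$ only when $16\mid q^2-1$, which holds for $q\equiv 1\pmod 8$ but fails for $q\equiv 5\pmod 8$, where $-1\in C_4$. In the latter case a single cyclotomic class is not symmetric, and the symmetry of the matrices $A_i$ (which your Williamson array relies on) would have to be arranged by pairing classes $C_i\cup C_{i+4}$ or by a different indexing---another detail the sketch cannot simply wave through.
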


We also use the following result regarding the graph $H_k$.

\begin{proposition}[Bri\"{e}t, Buhrman and Gijswijt~\cite{Briet:2012c}]\label{prop:clique}
Let $k$ be a positive integer such that there exists a Hadamard matrix of size~$k+1$.
Then,~$\omega(H_k) \geq k+1$.
\end{proposition}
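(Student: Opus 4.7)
The plan is to exhibit an explicit clique of size $k+1$ in $H_k$ by reading off vectors from a suitably normalized Hadamard matrix of size $k+1$. Let $H$ be the given $(k+1) \times (k+1)$ Hadamard matrix. I would first bring $H$ into \emph{normalized form}, whose first row and first column are both all $+1$: multiply appropriate rows by $-1$ so that the first column becomes the all-ones vector, then multiply appropriate non-first columns by $-1$ so that the first row becomes the all-ones vector. Both operations preserve the identity $HH^{\mathsf T} = (k+1)I$, and the second step leaves the first column untouched.

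Let $r_0, r_1, \ldots, r_k \in \{-1,+1\}^{k+1}$ denote the rows of the normalized matrix, and let $\tilde r_i \in \{-1,+1\}^k$ be the vector obtained from $r_i$ by deleting the first coordinate. The key calculation is that for every $i \neq j$,
\[
\langle \tilde r_i, \tilde r_j \rangle \;=\; \langle r_i, r_j \rangle - r_i[0]\, r_j[0] \;=\; 0 - 1 \;=\; -1,
\]
using pairwise orthogonality of the rows of $H$ together with $r_i[0] = r_j[0] = +1$. Hence any two of the vectors $\tilde r_i$ are adjacent in $H_k$, provided they are valid vertices; moreover they are pairwise distinct, since agreement of $\tilde r_i$ and $\tilde r_j$ together with $r_i[0]=r_j[0]$ would force $r_i=r_j$.

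Finally, I would check that each $\tilde r_i$ has an even number of $-1$ entries. The vector $\tilde r_0$ is all-ones, so it has zero $-1$'s. For $i \geq 1$, orthogonality of $r_i$ with the all-ones row $r_0$ forces $r_i$ to have exactly $(k+1)/2$ entries equal to $-1$; since $r_i[0]=+1$, the truncation $\tilde r_i$ still has $(k+1)/2$ entries equal to $-1$. Because any Hadamard matrix of size $\ge 3$ has size divisible by $4$, the number $(k+1)/2$ is even and hence $\tilde r_i \in V(H_k)$. Thus $\{\tilde r_0, \tilde r_1, \ldots, \tilde r_k\}$ is a clique of size $k+1$ in $H_k$, yielding $\omega(H_k) \geq k+1$. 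I do not anticipate any real obstacle: the whole argument is a routine packaging of Hadamard normalization, and the parity condition defining $V(H_k)$ is matched precisely by the $4 \mid k+1$ divisibility built into the hypothesis.
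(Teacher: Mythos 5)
Your proof is correct. The paper does not prove this proposition itself but imports it from \cite{Briet:2012c}, and your argument --- normalize the Hadamard matrix so its first row and column are all ones, delete the first coordinate, and check that the truncated rows are pairwise at inner product $-1$ and have evenly many $-1$'s via the $4 \mid k+1$ divisibility --- is exactly the standard construction behind that citation; the only (harmless) caveat is the degenerate case $k=1$, where the size-$2$ Hadamard matrix escapes the divisibility argument and the statement itself fails, but this case plays no role in the paper.
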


\subsubsection{Improving the Witsenhausen rate}
Our first separation result  shows an exponential gap between the entangled and classical Witsenhausen rates of quarter-orthogonality graphs.

\begin{theorem}\label{thm:wits}
For every odd  integer $k$, we have
\beq\label{lem:qwitsbound}
\qwits(H_k) \leq \log(k+1).
\eeq
Moreover, if $k=4p^\ell -1$ where $p$ is an odd prime and $\ell\in \N$, then 
\beq\label{lem:witsbound}
R(H_k) \geq 0.154k-1.
\eeq
\end{theorem}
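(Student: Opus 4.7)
\textbf{Sketch of \eqref{lem:qwitsbound}.} By sub-multiplicativity of $\qchrom$ under strong products (noted after Definition~\ref{def:qchrom}), one has $\qwits(H_k)\le\log\qchrom(H_k)$, so it suffices to show the one-shot bound $\qchrom(H_k)\le k+1$. I construct an explicit entanglement-assisted protocol from a $\pm 1$-valued orthogonal representation of $H_k$ in dimension $k+1$: for $x\in V(H_k)\subseteq\pmset{k}$, append a $+1$ coordinate to form $\tilde x:=(x,1)\in\pmset{k+1}$. Then $\langle\tilde x,\tilde y\rangle=\langle x,y\rangle+1$, which vanishes precisely when $\{x,y\}\in E(H_k)$. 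Alice and Bob share the $(k{+}1)$-dimensional maximally entangled state $|\Phi\rangle$; on input $x$, Alice measures in the orthonormal basis $|u_i^x\rangle:=D_x|h_i\rangle$ for $i\in[k+1]$, where $D_x=\diag(\tilde x)$ is a diagonal $\pm 1$ unitary on $\C^{k+1}$ and $\{|h_i\rangle\}$ is the Fourier basis, so $|(h_i)_j|^2=1/(k+1)$ for every $i,j$. The key calculation is
\[
\langle u_i^x\mid u_i^y\rangle=\sum_j|(h_i)_j|^2\,\tilde x_j\tilde y_j=\frac{\langle\tilde x,\tilde y\rangle}{k+1},
\]
which vanishes on edges. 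Tracing out Alice's half, the matrices $\rho_x^i=\tfrac{1}{k+1}|\overline{u_i^x}\rangle\langle\overline{u_i^x}|$ are rank-one, satisfy $\rho_x^i\rho_y^i=0$ for $\{x,y\}\in E(H_k)$ and sum to $I/(k+1)=\rho$, verifying \eqref{eq-intro:orth} and \eqref{eq-intro:sumrho} with $t=k+1$. This is the ``new application of remote state preparation'' advertised in the introduction.

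\textbf{Sketch of \eqref{lem:witsbound}.} Starting from $\chi(H)\ge |V(H)|/\alpha(H)$ applied to each strong power $H=H_k^{\boxtimes m}$, taking logarithms, dividing by $m$ and letting $m\to\infty$ yields
\[
R(H_k)\;\ge\;\log|V(H_k)|-c(H_k)\;=\;(k-1)-c(H_k),
\]
so it is enough to prove $c(H_k)\le 0.846\,k$. Using Lov\'asz' bound $c(H_k)\le\log\vartheta(H_k)$ reduces this to bounding the theta number. The Cayley-graph structure is decisive: under the hypothesis $k=4p^\ell-1$ the weight $(k+1)/2=2p^\ell$ is even, so $H_k$ is a Cayley graph on the even-weight subgroup of $\{0,1\}^k$ with the $\binom{k}{(k+1)/2}$ Hamming-weight-$(k{+}1)/2$ vectors as connection set. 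Its adjacency eigenvalues are then the Krawtchouk polynomial values $K_{(k+1)/2}(j;k)$ for $j=0,\dots,k$, and the Hoffman--Lov\'asz identity for vertex-transitive graphs gives
\[
\vartheta(H_k)=\frac{|V(H_k)|\,(-\tau)}{d-\tau},\qquad d=\binom{k}{(k+1)/2},\ \ \tau=\min_j K_{(k+1)/2}(j;k).
\]
A sharp asymptotic estimate of $|\tau|/d$ (via the Delsarte LP bound or Krawtchouk asymptotics at the half-weight) yields $\log\vartheta(H_k)\le 0.846\,k$ up to lower-order terms, producing $R(H_k)\ge 0.154\,k-1$ after combining the two inequalities. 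The Hadamard matrix of size $k+1$ supplied by the hypothesis (Proposition~\ref{prop:clique}) certifies $\omega(H_k)\ge k+1$, keeping the spectral computation consistent with the actual clique structure.

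\textbf{Main obstacle.} The crux of the argument for \eqref{lem:witsbound} is obtaining the explicit constant $0.846$ in the Krawtchouk/Lov\'asz-theta estimate. A purely combinatorial argument using only the Hadamard clique, $\vartheta(H_k)\le|V(H_k)|/\omega(H_k)\le 2^{k-1}/(k+1)$, would give only $R(H_k)\ge\log(k+1)$, which is logarithmic rather than linear in $k$ and hence far too weak. Extracting the linear-in-$k$ bound demands a tight asymptotic analysis of $\min_j K_{(k+1)/2}(j;k)$, either through a Delsarte LP tuned to half-weight connection sets or refined Krawtchouk asymptotics (e.g., Hahn-type estimates) — and the prime-power hypothesis on $k+1$ is what keeps this Krawtchouk-based computation clean.
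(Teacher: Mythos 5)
Your argument for \eqref{lem:qwitsbound} is correct and is essentially the paper's proof: the paper shows $\qchrom(G)\le\xi'(G)$ via exactly this Fourier-basis/diagonal-unitary construction (equivalently, remote state preparation of the states $f(x)f(x)^*/(k+1)$ with $f(x)=(x,1)^{\mathsf T}$), and then uses sub-multiplicativity.

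For \eqref{lem:witsbound} there is a genuine gap, and it is fatal to the route you propose. The reduction $R(H_k)\ge (k-1)-c(H_k)$ is fine, but the claimed estimate $\log\vartheta(H_k)\le 0.846\,k$ is false. Your own Hoffman--Lov\'asz formula shows this: writing $d=\binom{k}{(k+1)/2}$ and using the reciprocity $K_w(j)\binom{k}{j}=K_j(w)\binom{k}{w}$ together with the generating function $(1-z)^{(k+1)/2}(1+z)^{(k-1)/2}=(1-z)(1-z^2)^{(k-1)/2}$, one gets $|K_j((k+1)/2)|=\binom{(k-1)/2}{\lfloor j/2\rfloor}$, and the minimum eigenvalue of $H_k$ is exactly $\tau=-d/k$ (attained at $j=1,2$). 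Hence $\vartheta(H_k)=|V(H_k)|\cdot\frac{-\tau}{d-\tau}=\frac{2^{k-1}}{k+1}$; equivalently, $\vartheta(H_k)\vartheta(\overline{H_k})=2^{k-1}$ for this vertex-transitive graph and $\vartheta(\overline{H_k})\le k+1$ by the orthonormal representation $(x,1)/\sqrt{k+1}$ with handle $e_{k+1}$. So $\log\vartheta(H_k)=k-1-\log(k+1)$, and the theta route yields only $R(H_k)\ge\log(k+1)$ --- precisely the ``far too weak'' logarithmic bound you flag in your obstacle paragraph. No refinement of the Krawtchouk asymptotics can rescue this: the quantity you are computing genuinely equals $2^{k-1}/(k+1)$. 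This is the standard Frankl--Wilson phenomenon where $\vartheta$ exceeds $\alpha$ (and even the Shannon capacity) exponentially.

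The paper's actual argument bypasses semidefinite relaxations entirely and bounds $\alpha(H_k^{\boxtimes m})$ directly by Alon's linear-algebra method: a Barrington--Beigel--Rudich multilinear polynomial $f\in\Z_p[x_1,\dots,x_k]$ of degree at most $p^\ell-1$ detects $|c|\equiv 0\bmod p^\ell$, and since non-adjacent distinct vertices of $H_k$ satisfy $|u\oplus v|\notin\{0,2p^\ell\}$ (using $k<4p^\ell$), the shifted polynomials $f_u(x)=f(x\oplus u)$ represent $H_k$ over a space of dimension $\sum_{i<p^\ell}\binom{k}{i}\le 2^{kH(3/11)}<2^{0.846k}$. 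This gives $\alpha(H_k^{\boxtimes m})<2^{0.846km}$ for \emph{all} $m$, whence $\chi(H_k^{\boxtimes m})\ge |V(H_k)|^m/\alpha(H_k^{\boxtimes m})>2^{(0.154k-1)m}$. The rank/functional bound over $\F_p$ is exactly the tool that sees what $\vartheta$ cannot. (Also, no Hadamard matrix is needed for this part; your appeal to Proposition~\ref{prop:clique} is a red herring here.)
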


The proof of the theorem is given in Section~\ref{sec:Witsen}.

\subsubsection{Improving the Shannon capacity}
Our second separation result is a strengthening of the following result of~\cite{Briet:2012c}, which shows that for some values of~$k$, the entangled Shannon capacity of~$H_k$ can be strictly larger than its (classical) Shannon capacity.

\begin{theorem}[Bri\"{e}t, Buhrman and Gijswijt~\cite{Briet:2012c}]\label{thm:bbgmain}
Let $p$ be an odd prime 
such that there exists a Hadamard matrix of size $4p$. 
Set $k = 4p-1$.
Then,
\beqrn
\qcap(H_k) &\geq& k - 1 - 2\log(k+1),\\
c(H_k) &\leq& 0.846 k.
\eeqrn
\end{theorem}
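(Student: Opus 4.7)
Noting that $|V(H_k)|=2^{k-1}$, the target bound is within $2\log(k+1)$ of the trivial upper bound $\log|V(H_k)| = k-1$, so the goal is a near-optimal entangled independence scheme using a \emph{single} channel use, i.e., to exhibit $M \geq 2^{k-1}/(k+1)^2$ messages with operators $\{\rho_m^s\}$ fulfilling the conditions of Definition~\ref{def:qindep}. The ingredients are the Hadamard matrix $A \in \{-1,+1\}^{(k+1)\times (k+1)}$ granted by hypothesis, the associated clique $C \subseteq V(H_k)$ of size $k+1$ given by Proposition~\ref{prop:clique}, and a maximally entangled state $|\psi\rangle = (k+1)^{-1/2}\sum_{i=1}^{k+1}|i\rangle|i\rangle$ of Schmidt rank $k+1$. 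For each message $m$, Alice would perform a projective measurement whose outcomes are indexed by channel inputs in $V(H_k)$ and whose projectors are built from the rows of $A$ together with an $m$-dependent translation; by the Hughston--Jozsa--Wootters theorem (Theorem~\ref{lemma:hjw}), the induced post-measurement state on Bob's side is then a specific pure state $|\phi_m^s\rangle$ determined by the pair $(m,s)$. The crux is verifying $\rho_m^s\rho_{m'}^{s'} = 0$ whenever $m\neq m'$ and either $s = s'$ or $\{s,s'\}\in E(H_k)$; this will follow from Hadamard row orthogonality composed with the algebraic encoding of $V(H_k)$ via translates of the clique $C$, and counting the admissible messages yields $M \geq 2^{k-1}/(k+1)^2$, hence the claimed bound on $\qcap$.

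\textbf{Plan for $c(H_k) \leq 0.846\,k$.}
Here I would pass through the Lov\'asz theta number, using $c(H_k) \leq \log\vartheta(H_k)$, which is immediate from the Sandwich Theorem combined with the multiplicativity $\vartheta(G\boxtimes G') = \vartheta(G)\vartheta(G')$ of~\cite{Lovasz:1979}. Since $H_k$ is a Cayley graph on the additive group of even-weight vectors of $\mathbb{F}_2^k$ with connection set the weight-$(k+1)/2$ elements, its adjacency spectrum is computed by character sums that reduce to evaluations $K_{(k+1)/2}(j)$ of a Krawtchouk polynomial, for $j\in\{0,\ldots,k-1\}$. The Hoffman ratio bound then yields $\vartheta(H_k) \leq |V(H_k)|\cdot(-\lambda_{\min})/(d-\lambda_{\min})$, where $d=\binom{k}{(k+1)/2}$ is the common degree. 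Substituting sharp Stirling-type estimates for $d$ and for the extremal Krawtchouk value, and optimizing over the resulting expression, produces the numerical constant $0.846$ for all $k = 4p-1$ of the prescribed form.

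\textbf{Main obstacle.} The most delicate step is in the lower bound: the measurements must be chosen so that Hadamard row orthogonality is \emph{matched precisely} to the adjacency structure of $H_k$, making $|\phi_m^s\rangle$ and $|\phi_{m'}^{s'}\rangle$ orthogonal exactly when $m\ne m'$ and the channel inputs $s,s'$ are either equal or adjacent in $H_k$, without destroying either the number of messages or the sum condition in Definition~\ref{def:qindep}. This pairing of combinatorial adjacency with algebraic orthogonality is exactly what allows entanglement to push the encoding rate to within $2\log(k+1)$ of the trivial maximum $\log|V(H_k)|$, and it is the heart of the argument. For the upper bound the algebraic machinery is routine---Cayley eigenvalues, Hoffman's bound, Krawtchouk asymptotics---and the difficulty is purely analytic: extracting the explicit constant $0.846$ rather than a qualitative $\log 2 - \epsilon$ from the middle Krawtchouk value.
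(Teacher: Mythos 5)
Your plan for the lower bound is essentially the argument of the cited source~\cite{Briet:2012c}; note that the paper itself states Theorem~\ref{thm:bbgmain} without proof and instead reproves a strengthened variant (Theorem~\ref{thm:cap}) by a different route, namely remote state preparation. One detail you should make explicit: when $s=s'$ the required identity $\rho_m^s\rho_{m'}^{s}=0$ cannot be achieved by orthogonality of rank-one states proportional to the same $f(s)f(s)^*$, so the supports of the messages --- the translates $m\oplus C$ of the Hadamard clique $C$ --- must be pairwise \emph{disjoint}; the count $M\geq 2^{k-1}/(k+1)^2$ then comes from observing that $(m\oplus C)\cap(m'\oplus C)\neq\emptyset$ iff $m\oplus m'\in C\oplus C$, a set of size at most $(k+1)^2$, so a greedy choice yields that many pairwise disjoint translates. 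With disjointness in hand, the adjacency case is handled by Hadamard-row orthogonality and the sum condition holds because each translate is an orthogonal basis, so this half of your plan goes through.

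The upper bound, however, has a genuine and unfixable gap: you cannot prove $c(H_k)\leq 0.846\,k$ by passing through $\vartheta$. The theta number upper bounds the \emph{entangled} capacity as well ($c(G)\leq\qcap(G)\leq\log\vartheta(G)$, by Beigi and Duan--Severini--Winter, as recalled in Section~\ref{sec:results}), so the first half of the very theorem you are proving forces $\log\vartheta(H_k)\geq\qcap(H_k)\geq k-1-2\log(k+1)$, which exceeds $0.846\,k$ for all large $k$; hence no bound of the form $c\leq\log\vartheta$ can certify the claimed inequality, and a fortiori the SDP route can never separate $c$ from $\qcap$. Concretely, your Hoffman-ratio computation would not produce the constant $0.846$: for these middle-distance Cayley graphs the least eigenvalue is of order $-\binom{k}{(k+1)/2}/k$ (e.g.\ already at the weight-one character), so the ratio bound gives roughly $|V(H_k)|/(k+1)$, i.e.\ $c(H_k)\leq k-1-\log(k+1)$, which is useless here. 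The constant $0.846=H(3/11)$ (and the analogous constant in~\cite{Briet:2012c}) comes from bounding $\sum_{i<p^\ell}\binom{k}{i}$ via the polynomial/linear-algebra method --- Frankl--Wilson in~\cite{Briet:2012c}, or Alon's method with the Barrington--Beigel--Rudich polynomials as in Lemma~\ref{lem:newshan} of this paper --- which bounds $\alpha(H_k^{\boxtimes m})$ multiplicatively without also bounding $\qindep$. You need that route, or something equally non-spectral, for the second inequality.
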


Note that here we consider the exact bounds on $\qcap(H_k)$ and $c(H_k)$ rather than the asymptotic ones as originally written in~\cite{Briet:2012c}.
It is not known if Hadamard matrices of size~$4p$ exist for infinitely many primes~$p$.
Theorem~\ref{thm:bbgmain} requires the existence of Hadamard matrices due to the technique used to lower bound~$\qcap(H_k)$, which originates from~\cite{Leung:2012}.
It also requires that~$k$ is of the form~$rp - 1$ for some odd prime~$p$ and positive integer~$r\geq 4$ due to the technique used to upper-bound~$c(H_k)$, which is based on a result of Frankl and Wilson~\cite{Frankl:1981}.

Here we relax the conditions in Theorem~\ref{thm:bbgmain} and our result does not rely anymore on the existence of a Hadamard matrix. We show the existence of an infinite family of quarter-orthogonality graphs whose entangled capacity exceeds their Shannon capacity. 
\begin{theorem}\label{thm:cap}
For every odd integer~$k \geq 11$, we have 
\begin{equation}\label{lem:qcapbound}
\qcap(H_k) \geq (k-1)\left(1-\frac{2\log(k+1)}{k-3}\right).
\end{equation}
Moreover, if $k=4p^\ell -1$ where $p$ is an odd prime and $\ell\in \N$, then 
\begin{equation}\label{lem:newshan-cap}
c(H_k) \leq 0.846\, k.
\end{equation}
\end{theorem}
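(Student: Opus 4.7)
The statement contains two independent inequalities that I would handle separately. For the lower bound \eqref{lem:qcapbound}, the plan is to construct an explicit entanglement-assisted coding scheme whose feasibility witnesses $\qindep(H_k) \geq 2^{(k-1)(1 - 2\log(k+1)/(k-3))}$ (possibly only after passing to a suitable tensor power $H_k^{\boxtimes n}$ and taking a limit). Unlike Theorem~\ref{thm:bbgmain}, whose proof used Proposition~\ref{prop:clique} to produce a clique of size $k+1$ in $H_k$ from a Hadamard matrix of that size and then ran a pseudo-telepathy-style protocol on it, here no Hadamard is assumed. The substitute is the paper's advertised \emph{new application of remote state preparation}: Alice and Bob start with a maximally entangled state of local dimension related to $k+1$, and for each message $i \in [M]$ Alice picks a POVM whose outcomes are labeled by vertices of $H_k$ in such a way that the induced ensemble on Bob's side can be decoded, via the Orthogonality Lemma (Lemma~\ref{lem:helho}), once he has the channel output in hand. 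The classical combinatorial input is Lemma~\ref{lem:alphaHk}, which supplies an independent set in $H_k$ of size $2^{(k-3)/2}$; the factor $(k-3)$ in the bound's denominator reflects this input.

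For the upper bound \eqref{lem:newshan-cap}, the plan is to extend to $k = 4 p^\ell - 1$ the Frankl--Wilson / polynomial-method argument used in Theorem~\ref{thm:bbgmain} for $k = 4p - 1$. An independent set $\mathcal{F}$ in $H_k^{\boxtimes n}$ is a family of vectors in $\{0,1\}^{kn}$ no two of which display the forbidden block-wise Hamming-distance profile dictated by $H_k$; since that forbidden distance equals $2 p^\ell$ and is in particular divisible by $p$, one can associate to each element of $\mathcal{F}$ a low-degree polynomial over $\mathbb{F}_p$ whose evaluation table is linearly independent from those attached to the other elements of $\mathcal{F}$. This forces $|\mathcal{F}| \leq \sum_{j \leq D} \binom{kn}{j}$ for an explicit degree $D$, and applying the binary-entropy asymptotic of this binomial sum as $n \to \infty$ at the resulting density $D/(kn)$ produces the constant $0.846$. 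Dividing by $n$ and letting $n \to \infty$ yields $c(H_k) \leq 0.846 k$.

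The principal obstacle will be the first inequality. Because Definition~\ref{def:qindep} demands \emph{exact} orthogonality $\rho_i^u \rho_j^v = 0$ rather than an approximate one, the RSP construction must engineer an exactly orthogonal family of candidate post-measurement states on Bob's side; without a real Hadamard of size $k+1$ to enforce this, one is forced to use a complex Hadamard (or Fourier) basis in dimension $k+1$ together with the algebraic coordinate-flip symmetry of $H_k$, and verify that the orthogonality survives both Alice's POVM and the classical transmission through the channel. This is where I expect the slightly weaker exponent $2(k-1)/(k-3)$ to appear, in place of the cleaner exponent $2$ attained in Theorem~\ref{thm:bbgmain}, as the price for not having a real Hadamard matrix at hand.
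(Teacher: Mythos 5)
Your second half is essentially the paper's argument: the bound $c(H_k)\le 0.846\,k$ is obtained exactly as you describe, by representing $H_k$ over the span of the Barrington--Beigel--Rudich polynomials of degree at most $p^\ell-1$ modulo $p$ (shifted by the coordinate flips $x_i\mapsto 1-x_i$), invoking Alon's linear-algebra bound to get $\alpha(H_k^{\boxtimes m})\le\bigl(\sum_{j\le p^\ell-1}\binom{k}{j}\bigr)^m$, and estimating the binomial sum by $2^{kH(3/11)}<2^{0.846k}$. That part is fine.

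For the lower bound \eqref{lem:qcapbound} there is a genuine gap. You correctly name the ingredients (remote state preparation, the Orthogonality Lemma, $\alpha(H_k)\ge 2^{(k-3)/2}$, and the dimension-$(k+1)$ unimodular orthogonal representation $u\mapsto (u,1)$), but the protocol you sketch --- a POVM per message with outcomes labeled by vertices, decoded from a single channel output --- is precisely the single-use witness structure of Definition~\ref{def:qindep}, i.e.\ the scheme of Theorem~\ref{thm:bbgmain}, whose known instantiations are what require a clique of size $k+1$ and hence a Hadamard matrix. The paper's route is different and this is the missing idea: it is a \emph{hybrid $(t+1)$-use protocol}. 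To send $\mathbf{x}=(x_1,\dots,x_t)\in V(H_k)^t$, Alice first transmits the raw symbols $x_1,\dots,x_t$ through the channel $t$ times, which does not identify them but narrows each $x_i$ to a clique of $H_k$; she then runs the remote state preparation scheme of Section~\ref{sec:rsp} $t$ times to place Bob's registers in the states $f(x_i)f(x_i)^*/(k+1)$, and the $t\lceil\log(k+1)\rceil$ classical correction bits of RSP are sent with \emph{one} additional channel use, encoded into the independent set of size $2^{(k-3)/2}$ from Lemma~\ref{lem:alphaHk} (this is the only place the independence number enters, and it is exactly the constraint $t\le\log\alpha(H_k)/\log\xi'(H_k)$ that produces the $(k-3)$ in the denominator). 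Since adjacent vertices of a clique have orthogonal representation vectors, Bob decodes each $x_i$ within its clique by the Orthogonality Lemma. This yields $\qindep(H_k^{\boxtimes(t+1)})\ge 2^{(k-1)t}$, whence $\qcap(H_k)\ge (k-1)\,t/(t+1)$ with $t=\lfloor (k-3)/(2\log(k+1))\rfloor\ge 1$ for $k\ge 11$; the factor $t/(t+1)$, not any loss of exact orthogonality, is the source of the correction term. Your stated ``principal obstacle'' (engineering exact orthogonality without a real Hadamard) is a non-issue --- orthogonality of the clique states comes for free from the orthogonal representation, which exists for every odd $k$ --- while the actual load-bearing step, the extra channel use paying for the RSP classical communication, is absent from your plan.
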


We prove Theorem~\ref{thm:cap} in  Section~\ref{sec:Shannon}.
To prove (\ref{lem:qcapbound}) we use a technique that is based on quantum remote state preparation~\cite{Bennett:2001}.
This proof technique appears not to have been considered before in the context of  zero-error entanglement-assisted communication.
The proof of (\ref{lem:newshan-cap})  combines an instance of the linear algebra method due to Alon~\cite{Alon:1998} with a construction of certain low-degree polynomials over a finite field for a low-degree representations of the OR-function due to Barrington, Beigel and Rudich~\cite{Barrington:1994}.
Roughly this combination was previously used in the context of Ramsey graphs~\cite{Gopalan:2006}.

\subsubsection{Improving on source-channel codes}\label{sec:improvingscrate}
Our last contribution concerns the combined source-channel problem for a source and channel that both have $H_k$ as characteristic and confusability graph, respectively. The result is the following.

\begin{theorem}\label{thm:scratesep}
Let~$p$ be an odd prime and $\ell\in\N$ such that there exists a Hadamard matrix of size~$4p^{\ell}$.
Set $k = 4p^\ell - 1$.
Then, 
\beq\label{eq:qscratebound}
\qscrate(H_k,H_k) \leq \frac{\log (k+1)}{(k-1)\left(1-\frac{2\log(k+1)}{k-3}\right)
},
\eeq
\beq\label{eq:scratebound}
\scrate(H_k,H_k) >  \frac{0.154\, k-1} {k-1-\log(k+1)}.
\eeq
\end{theorem}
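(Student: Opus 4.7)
\textbf{Plan for Theorem~\ref{thm:scratesep}.} I would prove the two inequalities separately; each has the same shape, combining a ``separation''-style inequality with the asymptotic bounds for $H_k$ already assembled in Theorems~\ref{thm:wits} and~\ref{thm:cap}.

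\textit{Upper bound on $\qscrate$ (equation (\ref{eq:qscratebound})).} First I would establish the entangled analogue of the classical separation inequality, namely
\[
\qscrate(G,H) \;\leq\; \frac{\qwits(G)}{\qcap(H)}.
\]
The proof is by concatenation: Alice uses an optimal entangled source code on blocks of $m$ source inputs (with a share of the entanglement) to compress them into $\approx m\,\qwits(G)$ classical bits, then transmits these bits through the channel using an optimal entangled channel code (with an independent share of entanglement), spending $\approx m\,\qwits(G)/\qcap(H)$ channel uses. Plugging in $\qwits(H_k) \leq \log(k+1)$ from~(\ref{lem:qwitsbound}) and $\qcap(H_k) \geq (k-1)\bigl(1 - 2\log(k+1)/(k-3)\bigr)$ from~(\ref{lem:qcapbound}) gives (\ref{eq:qscratebound}) immediately.

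\textit{Lower bound on $\scrate$ (equation (\ref{eq:scratebound})).} The key tool is the fractional chromatic number $\chi_{\mathsf f}$, which is monotone under graph homomorphisms, multiplicative under the strong graph product, and satisfies $\chi_{\mathsf f}(G) = |V(G)|/\alpha(G)$ for vertex-transitive~$G$. Given a homomorphism $\phi\colon G^{\boxtimes m} \to \overline{H^{\boxtimes n}}$, these properties, together with vertex-transitivity of $H$ (which is preserved by strong product and complement, and gives $\alpha(\overline{H^{\boxtimes n}}) = \omega(H)^n$), yield
\[
\chi_{\mathsf f}(G)^m = \chi_{\mathsf f}(G^{\boxtimes m}) \;\leq\; \chi_{\mathsf f}(\overline{H^{\boxtimes n}}) = \bigl(|V(H)|/\omega(H)\bigr)^n,
\]
which rearranges to $\scrate(G,H) \geq \log\chi_{\mathsf f}(G)/\log\bigl(|V(H)|/\omega(H)\bigr)$. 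Apply this with $G = H = H_k$, which is vertex-transitive as a Cayley graph on the even-weight subgroup of $\Z_2^k$. Then $\chi_{\mathsf f}(H_k) = 2^{k-1}/\alpha(H_k)$; from (\ref{lem:newshan-cap}) together with $\alpha(H_k) \leq 2^{c(H_k)}$ we get $\alpha(H_k) \leq 2^{0.846\,k}$, hence $\log\chi_{\mathsf f}(H_k) \geq 0.154\,k - 1$. By Proposition~\ref{prop:clique}, applicable because $k+1 = 4p^\ell$ is a Hadamard size by hypothesis, $\omega(H_k) \geq k+1$, so $\log\bigl(2^{k-1}/\omega(H_k)\bigr) \leq k - 1 - \log(k+1)$. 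Combining gives the claimed bound, with the strict inequality coming from the slack in the Frankl--Wilson-style bound on $\alpha(H_k)$ for finite~$k$.

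\textit{Main obstacle.} The delicate step is the entangled separation inequality $\qscrate(G,H) \leq \qwits(G)/\qcap(H)$: one has to show that the entangled source and channel subprotocols can be run on independent shared states without interference, and that the cost rate behaves correctly in the limit of large block lengths (one needs sub-additivity/sub-multiplicativity of the underlying parameters, plus Fekete-type arguments). The $\chi_{\mathsf f}$-based classical lower bound is comparatively routine once vertex-transitivity of $H_k$ and the standard facts about $\chi_{\mathsf f}$ are in hand.
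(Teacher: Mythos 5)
Your proposal is correct and follows essentially the same route as the paper: the upper bound comes from the entangled separation inequality $\qscrate(G,H)\leq \qwits(G)/\qcap(H)$ (Proposition~\ref{prop:eta}) combined with the bounds \eqref{lem:qwitsbound} and \eqref{lem:qcapbound}, and the lower bound comes from comparing $|V|/\alpha$ across the homomorphism $H_k^{\boxtimes m}\longrightarrow \overline{H_k^{\boxtimes n}}$ using vertex-transitivity, $\omega(H_k)\geq k+1$ from the Hadamard hypothesis, and $\alpha(H_k^{\boxtimes m})<2^{0.846km}$ from the Barrington--Beigel--Rudich polynomial bound. Your phrasing of the classical lower bound via monotonicity and multiplicativity of the fractional chromatic number is just a repackaging of the No-Homomorphism Lemma of Albertson and Collins that the paper invokes directly (for vertex-transitive graphs $\chi_{\mathsf f}(G)=|V(G)|/\alpha(G)$), so the two arguments coincide.
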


The proof of Theorem~\ref{thm:scratesep} is given in Section~\ref{sec:scratesep}.
The bound on the entangled source-channel cost rate is obtained by concatenating an entanglement-assisted coding scheme for a source with one for a channel. In this way, one obtains a ``separated'' coding scheme for the source-channel problem, see Section~\ref{sec:sepschemes} for details. There we show that the asymptotic cost rate of a separate coding scheme is $\qwits(H_k)/\qcap(H_k)$ and thus  $\qscrate(H_k,H_k) \leq \qwits(H_k)/\qcap(H_k)$. 
The bound for the classical parameter $\eta(H_k,H_k)$ relies on the No-Homomorphism Lemma due to Albertson and Collins~\cite{Albertson:1985} and the fact that $H_k$ is vertex-transitive.
Let us point out that Theorem \ref{thm:scratesep} holds for an infinite family of graphs.
This follows  from the result of Xia and Lu \cite{Xia:1991}  in Theorem~\ref{thm:xialiu},
since  there exist infinitely many $(p,\ell)$-pairs such that $p^{\ell/2}\equiv 1\bmod 4$. (For instance, for   $p=5$ and $\ell=2i$ with $i\in \N$, 
$5^i=(4+1)^i \equiv 1 \bmod 4$.)

Hence, for any $k$ satisfying the condition of the theorem, we have an exponential separation between the entangled and the classical source-channel cost rate as
\beqn
\qscrate(H_k,  H_k )  \leq  \frac{\qwits(H_k)}{\qcap(H_k)}
 \leq  O\Big(\frac{\log k}{k}\Big) \quad
 \text{while}  \quad  \scrate(H_k,H_k)  \geq \Omega (1).
\eeqn
As shown in~\cite{Nayak:2006}, a large separation~$\scrate(G,H) \ll R(G)/c(H)$ exists for some graphs.
But this is not the case for our source-channel combination using $G=H=H_k$.
Indeed, $\scrate(H_k,H_k) \geq \Omega(1)$ and 
\beqn
\frac{R(H_k)}{c(H_k)} \leq  \frac{\log \chi(H_k)}{\log\alpha(H_k)}  
 \leq  \frac{2(k-1)}{k-3} \leq O(1),
\eeqn
 where in the second last inequality we use that ${\log \chi(H_k) \leq \log |V(H_k)| = k-1}$ and, from Lemma~\ref{lem:alphaHk}, that~${\log \alpha(H_k) \geq (k-3)/2}$.

\section{Entanglement-assisted source-channel coding}\label{sec:model}

In this section we describe quantum remote state preparation and show how to obtain a coding scheme for the source-channel problem from concatenating a coding scheme for the source with one from the channel. We begin by describing the elements of quantum information theory that appear in the subsequent discussion.
For more on quantum information theory we refer to Nielsen and Chuang~\cite{Nielsen:2000}.

\subsection{Quantum states and operations}

Recall that a {\em positive semidefinite matrix}  is a Hermitian matrix whose eigenvalues are  non-negative.
A {\em quantum register} is an abstract physical system with which experimenters (Alice and Bob) may interact.
A quantum register is represented by a finite-dimensional complex vector space and the register is $d$-dimensional if this vector space is~$\C^d$.
The set of possible~{\em states} of a $d$-dimensional quantum register is formed by the $d\times d$ complex positive semidefinite matrices whose trace equals~$1$.
When such a state is $\rho$, the quantum register~$\mathcal A$ is said to be {\em in} state~$\rho$.
A state with rank 1 is called a {\em pure state}.
An experimenter can alter a state~$\rho\in\C^{d\times d}$ of his or her register by performing a unitary transformation. This is a mapping $\rho\mapsto U\rho  U^*$, where $U\in\C^{d\times d}$ satisfies~$UU^* = I$.

The possible states of a {\em pair} of quantum registers~$(\mathcal A,\mathcal B)$ are the trace-$1$ positive semidefinite matrices in~${\C^{d_A\times d_A}\otimes\C^{d_B\times d_B}}$. Here,  $d_A$ and $d_B$ are the dimensions of~$\mathcal A$ and $\mathcal B$, respectively, and 
$\C^{d_A\times d_A}\otimes\C^{d_B\times d_B}$ is  the tensor product space, consisting of all linear combinations of matrices of the form $\rho_A\otimes \rho_B$, where $\rho_A\in\C^{d_A\times d_A}$ and $\rho_B\in\C^{d_B\times d_B}$.
The  pair of systems $(\mathcal A,\mathcal B)$ is said to be {\em entangled} if it is in a state $\rho$ which is {\em not a convex combination of states of the form} $\rho_A\otimes \rho_B$.
If Alice performs a unitary~$U\in\C^{d_A\times d_A}$ on her register~$\mathcal X$ then the state~$\rho$ of the system~$(\mathcal A,\mathcal B)$ is mapped to~$(U\otimes I)\rho (U\otimes I)^*$, and similarly if Bob performs a unitary on his register.

A {\em $t$-outcome measurement} is a collection ${\mathsf M = \{M_i\in\C^{d\times d}\st i\in[t]\}}$ of positive semidefinite matrices~$M_i$ that satisfy $\sum_{i=1}^tM_i = I$, where $I$ is the identity matrix.
A measurement describes an experiment which one may perform on a $d$-dimensional quantum register.
If Alice performs a $t$-outcome measurement~$\mathsf M$ on a register~$\mathcal A$ which is in a state~$\rho$, then she will observe a random variable~$\lambda$ over the set~$[t]$ whose probability distribution is given by $\Pr[\lambda = i] = \Tr(M_i\rho)$.
In the event that $\lambda =i$, we say that Alice gets measurement outcome~$i$.

Below we will consider settings where Alice and Bob hold (possibly entangled) quantum registers~$\mathcal A$ and~$\mathcal B$, respectively, and they each perform a measurement.
For this we introduce a linear operator called the {\em partial trace}.
For matrices~${A\in\C^{d_A\times d_A}}$ and $B\in\C^{d_B\times d_B}$ define~${\Tr_{\mathcal A}(A\otimes B) = \Tr(A)B}$ and $\Tr_{\mathcal B}(A\otimes B) = A\Tr(B)$, and extend these definitions in a linear fashion to all matrices of~$\C^{d_A\times d_A}\otimes\C^{d_B\times d_B}$.

Suppose that the pair~$(\mathcal A,\mathcal B)$ is in the state~$\rho$ and that Alice performs a $t$-outcome measurement~$\mathsf M$ on~$\mathcal A$.
Then, the probability that Alice gets measurement outcome~$i$ equals $p_i=\Tr\big((M_i\otimes I)\rho\big)$.
Moreover, in the event that Alice gets measurement outcome~$i$, Bob's register~$\mathcal B$ is left in the state $\rho^i = \Tr_{\mathcal A}\big((M_i\otimes I)\rho\big)/p_i$.
If Bob now performs an~$r$-outcome measurement $\mathsf M'$ on~$\mathcal B$, then the probability that he gets outcome~$j\in[r]$ equals~$\Tr_{\mathcal B}(M'_j\rho^i)$.

\subsection{Remote state preparation}\label{sec:rsp}

Suppose that Alice has in mind a pure state~$\rho = uu^*$ where $u\in\C^{d}$ is some unit vector that is unknown to Bob.
{\em Remote state preparation}~\cite{Bennett:2001} is a protocol that enables the parties to prepare a quantum register belonging to Bob in the state~$\rho$ using only local measurements on a pair of entangled quantum registers and classical communication from Alice to Bob.
This task can be achieved using the teleportation scheme of Bennett et al.~\cite{Bennett:1993}, which allows to remotely prepare a $d$-dimensional state $\rho$ with the communication of one among $d^2$ distinct messages. (We refer interested readers to~\cite{Bennett:1993} and~\cite[pp.~26--28]{Nielsen:2000} for the details of the teleportation scheme.) However, for certain states, remote state preparation can be performed with less communication. 
In particular, for the case where the vector~$u$ has only entries of absolute value~$d^{-1/2}$ we now describe a remote state preparation protocol that requires only sending one among $d$ distinct messages, or equivalently $\ceil{\log d}$ bits.

To remotely prepare the  state ~$\rho = uu^*$, where $u\in\C^{d}$ is a vector with entries of modulus~$d^{-1/2}$ that is unknown to Bob, in a $d$-dimensional register~$\mathcal Y$ that he possesses the two parties can use the following protocol:
Assume Alice has a $d$-dimensional quantum register~$\mathcal X$, such that $(\mathcal X,\mathcal Y)$ is in the {\em maximally entangled state}~$\sigma = vv^*$ defined by the vector
\beqn
v = \frac{1}{\sqrt{d}}\sum_{\ell=1}^de_\ell\otimes e_\ell,
\eeqn
where~$e_\ell$ denotes the~$\ell$th canonical basis vector.

\begin{enumerate}
\item Alice performs on her register~$\mathcal X$ the unitary transformation given by~$U = \sqrt{d}\Diag(u)$, where~$\Diag(u)$ is the diagonal matrix whose main diagonal is  vector~$u$.
\item Next, Alice performs on her register the discrete Fourier transform~$F\in\C^{d\times d}$, which is given by the unitary matrix~$F_{\ell, m} = e^{2\pi i (\ell-1)(m-1)/d}/\sqrt{d}$.
\item She then measures in the canonical basis (\ie, she performs the measurement $\{e_\ell e_\ell^*\st \ell\in[d]\}$ on her register) and gets an outcome~$\ell\in[d]$ which she communicates to Bob.
\item Last, Bob performs on his register~$\mathcal Y$ the unitary given by the diagonal matrix whose main diagonal is the vector $(e^{-2\pi i (\ell-1)(m-1)/d})_{m=1}^d$.
\end{enumerate}

The correctness of the protocol follows easily from the following observations.
After step (2) the state of the register pair~$(\mathcal X,\mathcal Y)$ is given by $(FU\otimes I)vv^*(FU\otimes I)^*$.
Notice that if $a_\ell,b_\ell$ are the $\ell$th column and row of $FU$, resp., then
\beqn
(FU\otimes I)v
=
\frac{1}{\sqrt{d}}\sum_{\ell=1}^d a_\ell\otimes e_\ell
=
\frac{1}{\sqrt{d}}\sum_{\ell=1}^d e_\ell\otimes b_\ell
\eeqn 
Since the~$\ell$th  row of the matrix $FU$ is given by
\beqn
\sum_{m = 1}^d \gamma_m\,  e_m \text{ where } \gamma_m = u_m\, e^{2\pi i (\ell-1)(m-1)/d},
\eeqn
this is thus exactly the state of Bob's register~$\mathcal Y$ after step~(3).
Thus, after step~(4) Bob's register is in the state~$\rho= uu^*$ as desired.

\subsection{Separate coding schemes}\label{sec:sepschemes}

By concatenating a coding scheme for a source with one for a channel, one obtains a coding scheme for the combined source-channel problem.
For this to work, the number of bits one can send perfectly with~$n$ uses of the channel must be at least as large as the number of bits required to solve~$m$ instances of the source problem.
In other words, for a source with characteristic graph~$G$ and a channel with confusability graph~$H$, we need  the condition $\chi(G^{\boxtimes m}) \leq \alpha(H^{\boxtimes n})$  in order to send length $m$ source-input sequences with~$n$ uses of the channel.
If this condition holds, then it follows that $\scrate_m(G,H) \le n$.
The same type of reasoning holds also in the entanglement-assisted case. We have just proved the following simple lemma, which can alternatively be shown using the definition of a graph homomorphism and simple matrix manipulations.

\begin{lemma}\label{lem:qsep}
Given graphs   $G,H$ and positive integers $n,m$, we have
\begin{equation}\label{eqsepc1}
\chi(G^{\boxtimes m}) \le \alpha(H^{\boxtimes n}) \Longrightarrow \scrate_m(G,H)\le n, 
\end{equation}
\begin{equation}\label{eqsepq1}
\qchrom(G^{\boxtimes m}) \leq \qindep(H^{\boxtimes n}) \Longrightarrow \scrate^\star_m(G,H)\le n.
\end{equation}
\end{lemma}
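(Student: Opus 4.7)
The plan is to prove each implication by an explicit construction that glues together a witness for the source side with a witness for the channel side. Since $\qscrate_m(G,H)\leq n$ holds exactly when the system of matrices from Definition~\ref{def:qscrate} exists (and classically when a homomorphism $G^{\boxtimes m}\to \overline{H^{\boxtimes n}}$ exists), the task reduces in both cases to combining two witnessing objects into a single one.

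For the classical implication \eqref{eqsepc1}, set $t=\chi(G^{\boxtimes m})$. A proper coloring of $G^{\boxtimes m}$ is a map $f:V(G^{\boxtimes m})\to[t]$ sending adjacent vertices to distinct colors, and by hypothesis we may pick an independent set $\{\mathbf{s}_1,\ldots,\mathbf{s}_t\}\subseteq V(H^{\boxtimes n})$. Define $\phi(\mathbf{x})=\mathbf{s}_{f(\mathbf{x})}$. For adjacent $\mathbf{x},\mathbf{y}$ in $G^{\boxtimes m}$ the colors differ, so $\phi(\mathbf{x})$ and $\phi(\mathbf{y})$ are distinct vertices of an independent set in $H^{\boxtimes n}$, hence adjacent in $\overline{H^{\boxtimes n}}$. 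Thus $\phi$ is a homomorphism $G^{\boxtimes m}\to\overline{H^{\boxtimes n}}$ and $\scrate_m(G,H)\le n$.

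For the quantum implication \eqref{eqsepq1}, I would mimic the classical construction with tensor products. Let $t=\qchrom(G^{\boxtimes m})$ and let $\rho,\{\rho_{\mathbf{x}}^i\}_{\mathbf{x}\in V(G^{\boxtimes m}),\,i\in[t]}\subseteq\C^{d\times d}$ be a witnessing system from Definition~\ref{def:qchrom}. Since $t\leq \qindep(H^{\boxtimes n})$, we may choose a witnessing system $\tau,\{\tau_i^{\mathbf{s}}\}_{i\in[t],\,\mathbf{s}\in V(H^{\boxtimes n})}\subseteq\C^{d'\times d'}$ for $\qindep(H^{\boxtimes n})\ge t$ (from Definition~\ref{def:qindep}), discarding any surplus labels. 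Now define
\[
\sigma=\rho\otimes\tau,\qquad \sigma_{\mathbf{x}}^{\mathbf{s}}=\sum_{i=1}^{t}\rho_{\mathbf{x}}^i\otimes\tau_i^{\mathbf{s}}.
\]
Each $\sigma_{\mathbf{x}}^{\mathbf{s}}$ is positive semidefinite, $\Tr(\sigma)=1$, and the sum rule is immediate: $\sum_{\mathbf{s}}\sigma_{\mathbf{x}}^{\mathbf{s}}=\sum_i\rho_{\mathbf{x}}^i\otimes\bigl(\sum_{\mathbf{s}}\tau_i^{\mathbf{s}}\bigr)=\sum_i\rho_{\mathbf{x}}^i\otimes\tau=\rho\otimes\tau=\sigma$. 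For the orthogonality condition, take $\{\mathbf{x},\mathbf{y}\}\in E(G^{\boxtimes m})$ and $\mathbf{s},\mathbf{t}$ either equal or adjacent in $H^{\boxtimes n}$; then
\[
\sigma_{\mathbf{x}}^{\mathbf{s}}\,\sigma_{\mathbf{y}}^{\mathbf{t}}
 =\sum_{i,j=1}^{t}\bigl(\rho_{\mathbf{x}}^i\rho_{\mathbf{y}}^j\bigr)\otimes\bigl(\tau_i^{\mathbf{s}}\tau_j^{\mathbf{t}}\bigr),
\]
and this splits cleanly into two kinds of zero: the diagonal terms $i=j$ vanish by the source orthogonality (adjacency of $\mathbf{x},\mathbf{y}$), while the off-diagonal terms $i\ne j$ vanish by the channel orthogonality (equality or adjacency of $\mathbf{s},\mathbf{t}$). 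So the system exists in dimension $dd'$, and $\qscrate_m(G,H)\le n$ follows.

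The only subtle point I expect—more bookkeeping than conceptual—is aligning the role of the index set $[t]$, which on the source side indexes outcomes of Alice's measurement and on the channel side indexes transmittable messages. The tensor construction above is precisely the algebraic incarnation of the operational concatenation: Alice first runs the source protocol to obtain an $i\in[t]$, then feeds $i$ into the channel protocol; Bob recovers $i$ from his channel-side register using the channel POVM and then applies the corresponding source-side POVM on his other register to identify Alice's input.
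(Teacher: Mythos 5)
Your proof is correct. It is worth noting, though, that the paper itself does not carry out this construction: its ``proof'' of Lemma~\ref{lem:qsep} is the one-paragraph operational argument preceding the statement --- concatenate a source code with a channel code, observing that the condition $\chi(G^{\boxtimes m})\le\alpha(H^{\boxtimes n})$ (resp.\ $\qchrom(G^{\boxtimes m})\le\qindep(H^{\boxtimes n})$) guarantees that the channel can carry the source code's messages --- and it then remarks that the lemma ``can alternatively be shown using the definition of a graph homomorphism and simple matrix manipulations.'' You have supplied exactly that alternative: the explicit homomorphism $\phi(\mathbf{x})=\mathbf{s}_{f(\mathbf{x})}$ in the classical case, and the tensor-product witness $\sigma_{\mathbf{x}}^{\mathbf{s}}=\sum_{i}\rho_{\mathbf{x}}^{i}\otimes\tau_{i}^{\mathbf{s}}$ in the entangled case, with the orthogonality check splitting into the diagonal terms (killed by the source condition on adjacent $\mathbf{x},\mathbf{y}$) and the off-diagonal terms (killed by the channel condition for $i\ne j$ with $\mathbf{s}=\mathbf{t}$ or $\{\mathbf{s},\mathbf{t}\}\in E(H^{\boxtimes n})$). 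What your version buys is a fully self-contained verification directly against Definitions~\ref{def:qchrom}, \ref{def:qindep} and \ref{def:qscrate}, with no appeal to the operational interpretation; the paper's version buys brevity and makes the ``separate coding'' intuition transparent. Your closing remark correctly identifies the tensor construction as the algebraic shadow of the operational concatenation, so the two arguments are really the same idea at different levels of formality. All steps check out, including the minor point of truncating the $\qindep$ witness from $M\ge t$ labels down to $[t]$.
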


We now relate the minimum cost rate to the ratio of the Witsenhausen rate  and the Shannon capacity in both classical and entanglement-assisted cases.

\begin{proposition}\label{prop:eta}
Let $G$ and $H$ be graphs and assume that both $G$ and $\overline H$ have at least one edge. Then,
\begin{equation}\label{eqsepc2}
\eta(G,H) \le \frac{R(G)}{c(H)}= \lim_{m\rightarrow \infty} \frac{1}{m} \epsilon_m(G,H),
\end{equation}
\begin{equation}\label{eqsepq2}
\qscrate(G,H) \leq \frac{\qwits(G)}{\qcap(H)} =  \lim_{m\rightarrow \infty} \frac{1}{m} \epsilon_m^\star(G,H),
\end{equation}
where $\epsilon_m(G,H):= \min\{n: \chi(G^{\boxtimes m})\le \alpha(H^{\boxtimes n})\}$ and $\epsilon_m^\star(G,H):= \min\{n: \chi^\star(G^{\boxtimes m})\le \alpha^\star(H^{\boxtimes n})\}$.
\end{proposition}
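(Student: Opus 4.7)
The plan is to derive the two inequalities of Proposition~\ref{prop:eta} as direct consequences of Lemma~\ref{lem:qsep}, and to establish the equalities $R(G)/c(H)=\lim_m \epsilon_m(G,H)/m$ and $\qwits(G)/\qcap(H)=\lim_m \epsilon^\star_m(G,H)/m$ by a two-sided Fekete-style sandwich argument. The classical and entanglement-assisted proofs will be identical in structure; in the first, I use that $\chi$ is sub-multiplicative and $\alpha$ super-multiplicative under $\boxtimes$, while in the second, I will invoke the analogous sub- and super-multiplicativity of $\qchrom$ and $\qindep$ stated in Section~\ref{sec:intro}. I restrict attention below to the classical case.

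First, Lemma~\ref{lem:qsep} applied with $n=\epsilon_m(G,H)$ yields $\scrate_m(G,H)\le \epsilon_m(G,H)$. Dividing by $m$ and taking the limit as $m\to\infty$ will give the inequality $\eta(G,H)\le \lim_m \epsilon_m(G,H)/m$ once the limit on the right is shown to exist. Hence the whole task reduces to proving the identity $\lim_m \epsilon_m(G,H)/m = R(G)/c(H)$.

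For the lower bound $\liminf_m \epsilon_m(G,H)/m\ge R(G)/c(H)$, I will set $n_m:=\epsilon_m(G,H)$. By definition $\chi(G^{\boxtimes m})\le \alpha(H^{\boxtimes n_m})$, and since $c(H)=\sup_n \log\alpha(H^{\boxtimes n})/n$, one gets $\log \alpha(H^{\boxtimes n_m})\le c(H)\, n_m$. Combining these yields $n_m/m\ge \log\chi(G^{\boxtimes m})/(m\, c(H))$, whose $\liminf$ is $R(G)/c(H)$ (using that $\log\chi(G^{\boxtimes m})/m\to R(G)$, which needs $G$ to have an edge so that $\chi(G^{\boxtimes m})\to\infty$ and $n_m\to\infty$, guaranteeing the limit is meaningful).

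For the upper bound $\limsup_m \epsilon_m(G,H)/m\le R(G)/c(H)$, fix $\eps>0$ and pick $n^\star$ with $A:=\alpha(H^{\boxtimes n^\star})$ satisfying $\log A/n^\star \ge c(H)-\eps$; such $n^\star$ exists by definition of $c(H)$. For each $m$, let $k$ be the smallest integer with $A^k\ge \chi(G^{\boxtimes m})$. By super-multiplicativity $\alpha(H^{\boxtimes k n^\star})\ge A^k\ge \chi(G^{\boxtimes m})$, so $\epsilon_m(G,H)\le k n^\star$. Minimality of $k$ gives $(k-1)\log A<\log\chi(G^{\boxtimes m})$, hence
\[
\frac{\epsilon_m(G,H)}{m}\le \frac{\log\chi(G^{\boxtimes m})}{m\cdot \log A/n^\star}+\frac{n^\star}{m}.
\]
Letting $m\to\infty$ yields $\limsup_m \epsilon_m(G,H)/m\le R(G)/(c(H)-\eps)$, and $\eps\to 0$ finishes the argument. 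The quantum case runs verbatim with $\chi,\alpha,R,c$ replaced by $\qchrom,\qindep,\qwits,\qcap$ and Lemma~\ref{lem:qsep}(\ref{eqsepq1}) replacing~(\ref{eqsepc1}). The main technical point to watch is verifying that $n_m\to\infty$ so the bound on $\log\alpha(H^{\boxtimes n_m})/n_m$ is applicable, which follows since $G$ has an edge and hence $\chi(G^{\boxtimes m})$ grows unboundedly in $m$; the only mild obstacle is the bookkeeping around the minimality of $k$ in the upper bound.
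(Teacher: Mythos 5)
Your proposal is correct, and its skeleton matches the paper's: both reduce the inequality $\eta(G,H)\le\lim_m\epsilon_m(G,H)/m$ to Lemma~\ref{lem:qsep}, and then establish the identity $\lim_m\epsilon_m(G,H)/m=R(G)/c(H)$ by a two-sided bound, with the lower bound in both cases coming from $\log\chi(G^{\boxtimes m})\le\log\alpha(H^{\boxtimes n_m})\le c(H)\,n_m$. Where you diverge is the upper bound. The paper uses the minimality of $n_m=\epsilon_m(G,H)$ directly, via $\alpha(H^{\boxtimes(n_m-1)})<\chi(G^{\boxtimes m})$, and squeezes $n_m/m$ between two quantities both tending to $R(G)/c(H)$; this requires knowing that $n_m\to\infty$ and that $\log\alpha(H^{\boxtimes n})/n$ actually converges to $c(H)$ (Fekete), so that the limit can be taken along the subsequence $n_m-1$. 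You instead fix a single block length $n^\star$ nearly achieving the capacity, use super-multiplicativity of $\alpha$ to cover $\chi(G^{\boxtimes m})$ with $k$ copies of $H^{\boxtimes n^\star}$, and let $\eps\to0$ at the end. Your route only needs the supremum characterization of $c(H)$ rather than the existence of the limit, and sidesteps the $n_m\to\infty$ bookkeeping in the upper bound, at the price of the extra $\eps$-argument; the paper's squeeze is slightly shorter given that Fekete's lemma has already been invoked. Both arguments transfer verbatim to the entangled case using the sub-/super-multiplicativity of $\qchrom$ and $\qindep$ noted in Section~\ref{sec:intro}, which the paper likewise leaves to the reader. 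One small remark: the convergence $\log\chi(G^{\boxtimes m})/m\to R(G)$ holds by Fekete irrespective of whether $G$ has an edge; the edge hypotheses on $G$ and $\overline H$ are really needed to guarantee $c(H)\ge 1>0$ and that $\epsilon_m(G,H)$ is finite (since $\alpha(H^{\boxtimes n})\ge 2^n$ eventually exceeds $\chi(G^{\boxtimes m})$), which your write-up slightly misattributes but which does not affect the validity of the argument.
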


\begin{proof}
We  show (\ref{eqsepc2}); we omit the proof of (\ref{eqsepq2}) which is analogous (and uses (\ref{eqsepq1})).
 From (\ref{eqsepc1}) we have the inequality:
  $\scrate_m(G,H)\le \epsilon_m(G,H),$ 
  which implies 
${\eta(G,H)\le \lim_{m\rightarrow \infty} \epsilon_m(G,H)/m}$. Next we show that this limit is equal to $R(G)/c(H)$, which concludes the proof of (\ref{eqsepc2}).
Setting $n=\epsilon_m(G,H)$,  we have that ${\alpha(H^{\boxtimes (n-1)})< \chi(G^{\boxtimes m})\le \alpha(H^{\boxtimes n})}$, implying
\beqrn
\frac{R(G)}{c(H)} & \le & {\log \chi(G^{\boxtimes m})\over m} \frac{n}{ \log \alpha(H^{\boxtimes n})} 
\le
\frac{n}{m}
< \frac{n}{n-1} \frac{\log \chi(G^{\boxtimes m})}{m} \frac{n-1}{\log\alpha(H^{\boxtimes n-1})}.
\eeqrn
Taking limits as $m\rightarrow \infty$ in the right most terms we obtain that $R(G)/c(H)$ is equal to $\lim_{m\rightarrow \infty} \epsilon_m(G,H)/m$.
\end{proof}

\section{Lower bound on the entangled chromatic number}\label{sec:thetabound}

In this section we  prove Theorem \ref{thm:theta}, which we restate here for convenience.

\begin{reptheorem}{thm:theta}
For any graph~$G$, we have
\begin{equation}\label{reftheta1-2}
 \thplus(G) \leq \qchrom(\overline{G}),
 \end{equation}
 \begin{equation}\label{reftheta2-2} 
\log \vartheta (G)\le R^\star (\overline G).
\end{equation}
\end{reptheorem}

We will use the  following  simple fact about  positive semidefinite matrices with a special block form (which can be found, e.g., in \cite[Lemma 2.8]{Gvozdenovic:2008}).
\begin{lemma}
\label{lem:block}
Let~$X$ be a~$t \times t$ block matrix, with a  matrix~$A$ as diagonal blocks and a matrix $B$ as non-diagonal blocks, of the form
\beqn\label{eq:block}
X = 
\underbrace{\begin{pmatrix}
A & B & \ldots & B \\
B & A & \ldots & B \\
\vdots & \vdots & \ddots & \vdots \\
B & B  & \ldots & A
\end{pmatrix}}_\text{t blocks}.
\eeqn
Then, $X \succeq 0$ if and only if $A - B \succeq 0$ and $A + (t-1)B \succeq 0$.
\end{lemma}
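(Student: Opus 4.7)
My plan is to diagonalize the ``index part'' of the block structure while leaving the ``entry part'' alone. Write
\[
X \;=\; I_t \otimes (A-B) \;+\; J_t \otimes B,
\]
where $J_t$ is the $t\times t$ all-ones matrix. This identity is a direct check: on diagonal blocks one gets $(A-B)+B=A$, and on off-diagonal blocks one gets $0+B=B$.

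The key observation is that $J_t$ has a particularly simple spectrum: it is rank one with eigenvalue $t$ on the span of the all-ones vector and eigenvalue $0$ on its orthogonal complement. Choose any orthogonal $U\in\R^{t\times t}$ whose first column is $\tfrac{1}{\sqrt t}(1,\dots,1)^{\mathsf T}$, so that $U^{\mathsf T}J_t U=\mathrm{diag}(t,0,\dots,0)$ while $U^{\mathsf T}I_t U=I_t$ automatically. Conjugating by $U\otimes I$ (a unitary) therefore gives
\[
(U^{\mathsf T}\!\otimes I)\,X\,(U\otimes I) \;=\; I_t\otimes(A-B)\;+\;\mathrm{diag}(t,0,\dots,0)\otimes B,
\]
which is block-diagonal with the first block equal to $(A-B)+tB=A+(t-1)B$ and the remaining $t-1$ blocks all equal to $A-B$.

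Since positive semidefiniteness is invariant under unitary conjugation and a block-diagonal Hermitian matrix is positive semidefinite if and only if each of its diagonal blocks is, this yields the stated equivalence:
\[
X\succeq 0 \iff A+(t-1)B \succeq 0 \ \text{ and } \ A-B \succeq 0.
\]
The only subtlety is ensuring that $A$ and $B$ are Hermitian so that $X$ is Hermitian (otherwise ``$X\succeq 0$'' is not the relevant notion); this is automatic in the intended applications where $X$ arises as a Gram matrix, but I would state it explicitly as a hypothesis. No real obstacle is expected: the whole argument reduces to the rank-one spectral decomposition of $J_t$ together with the mixed-product property of the Kronecker product.
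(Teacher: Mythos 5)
Your proof is correct and complete: the decomposition $X = I_t\otimes(A-B) + J_t\otimes B$ together with the spectral decomposition of the all-ones matrix $J_t$ and the mixed-product property of the Kronecker product is exactly the standard argument, and the paper itself gives no proof at all (it defers to Gvozdenovi\'c--Laurent, Lemma 2.8), so you have supplied what the paper only cites. Your side remark that $A$ and $B$ must be symmetric (Hermitian) for ``$X\succeq 0$'' to be meaningful is also well taken; this hypothesis is implicit in the paper's statement and is satisfied in its application, where $X$ arises from a Gram matrix averaged over $\mathrm{Sym}(t)$.
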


\begin{proof}[Proof of Theorem~\ref{thm:theta}]
We show that relations (\ref{reftheta1-2}) and (\ref{reftheta2-2}) hold for the graph $\overline G$.
First we observe that (\ref{reftheta2-2}) follows easily from (\ref{reftheta1-2}).
Indeed, (\ref{reftheta1-2}) combined with the identity (\ref{relLK}) implies
$\vartheta(\overline G)^m= \vartheta(\overline{G^{\boxtimes m}}) \le \chi^\star(G^{\boxtimes m})$ and thus 
${\log \vartheta (\overline G)\le R^\star (G)}$ follows after taking limits.

We now prove (\ref{reftheta1-2})  for the graph $\overline G$, i.e., we show that 
 the inequality $\vartheta^+(\overline G)\le \chi^\star(G)$ holds.
For this let $\rho,{\{\rho_u^i \st u \in V(G), \,i \in [t]\}}$ be a set of positive semidefinite matrices which form a solution for $\qchrom(G)=t$. We may assume that 
$\langle \rho,\rho\rangle =1$.
Define the matrix $X$, indexed by all pairs $\{u,i\}\in V(G)\times [t]$, with entries $X_{ui,vj} := \langle \rho_u^i,\rho_v^j \rangle$. By construction, $X$ is a non-negative positive semidefinite matrix which satisfies $X_{ui,vi} = 0$ for every $\{u,v\} \in E(G)$ and $i \in [t]$.

For any element $\sigma$ of Sym($t$), the group of permutations of $[t]$, we define the new (permuted)  matrix ${\sigma(X) = (X_{u \sigma(i),v \sigma(j)})}$.
Then we average the matrix $X$ over the group Sym($t$), obtaining the new matrix
$$Y = \frac{1}{\lvert \text{Sym}(t) \lvert} \sum_{\sigma \in \text{Sym}(t)} \sigma(X).$$
By construction, the matrix $Y$ is invariant under any permutation of $[t]$, \ie, $\sigma(Y)=Y$ for any $\sigma\in$ Sym($t$).
  Therefore, $Y$ has the block form of Lemma~\ref{lem:block} with, moreover, 
  \begin{equation}\label{eqAuv}
  A_{uv} = 0 \ \text{ for all } \{u,v\} \in E(G).
  \end{equation}
As each matrix $\sigma(X)$ is positive semidefinite, the matrix $Y$ is positive semidefinite as well.
From Lemma~\ref{lem:block}, this implies that $A-B$ and $A +(t-1)B$ are positive semidefinite matrices.
Using the definition of the matrix $X$ combined with the properties of the matrices $\rho^i_u$  and  the invariance of $Y$, we obtain the following relation for  any $u,v \in V(G)$:
\beqrn
1 
=
\langle \rho, \rho \rangle = \langle \sum_{i\in [t]}\rho_u^i,\sum_{j\in [t]}\rho^j_v\rangle = 
 \sum_{i \in [t]} \sum_{j \in [t]} \langle \rho_u^i, \rho_v^j \rangle 
 = \sum_{i \in [t]} \sum_{j \in [t]} X_{ui,vj} = \sum_{i \in [t]} \sum_{j \in [t]} Y_{ui,vj},
\eeqrn
implying that $ t(A_{uv} + (t-1)B_{uv}) = 1$ holds since
\beq \label{lowth}
\sum_{i \in [t]} \sum_{j \in [t]} Y_{ui,vj}= t \sum_{j \in [t]} Y_{ui,vj} 
 = t(A_{uv} + (t-1)B_{uv}).
\eeq
We are now ready to define a matrix $Z$ which is a feasible solution for the program (\ref{opt:thplus-g}) defining $\thplus(G)$.
Namely, set ${Z = t(t-1)(A-B)}$. Then, $Z$ is a positive semidefinite matrix.
For any edge $\{u,v\}\in E(G)$, the relations (\ref{eqAuv}) and (\ref{lowth}) give  $A_{uv}=0$ and $t(t-1)B_{uv}=1$ and thus $Z_{uv}=-1$.
For a non-edge $\{u,v\}$, relation (\ref{lowth}) combined with the fact that $A_{uv}\ge 0$ implies that $Z_{uv}\ge -1$.
Finally, for any $u\in V(G)$, relation (\ref{lowth}) combined with the fact that $B_{uu}\ge 0$ implies that $Z_{uu}\le t-1$.
 Define the vector $c$ with entries $c_u = t-1-Z_{uu}\ge 0$ for $u\in V(G)$, the diagonal matrix $D(c)$ with $c$ as diagonal,  and the matrix $Z'=Z + D(c)$. Then, $Z'$  is positive semidefinite and satisfies all the conditions of  the program ~(\ref{opt:thplus-g}) defining $\vartheta^+(\overline G)$. 
 This shows that  $\thplus(\overline{G}) \leq \qchrom(G)$, which concludes the proof.
\end{proof}

\section{Separation between classical and entangled Witsenhausen rate}\label{sec:Witsen}

Here we prove Theorem~\ref{thm:wits}, which shows an exponential separation between the classical and entangled-assisted Witsenhausen rate for the family of graphs $H_k$ in Definition~\ref{def:H_k}.
We repeat the formal statement of the theorem here for convenience.

\begin{reptheorem}{thm:wits}
For every odd  integer $k$, we have
\beq\label{lem:qwitsbound-2}
\qwits(H_k) \leq \log(k+1).
\eeq
Moreover, if $k=4p^\ell -1$ where $p$ is an odd prime and $\ell\in \N$, then 
\beq\label{lem:witsbound-2}
R(H_k) \geq 0.154k-1.
\eeq
\end{reptheorem}

\subsection{Upper bound on the entangled Witsenhausen rate}

Here we prove the upper bound~\eqref{lem:qwitsbound-2} on~$\qwits(H_k)$ stated in Theorem~\ref{thm:wits}.
We first prove that $\qchrom(G) \leq \xi'(G)$.
This inequality can be derived from the fact that $\qchrom(G)\leq \chi_q(G)$ and a result of~\cite{Cameron:2007} stating that~$\chi_q(G)\leq \xi'(G)$.
We give self-contained proofs of the implied bound on~$\qchrom(G)$ here.

\begin{lemma}\label{lem:xibound}
For every graph~$G$, we have~$\qchrom(G) \leq \xi'(G)$. 
\end{lemma}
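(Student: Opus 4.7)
The plan is to construct, from a $d$-dimensional orthogonal representation $f\colon V(G) \to \C^d$ with all entries of unit modulus (where $d = \xi'(G)$), an explicit feasible system of size $t = d$ for Definition~\ref{def:qchrom}. I would set $\rho := I/d$, so that $\Tr(\rho) = 1$. Since $|f(x)_j| = 1$ for every coordinate, the diagonal matrix $D_x := \Diag(f(x))$ is unitary, and so is the product $U_x := D_x F$, where $F \in \C^{d\times d}$ is the discrete Fourier transform (any ``flat'' unitary whose entries have modulus $1/\sqrt{d}$ would serve just as well). I would then define, for each $x\in V(G)$ and $i\in[d]$,
\[
\rho_x^i := \frac{1}{d}\,(U_x e_i)(U_x e_i)^*.
\]

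The two remaining requirements of Definition~\ref{def:qchrom} should then be checked directly. The completeness relation $\sum_{i\in[d]} \rho_x^i = \rho$ is immediate from $\sum_i e_i e_i^* = I$ and the unitarity of $U_x$, since the sum collapses to $U_x U_x^*/d = I/d$, which is independent of $x$ and equals $\rho$. For the orthogonality relation, as both $\rho_x^i$ and $\rho_y^i$ are rank-one, it suffices to show $\langle U_x e_i, U_y e_i\rangle = 0$ for every $i\in[d]$ whenever $\{x,y\} \in E(G)$. Expanding,
\[
(U_x e_i)^*(U_y e_i) = e_i^* F^* (D_x^* D_y)\, F e_i = (F^* (D_x^* D_y) F)_{ii} = \sum_{j=1}^d |F_{ji}|^2\, \overline{f(x)_j}\, f(y)_j,
\]
and since $|F_{ji}|^2 = 1/d$ for every $j$, this reduces to $f(x)^*f(y)/d$, which vanishes by the orthogonal representation hypothesis.

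The proof is essentially a one-step reduction and I do not expect a real obstacle; the single point worth pinning down is the role of the ``flatness'' of $F$, which is precisely what makes the conjugation $F^*(D_x^*D_y)F$ uniformly average the diagonal of $D_x^*D_y$ on its $i$-th diagonal entry, delivering the same scalar multiple of $f(x)^*f(y)$ for every $i$. With this observation in place, the triple $(\rho, \{\rho_x^i\})$ is a valid witness for $\qchrom(G)$ with $t = d$, yielding $\qchrom(G) \le \xi'(G)$.
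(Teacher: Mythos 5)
Your construction is correct and is essentially the paper's combinatorial proof in disguise: since $U_x e_i = D_x F e_i = f(x)\circ (Fe_i)$ and the columns of $F$ are the (normalized) vectors $h_i=(1,\omega_d^{i},\dots,\omega_d^{(d-1)i})^{\mathsf T}$, your matrices $\rho_x^i$ coincide (up to normalization) with the paper's $\bigl(f(x)\circ h_i\bigr)\bigl(f(x)\circ h_i\bigr)^*$, and the two feasibility checks are the same computations. No gap; your unitary phrasing even handles the normalization cleanly so that $\sum_i\rho_x^i=I/d=\rho$.
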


We give two simple proofs of this fact, the first operational in spirit, the second  combinatorial.
Although the combinatorial proof is more elementary, the operational proof will prepare the reader for the slightly more involved proof of Lemma~\ref{lem:rsp-channel} below, for which an operational proof appears to be most natural.
In addition, it follows almost immediately from the operational proof that~$\qchrom(G)\leq \xi(G)^2$; the only thing to change in the proof is to replace the remote state preparation scheme described in Section~\ref{sec:rsp} by quantum teleportation.

\begin{proof}
Operational proof:
Consider a source ${\mathcal M = (\mathsf X, \mathsf U, P)}$ with characteristic graph $G$. Let $d =\xi'(G)$ and let~$f$ be a $d$-dimensional orthogonal representation of~$G$ such that all the entries of the vectors have modulus one.
By the discussion in Section~\ref{subsec:defqscrate} it suffices to find an entanglement-assisted protocol for~$\mathcal M$ that involves only~$d$-outcome measurements on Alice's part.
To this end, let us recall the observation that Bob's input~$u$ already allows him to reduce the list of Alice's possible inputs to a clique~$\mathcal C$ in~$G$ that contains Alice's actual input~$x$.
Next, notice that the set of states~$f(y)f(y)^*/d$ for~$y\in \mathcal C$ are pairwise orthogonal.
This suggests the following protocol.
First the parties perform the remote state preparation protocol (Section~\ref{sec:rsp}) to put a quantum register belonging to Bob in the state~$f(x)f(x)^*/d$.
Now Bob performs the measurement with outcomes in~$\mathcal C\cup\{\perp\}$ as promised to exist by the Orthogonality Lemma (Lemma~\ref{lem:helho}) to learn which of the states~$f(y)f(y)^*/d$ his register is in, thus learning~$x$.
The result now follows because the remote state preparation involves only~$d$-outcome measurements.

Combinatorial proof:
Set~$d = \xi'(G)$, $\omega_d = e^{2i\pi/d}$ and let $h_i = [1, \omega_d^{i},\omega^{2i}\dots,\omega_d^{(d-1)i}]^{\mathsf T}\in\C^d$ for each $i\in[d]$.
It is not hard to see that $\{h_1,h_2,\dots,h_{d}\}$ is a complete orthogonal basis for~$\C^d$.
Set $\rho = I/d$.
Then~$\Tr(\rho) = 1$.
Let $f:V(G) \to \C^d$ be an orthonormal representation of~$G$ where each vector~$f(u)$ is such that $\overline{f(u)_i}f(u)_i = 1/d$ for every~$i\in[d]$, as guaranteed to exist by the fact that~$\xi'(G) = d$.
For every $u\in V(G)$ and $i\in[d]$, define $\rho_u^i = \big(f(u)\circ h_i\big)\big(f(u)\circ h_i\big)^*$, where $\circ$ denotes the entrywise product.
Then,
\beqn
\langle f(u)\circ h_i,f(v)\circ h_j\rangle = 
\condeq{
	\langle h_i,h_j\rangle/d & \text{if $u = v$},\\[.1cm]
	\langle f(u),f(v)\rangle & \text{if $i=j$}.
}
\eeqn
It follows that 
 for every vertex~$u\in V(G)$, we have 
 ${\sum_{i \in [d]} \rho_u^i = I/d = \rho}$.
Moreover, for each~${\{u,v\}\in E(G)}$ and $i\in [d]$, we have $\rho_u^i\rho_v^i = 0$.
As the matrices $\rho, \rho_u^i$ are also positive semidefinite, they satisfy all the requirements of Definition~\ref{def:qchrom} and so~$\qchrom(G) \leq d$.
\end{proof}

The above lemma gives a bound on the entangled chromatic number of powers of~$H_k$ from which it will be easy to get the upper bound on~$R(H_k)$ given in~\eqref{lem:qwitsbound-2}.

\begin{lemma}\label{lem:qchrom}
Let $k$ be an odd positive integer and $m\in\N$. Then,
$
\qchrom(H_k^{\boxtimes m}) \leq (k+1)^{m}.
$
Moreover, if there exists a Hadamard matrix of size $k+1$, then equality holds.
\end{lemma}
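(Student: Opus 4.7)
The plan is to reduce the problem to bounding the orthogonal rank parameter $\xi'(H_k^{\boxtimes m})$ via Lemma~\ref{lem:xibound}, and then to construct an explicit representation based on the Hamming-type structure of $H_k$. For the matching lower bound in the Hadamard case, I would chain the inequality $\omega(G)\le \qchrom(G)$ (which falls out of Theorem~\ref{thm:theta}) together with multiplicativity of $\omega$ under the strong product and the clique bound of Proposition~\ref{prop:clique}.

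For the upper bound, my first step is to produce a $(k+1)$-dimensional orthogonal representation of $H_k$ with unit-modulus entries. The natural choice is the map $\phi\colon V(H_k)\to \pmset{k+1}$ sending $v\mapsto (v,1)$. Since the vertices of $H_k$ sit in $\pmset{k}$ and adjacency is defined by $\langle u,v\rangle = -1$ in $\R^k$, appending a fixed coordinate $1$ shifts this inner product to $-1+1=0$, so adjacent vertices are sent to orthogonal vectors. Thus $\xi'(H_k)\le k+1$. To handle $H_k^{\boxtimes m}$, I would take the $m$-fold tensor product $\phi^{\otimes m}$, which lands in a $(k+1)^m$-dimensional space and whose vectors still have all entries of modulus one. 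The key verification is that if $(u_1,\dots,u_m)$ and $(v_1,\dots,v_m)$ are distinct adjacent vertices of $H_k^{\boxtimes m}$, then for every coordinate $i$ either $u_i=v_i$ (contributing a nonzero factor $k+1$) or $\{u_i,v_i\}\in E(H_k)$ (contributing a factor $0$), and the distinctness of the two vertices forces at least one coordinate of the latter type, killing the tensor inner product. Combined with Lemma~\ref{lem:xibound}, this yields $\qchrom(H_k^{\boxtimes m})\le \xi'(H_k^{\boxtimes m}) \le (k+1)^m$.

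For the matching lower bound when a Hadamard matrix of size $k+1$ exists, my plan is to show $\qchrom(G)\ge \omega(G)$ for any graph $G$ and apply it to $G=H_k^{\boxtimes m}$. Applying Theorem~\ref{thm:theta} to $\overline G$ yields $\vartheta^+(\overline G) \le \qchrom(G)$, and the Sandwich Theorem gives $\omega(G)=\alpha(\overline G)\le \vartheta(\overline G)\le \vartheta^+(\overline G)$. Next, from Proposition~\ref{prop:clique} we have $\omega(H_k)\ge k+1$, and since any Cartesian product of cliques (one per factor) is a clique in the strong product, $\omega$ is super-multiplicative under $\boxtimes$, giving $\omega(H_k^{\boxtimes m})\ge (k+1)^m$. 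Chaining these inequalities closes the gap.

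The only potentially delicate step is the verification that $\phi^{\otimes m}$ indeed gives an orthogonal representation of the strong product (the interplay between the "distinct" condition on vertices of $H_k^{\boxtimes m}$ and the two types of coordinates), but this is a short combinatorial check rather than a genuine obstacle. The rest is a direct assembly of tools already appearing in the excerpt.
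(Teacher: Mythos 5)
Your proposal is correct and follows essentially the same route as the paper: the map $v\mapsto(v,1)$ giving $\xi'(H_k)\le k+1$, Lemma~\ref{lem:xibound}, and for the lower bound the chain $\qchrom(G)\ge\vartheta^+(\overline G)\ge\omega(G)$ combined with Proposition~\ref{prop:clique} and super-multiplicativity of $\omega$. The only (immaterial) difference is that you pass to powers by tensoring the orthogonal representation, i.e.\ sub-multiplicativity of $\xi'$, whereas the paper invokes sub-multiplicativity of $\qchrom$ itself; both verifications are routine and correct.
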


\begin{proof}
Consider the map~$f:V(H_k)\to\pmset{k+1}$ defined by~$f(u) = (u,1)^{\mathsf T}$ (\ie, $f$ sends a vertex $u\in \pmset{k}$ to itself and appends a ``1'' to it).
Then clearly~$f$ is an orthogonal representation of~$H_k$ and since its vectors have entries of modulus one we get~${\xi'(H_k)\leq k+1}$.
By Lemma~\ref{lem:xibound} and sub-multiplicativity we therefore have~${\qchrom(H_k^{\boxtimes m}) \leq \qchrom(H_k)^m \leq (k+1)^{m}}$. 

We now prove that if there exists a Hadamard matrix of size $k+1$ then also the reverse inequality holds: ${\qchrom(H_k^{\boxtimes m}) \geq (k+1)^{m}}$. Recall from Proposition~\ref{prop:clique} that the existence of a Hadamard matrix of size~$k+1$ implies~${\omega(H_k) \geq k+1}$.
Combining this with Theorem \ref{thm:theta} and the Sandwich Theorem gives that for every positive integer~$m$, we have
\beqn
\qchrom(H_k^{\boxtimes m}) \geq \vartheta(\overline{H_k^{\boxtimes m}}) \geq \omega(H_k^{\boxtimes m}) \geq \omega(H_k)^m \geq (k+1)^m,
\eeqn
where the third inequality uses the simple fact that if a subset~$W\subseteq V(G)$ forms a clique in a graph~$G$, then the set~$W^m$ of $m$-tuples of elements from~$W$ forms a clique in~$G^{\boxtimes m}$.
\end{proof}

The bound~\eqref{lem:qwitsbound-2} now follows as a simple corollary.

\begin{corollary}
Let~$k$ be an odd positive integer. Then~$\qwits(H_k) \leq \log(k+1)$.
\end{corollary}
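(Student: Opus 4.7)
The plan is to apply the definition of $\qwits$ directly to the bound furnished by Lemma~\ref{lem:qchrom}. Recall that by Definition~\ref{def:qchrom} we have
$$
\qwits(H_k) = \lim_{m\to\infty} \frac{1}{m}\log\qchrom(H_k^{\boxtimes m}),
$$
and this limit equals the infimum over $m\in\N$ by the sub-multiplicativity of $\qchrom$ under strong graph products (noted just after Definition~\ref{def:qchrom}).

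First I would invoke Lemma~\ref{lem:qchrom}, which gives the bound $\qchrom(H_k^{\boxtimes m}) \leq (k+1)^m$ for every $m\in\N$. Taking logarithms and dividing by $m$ yields
$$
\frac{1}{m}\log\qchrom(H_k^{\boxtimes m}) \leq \log(k+1)
$$
for each $m$. Letting $m\to\infty$ (or taking the infimum over $m$) then gives $\qwits(H_k) \leq \log(k+1)$, as claimed. In fact, since the right-hand side above is independent of $m$, even the single case $m=1$ suffices.

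There is no real obstacle here: the corollary is essentially a one-line consequence of Lemma~\ref{lem:qchrom} combined with the definition of $\qwits$. All the substantive work, namely exhibiting an orthogonal representation of $H_k$ with unit-modulus entries in $\C^{k+1}$ (via appending a ``$1$'' to each $\pmset{k}$-vertex) and then invoking $\qchrom(G)\le \xi'(G)$ from Lemma~\ref{lem:xibound}, has already been carried out in the proof of Lemma~\ref{lem:qchrom}.
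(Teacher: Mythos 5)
Your proposal is correct and follows essentially the same route as the paper: both derive the bound as an immediate consequence of Lemma~\ref{lem:qchrom} together with the infimum formulation of $\qwits$ (the paper simply uses the $m=1$ term of the infimum, which you also note suffices). No gaps.
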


\begin{proof}
By Lemma~\ref{lem:qchrom} we have $${\qwits(H_k) = \inf_m \log \qchrom(H_k^{\boxtimes m})/m \leq \log\qchrom(H_k) \leq \log(k+1)}.$$
\end{proof}

We also record the following additional corollary, which we use later in Section~\ref{sec:scratesep}.

\begin{corollary}\label{cor:clique}
For every odd integer $k$ such that there is a Hadamard matrix of size $k+1$, we have $\omega(H_k^{\boxtimes m}) = (k+1)^m$.
\end{corollary}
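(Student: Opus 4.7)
The plan is to observe that the chain of inequalities established inside the proof of Lemma~\ref{lem:qchrom} already pins down $\omega(H_k^{\boxtimes m})$ exactly, so that the corollary falls out with no new work.

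More concretely, the upper bound comes from Lemma~\ref{lem:qchrom} itself: for any odd $k$ we have $\qchrom(H_k^{\boxtimes m}) \leq (k+1)^m$, where the Hadamard hypothesis was not used for this direction. The lower bound uses the Hadamard assumption via Proposition~\ref{prop:clique}, which gives $\omega(H_k) \geq k+1$, and hence by taking the $m$-fold product of a maximum clique, $\omega(H_k^{\boxtimes m}) \geq \omega(H_k)^m \geq (k+1)^m$. Sandwiching these with the two inequalities used in the proof of Lemma~\ref{lem:qchrom}, namely $\omega(G) \leq \vartheta(\overline{G})$ (the Sandwich Theorem) and $\vartheta(\overline{G}) \leq \qchrom(G)$ (Theorem~\ref{thm:theta}), we obtain
\[
(k+1)^m \;\leq\; \omega(H_k^{\boxtimes m}) \;\leq\; \vartheta(\overline{H_k^{\boxtimes m}}) \;\leq\; \qchrom(H_k^{\boxtimes m}) \;\leq\; (k+1)^m.
\]

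Therefore all of these inequalities hold with equality, and in particular $\omega(H_k^{\boxtimes m}) = (k+1)^m$, as claimed. There is no real obstacle here: the statement is essentially a byproduct of the two-sided estimate already produced in Lemma~\ref{lem:qchrom}, so the only thing to do is to spell out that the lower and upper bounds in that proof actually coincide.
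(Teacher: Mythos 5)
Your proof is correct and is essentially the paper's own argument: the paper's proof of Corollary~\ref{cor:clique} just says ``Combining Proposition~\ref{prop:clique} and Lemma~\ref{lem:qchrom} gives the result,'' and the sandwich $(k+1)^m \leq \omega(H_k)^m \leq \omega(H_k^{\boxtimes m}) \leq \vartheta(\overline{H_k^{\boxtimes m}}) \leq \qchrom(H_k^{\boxtimes m}) \leq (k+1)^m$ you write out is exactly the chain already displayed in the proof of that lemma. Nothing to add.
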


\begin{proof}
Combining Proposition~\ref{prop:clique} and Lemma~\ref{lem:qchrom} gives the result.
\end{proof}

\subsection{Lower bound on the classical Witsenhausen rate}

To prove the lower bound~\eqref{lem:witsbound-2} on~$R(H_k)$ stated in Theorem~\ref{thm:wits} we use the following upper bound on the classical independence number of the graphs $H_k^{\boxtimes m}$ for certain values of $k$. 

\begin{lemma}\label{lem:newshan}
Let $p$ be an odd prime number, $\ell\in\N$ and set $k= 4p^\ell-1$.
Then, for every $m\in\N$, we have
\begin{equation}\label{eq:newshan}
\alpha(H_k^{\boxtimes m})  \leq  \left( {k\choose 0}  + {k\choose 1} + \cdots + {k\choose p^\ell-1}\right)^m 
\leq  2^{k\, m\, H(3/11)}
 < 2^{0.846 \, k\,m},
 \end{equation}
%
 where $H(t)=  -t\log t - (1-t)\log(1-t)$ is the binary entropy function.
\end{lemma}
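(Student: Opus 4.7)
The plan is to apply Alon's linear algebra method, using a polynomial $P_v$ associated to each $v$ in the independent set, whose low degree comes from a Barrington--Beigel--Rudich construction of divisibility-by-$p^\ell$ polynomials.

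First I would reformulate the independence condition modulo $p^\ell$. For any $u, v \in V(H_k)$ the Hamming distance $d_H(u,v)$ is even and lies in $[0, 4p^\ell - 2]$, and because $p$ is odd the power $p^\ell$ is odd, so the only even multiples of $p^\ell$ in that range are $0$ and $2p^\ell$. Hence $u = v$ or $\{u,v\} \in E(H_k)$ iff $d_H(u,v) \equiv 0 \pmod{p^\ell}$, and $I \subseteq V(H_k^{\boxtimes m})$ is independent iff for every pair of distinct $x, y \in I$ there exists $i \in [m]$ with $d_H(x_i, y_i) \not\equiv 0 \pmod{p^\ell}$.

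Next, for each $v \in I$ I would construct a polynomial $P_v$ in the variables $x \in \pmset{km}$ over $\F_p$ with $P_v(v) \neq 0$, $P_v(u) = 0$ for all $u \in I \setminus \{v\}$, and $P_v$ multilinear of degree at most $p^\ell - 1$ in each of the $m$ blocks of $k$ variables. The evaluation matrix $(P_v(u))_{u,v \in I}$ will then be diagonal with nonzero diagonal, so $\{P_v\}_{v \in I}$ is linearly independent inside the $\F_p$-vector space of polynomials that are multilinear per block with per-block degree at most $p^\ell - 1$, a space of dimension $B^m$ with $B = \sum_{j=0}^{p^\ell-1} \binom{k}{j}$. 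This yields the main inequality $\alpha(H_k^{\boxtimes m}) \le B^m$.

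I would take $P_v(x) = \prod_{i=1}^m Q_{v_i}(x_i)$, where $Q_u(y)$ is a multilinear polynomial of degree at most $p^\ell - 1$ such that $Q_u(y) \neq 0$ iff $d_H(u,y) \equiv 0 \pmod{p^\ell}$; the required properties of $P_v$ then follow immediately from the reformulated independence condition. In the base case $\ell = 1$ one may take $Q_u(y) = 1 - d_H(u,y)^{p-1}$ over $\F_p$: since $d_H(u,y)$ is affine in $y$ once $u$ is fixed, this polynomial has degree $p-1$, and by Fermat's little theorem it is nonzero precisely when $d_H(u,y) \equiv 0 \pmod p$. The main obstacle is the case $\ell \ge 2$: here one must invoke the Barrington--Beigel--Rudich construction of a polynomial $R(t)$ of degree $p^\ell - 1$ over $\F_p$ (or $\Z/p^\ell\Z$) that acts as the indicator of ``$t \equiv 0 \pmod{p^\ell}$'' on the relevant finite range of integers; composing $R$ with the affine form $d_H(u, y)$ then gives the desired $Q_u$.

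For the entropy bound, I would apply the standard inequality $\sum_{j=0}^d \binom{k}{j} \le 2^{kH(d/k)}$ (valid for $d \le k/2$) at $d = p^\ell$, giving $B \le 2^{kH(p^\ell/k)}$. The ratio $p^\ell/k = p^\ell/(4p^\ell - 1)$ is a decreasing function of $p^\ell$, and for $p$ an odd prime and $\ell \ge 1$ the smallest admissible $p^\ell$ is $3$, so $p^\ell/k \le 3/11 < 1/2$. Monotonicity of $H$ on $[0, 1/2]$ gives $B \le 2^{kH(3/11)}$, and a direct computation of $H(3/11) < 0.846$ finishes the derivation after raising to the $m$-th power.
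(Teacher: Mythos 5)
Your strategy coincides with the paper's: reduce non-adjacency in $H_k$ to $d_H(u,v)\not\equiv 0\bmod p^\ell$ (using that $p^\ell$ is odd and $k<4p^\ell$, so $0$ and $2p^\ell$ are the only even multiples of $p^\ell$ in range), attach to each vertex a multilinear polynomial of degree at most $p^\ell-1$ coming from Barrington--Beigel--Rudich, deduce $\alpha(H_k^{\boxtimes m})\le \big(\sum_{j=0}^{p^\ell-1}{k\choose j}\big)^m$ from linear independence (you run Alon's argument directly on the $m$-fold product, where the paper cites his lemma for powers of a represented graph --- an immaterial difference), and finish with the entropy bound at $p^\ell/k\le 3/11$. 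All of that is sound, including the $\ell=1$ construction $1-d_H(u,y)^{p-1}$ once you multilinearize via $y_i^2=y_i$.

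The one step that fails as literally written is your mechanism for $\ell\ge 2$: there is no \emph{univariate} $R\in\F_p[t]$ of any degree that indicates $t\equiv 0\bmod p^\ell$ on a range containing both $0$ and $p$, because the value of $R$ at an integer $t$ depends only on $t\bmod p$; and the natural univariate candidate over $\Z/p^\ell\Z$, namely $\prod_{a=1}^{p^\ell-1}(t-a)$, is no better, since it takes the value $\pm(p^\ell-1)!\equiv 0\bmod p^\ell$ at $t=0$. The actual Barrington--Beigel--Rudich polynomial is genuinely multivariate: a symmetric multilinear $f\in\Z_p[x_1,\dots,x_k]$ of degree $p^\ell-1$, built from the elementary symmetric polynomials of degrees $1,p,\dots,p^{\ell-1}$ via Lucas' theorem, with $f(c)\ne 0$ iff $|c|\equiv 0\bmod p^\ell$ on $\bset{k}$. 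Invoking it in this form (as the paper does) repairs your argument with no other changes: set $Q_u(y)=f(u\oplus y)$, i.e.\ substitute $1-y_i$ for $y_i$ whenever $u_i=1$; this per-coordinate affine substitution preserves multilinearity and degree, and gives $Q_u(y)\ne 0$ iff $d_H(u,y)\equiv 0\bmod p^\ell$, exactly what your product construction $P_v=\prod_i Q_{v_i}$ requires.
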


The proof of this lemma is an instance of the linear algebra method due to Alon~\cite{Alon:1998} (see also Gopalan~\cite{Gopalan:2006}), which we recall below for completeness.
Let $G$ be a graph and $\F$ be a field. 
Let $\mathcal F\subseteq \F[x_1,\dots,x_k]$ be a subspace of the space of $k$-variate polynomials over~$\F$.
A {\em representation} of $G$ over~$\mathcal F$ is an assignment $\big((f_u,c_u)\big)_{u\in V(G)}\subseteq \mathcal F\times \F^k$ of polynomial-point pairs to the vertices of $G$ such that
\begin{equation*}
 f_u(c_u)  \neq 0 \quad \forall u\in V(G) \; \text{ and } 
 f_u(c_v)  =0 \quad \forall u\ne v\in V(G) \text{ with } \{u,v\}\not\in E(G).
\end{equation*}
%
%

\begin{lemma}[Alon~\cite{Alon:1998}]\label{lem:polybound}
Let $G$ be a graph, $\F$ be a field, $k\in \N$ and $\mathcal F$ be a subspace of $\F[x_1,\dots,x_k]$. If ${\big((f_u,c_u)\big)_{u\in V}\subseteq\mathcal F\times\F^k}$ represents $G$, then  ${\alpha(G^{\boxtimes n})\leq \dim(\mathcal F)^n}$ for all $n\in\N$.
\end{lemma}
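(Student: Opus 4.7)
The plan is to use the standard diagonal-trick argument from the polynomial/linear algebra method: I will lift the representation of $G$ to a representation of $G^{\boxtimes n}$ of the same flavor, living in the $n$-fold tensor product of $\mathcal F$ with itself, and then show that the polynomials indexed by an independent set are linearly independent.

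First, fix a maximum independent set $I$ in $G^{\boxtimes n}$, with vertices written as tuples $\mathbf u=(u_1,\dots,u_n)\in V(G)^n$. Recall that two distinct tuples $\mathbf u,\mathbf v\in V(G^{\boxtimes n})$ are non-adjacent in $G^{\boxtimes n}$ precisely when there exists some coordinate $i\in[n]$ with $u_i\ne v_i$ and $\{u_i,v_i\}\notin E(G)$. For each $\mathbf u\in I$, introduce fresh variable blocks $x^{(1)},\dots,x^{(n)}$ (each block a $k$-tuple of variables) and define
\[
F_{\mathbf u}\bigl(x^{(1)},\dots,x^{(n)}\bigr)\ =\ \prod_{i=1}^n f_{u_i}\bigl(x^{(i)}\bigr),
\]
together with the evaluation point $C_{\mathbf v}=(c_{v_1},\dots,c_{v_n})$. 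Each $F_{\mathbf u}$ lies in the tensor product subspace $\mathcal F^{\otimes n}\subseteq\F[x^{(1)},\dots,x^{(n)}]$, which has dimension $\dim(\mathcal F)^n$.

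The key computation is $F_{\mathbf u}(C_{\mathbf v})=\prod_{i=1}^n f_{u_i}(c_{v_i})$. If $\mathbf u=\mathbf v$, each factor is $f_{u_i}(c_{u_i})\ne 0$, so the product is nonzero. If $\mathbf u\ne\mathbf v$ and both lie in $I$, then by independence in $G^{\boxtimes n}$ there is a coordinate $i$ with $u_i\ne v_i$ and $\{u_i,v_i\}\notin E(G)$, so the representation hypothesis gives $f_{u_i}(c_{v_i})=0$ and the whole product vanishes.

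This upper-triangular (in fact diagonal) behavior immediately forces linear independence of $\{F_{\mathbf u}\}_{\mathbf u\in I}$ in $\mathcal F^{\otimes n}$: from $\sum_{\mathbf u\in I}\lambda_{\mathbf u}F_{\mathbf u}\equiv 0$, evaluating at $C_{\mathbf v}$ yields $\lambda_{\mathbf v}F_{\mathbf v}(C_{\mathbf v})=0$, so $\lambda_{\mathbf v}=0$ for every $\mathbf v\in I$. Therefore $\alpha(G^{\boxtimes n})=|I|\le \dim(\mathcal F^{\otimes n})=\dim(\mathcal F)^n$, as claimed. There is no real obstacle beyond verifying that the product construction preserves membership in the tensor subspace and spelling out the non-adjacency condition in strong powers; both are routine.
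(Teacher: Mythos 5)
Your proof is correct; the paper itself gives no proof of this lemma, simply citing Alon, and your argument is exactly the standard one from the linear algebra method: the product polynomials $F_{\mathbf u}$ live in a space of dimension at most $\dim(\mathcal F)^n$, and the evaluation pattern $F_{\mathbf u}(C_{\mathbf v})=0$ for distinct $\mathbf u,\mathbf v$ in an independent set together with $F_{\mathbf u}(C_{\mathbf u})\neq 0$ forces linear independence. The one detail worth having spelled out --- that non-adjacency of distinct vertices in $G^{\boxtimes n}$ yields a coordinate $i$ with $u_i\neq v_i$ and $\{u_i,v_i\}\notin E(G)$, matching the hypothesis $f_{u_i}(c_{v_i})=0$ of the representation --- you state correctly, so there is nothing to add.
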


We get a representation for the graph $H_k$, for  ${k = 4p^\ell-1}$, from the following result of Barrington, Beigel and Rudich~\cite{Barrington:1994}
(see~\cite[Lemma 5.6]{Yekhanin:2012} for the statement as it appears below). 

\begin{lemma}[Barrington, Beigel and Rudich~\cite{Barrington:1994}]\label{lem:bbr}
Let $p$ be a prime number and let $k$, $\ell$ and $w$ be integers such that $k>p^\ell$.
There exists a multilinear polynomial $f\in\Z_p[x_1,\dots,x_k]$ of degree $\deg(f) \leq p^\ell-1$ such that for every $c\in \bset{k}$, we have
\beqn
f(c)\equiv \left\{
\begin{array}{ll}
1 & \text{if } c_1 + c_2 + \cdots c_k\equiv w\bmod p^\ell\\
0 & \text{otherwise.}
\end{array}
\right.
\eeqn
\end{lemma}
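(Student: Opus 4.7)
The plan is to construct $f$ as a $\mathbb{Z}_p$-linear combination of the low-degree elementary symmetric polynomials. For $d \in \{0,1,\ldots, p^\ell - 1\}$, let $e_d \in \mathbb{Z}_p[x_1, \ldots, x_k]$ denote the $d$-th elementary symmetric polynomial, $e_d(x) = \sum_{T \subseteq [k],\, |T|=d} \prod_{i\in T} x_i$, which is multilinear of degree $d$. For any $c \in \bset{k}$ of Hamming weight $s := c_1 + \cdots + c_k$ one has $e_d(c) = \binom{s}{d}$, so any polynomial of the form $f = \sum_{d=0}^{p^\ell - 1} a_d\, e_d$ is multilinear of degree at most $p^\ell - 1$, with $f(c)$ depending only on $s$ and equal to $\sum_{d=0}^{p^\ell - 1} a_d \binom{s}{d}$.

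The key structural step is to invoke Lucas's theorem to show that, reduced modulo $p$, these values depend only on $s \bmod p^\ell$. For $d < p^\ell$, the base-$p$ expansion of $d$ is supported on the digits of index $< \ell$, so Lucas's theorem gives
\beqn
\binom{s}{d} \equiv \prod_{j\geq 0} \binom{s_j}{d_j} = \prod_{j=0}^{\ell-1} \binom{s_j}{d_j} \pmod{p},
\eeqn
where $s_j$ (resp.\ $d_j$) denotes the $j$-th base-$p$ digit of $s$ (resp.\ $d$). Hence $\binom{s}{d} \bmod p$ is determined by $s_0, \ldots, s_{\ell-1}$, that is, by $s \bmod p^\ell$.

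It then remains to choose coefficients so that $\sum_d a_d \binom{s}{d} \equiv \mathbf{1}[s \equiv w \bmod p^\ell] \pmod p$ for every $s \in \{0, 1, \ldots, k\}$. By the previous step it suffices to enforce this for $s \in \{0, 1, \ldots, p^\ell - 1\}$. Consider the $p^\ell \times p^\ell$ matrix $M$ over $\mathbb{F}_p$ with entries $M_{s,d} = \binom{s}{d} \bmod p$. Since $\binom{s}{d} = 0$ for $s < d$ and $\binom{s}{s} = 1$, $M$ is lower triangular with $1$'s on the diagonal, hence invertible over $\mathbb{F}_p$. Solving the linear system $M a = v$ for the indicator vector $v$ having a $1$ in position $w$ and $0$'s elsewhere yields coefficients $a_0, \ldots, a_{p^\ell - 1} \in \mathbb{Z}_p$, and the resulting polynomial $f = \sum_d a_d\, e_d$ meets the conclusion of the lemma.

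The argument is almost entirely routine once Lucas's theorem is on the table; the main conceptual point, and the only mild obstacle, is the collapse of $\binom{s}{d} \bmod p$ to a product of digit-binomials, which forces $f$ to see $s$ only through its residue modulo $p^\ell$. After that, one just keeps two quantifications straight: the $a_d$'s are pinned down by the $p^\ell$ equations for $s \in \{0, \ldots, p^\ell - 1\}$ via invertibility of $M$, and the Lucas step then extends correctness to all $s \le k$ at no extra cost (the hypothesis $k > p^\ell$ ensures that every residue class modulo $p^\ell$ is actually realized and that $e_d$ is nontrivial for each $d \le p^\ell - 1$).
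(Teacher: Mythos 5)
Your proof is correct. Note that the paper itself gives no proof of this lemma: it is imported verbatim from Barrington--Beigel--Rudich (via Yekhanin's Lemma~5.6), and your argument --- expanding $f$ in the elementary symmetric polynomials $e_0,\dots,e_{p^\ell-1}$, using $e_d(c)=\binom{s}{d}$, invoking Lucas's theorem to reduce dependence to $s\bmod p^\ell$, and then inverting the unitriangular matrix $\bigl(\binom{s}{d}\bigr)_{s,d<p^\ell}$ over $\mathbb{F}_p$ --- is exactly the standard proof found in those references, and every step checks out. The only cosmetic remark is that the hypothesis $k>p^\ell$ plays no role in your construction (it merely makes the degree bound nontrivial, since for $k\le p^\ell-1$ every Boolean function is already a multilinear polynomial of degree at most $p^\ell-1$), so your closing parenthetical slightly overstates what that hypothesis is needed for.
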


With this we can now prove Lemma~\ref{lem:newshan}.

\begin{proof}[Proof of Lemma~\ref{lem:newshan}]
Let $c\in\bset{k}$ be a string such that its Hamming weight $|c|$ is even and satisfies $|c|\equiv 0\bmod p^\ell$.
Then, since $p$ is odd and $k < 4p^\ell$, we have $|c|\in\{0,2p^\ell\}$.
Hence, if $|c|\not\in\{0,2p^\ell\}$, then $|c|\not\equiv 0\bmod p^\ell$.

Recall from Definition \ref{def:H_k} that  $H_k$ can be defined as the graph whose vertices are the strings of $\{0,1\}^k$ with an even Hamming weight and where two distinct vertices $u,v$ are adjacent if their Hamming distance $|u\oplus v|$ is equal to $(k+1)/2 = 2p^\ell$. Here $u\oplus v$ is the sum modulo 2.
For $u,v\in V(H_k)$, their Hamming distance $|u\oplus v|$ is an even number. Hence if $u\ne v$ are not adjacent in $H_k$, then $|u\oplus v|\not\in \{0,2p^\ell\}$ and thus $|u\oplus v|\not\equiv 0 \mod p^\ell$.

Let $f\in\Z_p[x_1,\dots,x_k]$ be a multilinear polynomial of degree at most $p^\ell-1$ such that for every $c\in \bset{k}$, we have
\beqn
f(c)\equiv \left\{
\begin{array}{ll}
1 & \text{if } |c|\equiv 0\bmod p^\ell\\
0 & \text{otherwise,}
\end{array}
\right.
\eeqn
as is promised to exist by Lemma~\ref{lem:bbr} (applied to $w=0$).

We use $f$ to define a representation for~$H_k$.
To this end define for each $u\in\bset{k}$ vertex in $V(H_k)$ the polynomial $f_u\in \Z_p[x_1,\dots,x_k]$ obtained by replacing in the polynomial $f$ the variable $x_i$ by $1-x_i$ if $u_i = 1$ and leaving it unchanged otherwise.
For example, if $u = (1,1,0,\dots,0)$, then  $f_u(x_1,\dots,x_k) = f(1-x_1,1-x_2,x_3,\dots,x_k)$.
Moreover, associate to the vertex $u$  the point $c_u=u$ seen as a $0/1$ vector in~$\Z_p^k$.
We claim that $\big((f_u,c_u)\big)_{u\in V(H_k)}$ is a representation of $H_k$. 
To see this, observe that  $f_u(c_v) = f(u\oplus v)$ for any $u,v\in V(H_k)$, so that $f_u(c_u)=f(0)=1$, and $f_u(c_v)=0$ if $u,v$ are distinct and non-adjacent.

Since the polynomials $f_u$ are multilinear and have degree at most $p^\ell-1$, they span a space of dimension at most ${{k\choose 0}  + {k\choose 1} + \cdots + {k\choose p^\ell-1}},$ which is the number of multilinear monomials of degree at most~$p^\ell -1$.
Applying Lemma~\ref{lem:polybound} we obtain that
\begin{equation}\label{eqa}
\alpha(H_k^{\boxtimes m})\leq \left( {k\choose 0}  + {k\choose 1} + \cdots + {k\choose p^\ell-1}\right)^m.
\end{equation}
We now use the well known fact that for $q,k\in \N$ with ${1<q<k/2}$, 
${k \choose 0}+\ldots + {k\choose q-1}\le 2^{kH(q/k)}$.
From this, since 
$p^\ell / (4p^\ell -1) \le 3/11$, we deduce that the right hand side in (\ref{eqa}) can be upper bounded by
$2^{k\, m\, H(3/11)} < 2^{0.846\, k \, m}$.
\end{proof}

The bound~\eqref{lem:witsbound-2} stated in Theorem~\ref{thm:wits} is a simple corollary of Lemma~\ref{lem:newshan}.

\begin{corollary}
Let $p$ be an odd prime number and~$\ell\in\N$. Then, for $k = 4p^{\ell} - 1$, we have $R(H_k) \geq 0.154k - 1$.
\end{corollary}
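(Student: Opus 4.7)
The plan is to combine the standard inequality $\chi(G)\ge |V(G)|/\alpha(G)$ with the upper bound on $\alpha(H_k^{\boxtimes m})$ from Lemma~\ref{lem:newshan}, and then extract the Witsenhausen rate by taking logarithms, dividing by $m$, and letting $m\to\infty$.

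First I would record the vertex count: since $V(H_k)$ consists of all strings in $\{-1,1\}^k$ with an even number of $-1$ entries, we have $|V(H_k)|=2^{k-1}$, and hence $|V(H_k^{\boxtimes m})|=2^{(k-1)m}$. Second, for any graph $G$, a proper coloring partitions the vertex set into $\chi(G)$ independent sets, each of size at most $\alpha(G)$; applying this to $G = H_k^{\boxtimes m}$ gives
\[
\chi(H_k^{\boxtimes m}) \;\ge\; \frac{|V(H_k^{\boxtimes m})|}{\alpha(H_k^{\boxtimes m})} \;=\; \frac{2^{(k-1)m}}{\alpha(H_k^{\boxtimes m})}.
\]
Third, I would invoke Lemma~\ref{lem:newshan}, which under the hypothesis $k=4p^\ell-1$ for an odd prime $p$ yields $\alpha(H_k^{\boxtimes m}) < 2^{0.846\, k\, m}$. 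Substituting this into the denominator gives
\[
\chi(H_k^{\boxtimes m}) \;>\; 2^{(k-1)m - 0.846\, k\, m} \;=\; 2^{m(0.154\, k - 1)}.
\]

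Finally, taking logarithms base $2$, dividing by $m$, and passing to the limit in the definition $R(H_k) = \lim_{m\to\infty} \log\chi(H_k^{\boxtimes m})/m$ (which exists by Fekete's lemma, as recalled in the introduction) produces the desired bound $R(H_k) \ge 0.154\, k - 1$. There is no real obstacle here: the only ingredients are the elementary bound $\chi \ge |V|/\alpha$, the trivial computation of $|V(H_k)|$, and the already-established Lemma~\ref{lem:newshan}; all the substantive work (the polynomial method via Barrington--Beigel--Rudich and Alon's lemma) has been done in proving that lemma.
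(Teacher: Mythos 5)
Your proof is correct and follows the paper's own argument exactly: the bound $\chi \geq |V|/\alpha$ applied to $H_k^{\boxtimes m}$, the count $|V(H_k^{\boxtimes m})| = 2^{(k-1)m}$, the upper bound $\alpha(H_k^{\boxtimes m}) < 2^{0.846\,k\,m}$ from Lemma~\ref{lem:newshan}, and then logarithms, division by $m$, and the limit. Nothing is missing.
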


\begin{proof}
By Lemma~\ref{lem:newshan}, for every integer $m$ we have
\beqn
\chi(H_k^{\boxtimes m}) \geq \frac{|V(H_k^{\boxtimes m})|}{\alpha(H_k^{\boxtimes m})} > \frac{2^{(k-1)m}}{2^{0.846km} }= 2^{(0.154k-1)m}.
\eeqn 
Taking the logarithm, dividing by $m$ and taking the limit~${m\to\infty}$ gives the result.
\end{proof}

\section{Separation between classical and entangled Shannon capacity}\label{sec:Shannon}
Here we prove Theorem~\ref{thm:cap}, thus showing the existence of an infinite family of graphs for which the entangled capacity exceeds the Shannon capacity.
We repeat the statement of the theorem here for convenience.

\begin{reptheorem}{thm:cap}
For every odd integer~$k \geq 11$, we have 
\begin{equation}\label{lem:qcapbound-2}
\qcap(H_k) \geq (k-1)\left(1-\frac{2\log(k+1)}{k-3}\right).
\end{equation}
Moreover, if $k=4p^\ell -1$ where $p$ is an odd prime and $\ell\in \N$, then 
\begin{equation}\label{lem:newshan-cap-2}
c(H_k) \leq 0.846\, k.
\end{equation}
\end{reptheorem}

\subsection{Lower bound on the entangled Shannon capacity}

The proof of the bound~(\ref{lem:qcapbound-2}) on the entangled Shannon capacity is again based on remote state preparation (Section~\ref{sec:rsp}).

The idea behind the proof is to show that with $t+1$ sequential uses of a channel with confusability graph~$H_k$, Alice can send Bob~$|V(H_k)|^t$ distinct messages with zero probability of error, provided that $t\leq \log\alpha(H_k)/\log \xi'(H_k)$.
More generally, we have the following lemma, from which the result~\eqref{lem:qcapbound-2} easily follows.

\begin{lemma}\label{lem:rsp-channel}
For every graph $G$ and integer $t \geq 1$ such that $t \leq  \log \alpha(G)/\log \xi'(G)$, we have
$$ \qcap(G) \geq \frac{\log|V(G)|}{1+ \frac{1}{t}}. $$
\end{lemma}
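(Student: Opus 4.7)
The plan is to show directly that $\qindep(G^{\boxtimes(t+1)}) \geq |V(G)|^t$, from which the claim follows because $\qcap(G)=\sup_n \log\qindep(G^{\boxtimes n})/n \geq t\log|V(G)|/(t+1)=\log|V(G)|/(1+1/t)$. To prove this lower bound on $\qindep$, I will exhibit an entanglement-assisted zero-error protocol that transmits any one of $|V(G)|^t$ messages over $t+1$ uses of a channel with confusability graph $G$.

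Set $d:=\xi'(G)$ and fix an orthogonal representation $f:V(G)\to\C^d$ whose vectors have entries of modulus one. The hypothesis $\xi'(G)^t \le \alpha(G)$ lets me fix an independent set $I\subseteq V(G)$ with $|I|=d^t$ and a bijection $\phi:[d^t]\to I$. Messages are tuples $\mathbf{u}=(u_1,\ldots,u_t)\in V(G)^t$, and Alice and Bob begin with the maximally entangled state in $\C^{d^t}\otimes\C^{d^t}$. Given $\mathbf{u}$, Alice applies the remote state preparation scheme of Section~\ref{sec:rsp} to prepare on Bob's side the unit vector $d^{-t/2}(f(u_1)\otimes\cdots\otimes f(u_t))$, whose entries all have modulus $d^{-t/2}$, so the scheme applies. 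The resulting RSP outcome $\ell\in[d^t]$ is sent through one use of the channel as $\phi(\ell)\in I$; since $I$ is independent in $G$, Bob recovers $\ell$ without error and completes the preparation. Alice then uses the remaining $t$ channel uses to send $u_1,\ldots,u_t$ in sequence. For each $i$, let $C_i\subseteq V(G)$ be the set of inputs confusable with the observed output $v_i$; by definition of $G$ this is a clique, so the vectors $\{f(u):u\in C_i\}$ are pairwise orthogonal. Bob measures the $i$-th tensor factor of his register in any orthonormal basis extending $\{d^{-1/2}f(u):u\in C_i\}$, and since his $i$-th factor is exactly $d^{-1/2}f(u_i)$ he learns $u_i$ deterministically.

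To match Definition~\ref{def:qindep} verbatim, I will package this operational protocol into a matrix system as was done in the combinatorial part of Lemma~\ref{lem:xibound}: the states $\rho_{\mathbf{u}}^{\mathbf{s}}$ correspond to Bob's (unnormalized) post-measurement register conditioned on message $\mathbf{u}$ and joint channel input $\mathbf{s}\in V(G^{\boxtimes(t+1)})$, while $\rho$ is the common marginal on Bob's side of the maximally entangled state. The orthogonality condition $\rho_{\mathbf{u}}^{\mathbf{s}}\rho_{\mathbf{u}'}^{\mathbf{s}'}=0$ for all $(\mathbf{u},\mathbf{s})\ne(\mathbf{u}',\mathbf{s}')$ with $\mathbf{s}=\mathbf{s}'$ or $\{\mathbf{s},\mathbf{s}'\}\in E(G^{\boxtimes(t+1)})$ is obtained from Bob's zero-error decoding measurement via the $(2)\Rightarrow(1)$ direction of the Orthogonality Lemma (Lemma~\ref{lem:helho}), and the normalization $\sum_{\mathbf{s}}\rho_{\mathbf{u}}^{\mathbf{s}}=\rho$ is automatic from the fact that Alice's measurement POVMs sum to the identity.

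The main obstacle will be the clean bookkeeping between the single ``RSP-carrying'' channel use and the $t$ subsequent ``coordinate-carrying'' channel uses: I must verify that the joint channel input $\mathbf{s}=(\phi(\ell),u_1,\ldots,u_t)$ ranging over $[d^t]\times V(G)^t$ indeed yields the required orthogonality across all pairs that are equal or adjacent in $G^{\boxtimes(t+1)}$. This comes down to the two clean facts already used above: the first coordinate lives in an independent set so no confusion can occur there, and in every coordinate $i\ge 2$ the confusable set is a clique whose $f$-image is orthogonal in $\C^d$. The rest is routine manipulation, and the quantitative factor $1+1/t$ arises precisely from paying one extra channel use to ship the RSP label inside $I$.
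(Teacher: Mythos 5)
Your proposal is correct and follows essentially the same route as the paper's proof: remote state preparation of the states $f(u_i)f(u_i)^*/d$ on Bob's side, $t$ channel uses to pin each coordinate down to a clique, and one extra channel use to convey the RSP label through an independent set, yielding $\qindep(G^{\boxtimes(t+1)})\ge|V(G)|^t$. The only (harmless, arguably cleaner) difference is that you run a single RSP on the $d^t$-dimensional product vector and encode the outcome $\ell\in[d^t]$ directly into an independent set of size $d^t$, whereas the paper runs $t$ separate $d$-dimensional RSPs and ships $t\lceil\log d\rceil$ bits in one channel use, so your version uses the hypothesis $d^t\le\alpha(G)$ exactly rather than the marginally stronger bound the paper implicitly needs.
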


\begin{proof}
Let~$\mathcal N = (\mathsf S, \mathsf V, Q)$ be a channel with confusability graph~$G$.
Let~$d= \xi'(G)$ and let~$f$ be a $d$-dimensional orthogonal representation of $G$ such that its vectors have entries of modulus one. 
For each~$x\in V(G)$ define~$\rho_x = f(x)f(x)^*/d$.
Let~$t$ be a positive integer such that
\beq\label{eq:tbound}
t \leq \frac{\log \alpha(G)}{\log d}.
\eeq
By~\cite[Theorem~1]{Cubitt:2010} it suffices to find an entanglement-assisted protocol  for the noiseless transmission of~$|V(G)|^t$ distinct messages based on at most~$t+1$ uses of the channel~$\mathcal N$.
Indeed this then implies that~$\qindep(G^{\boxtimes (t+1)}) \geq |V(G)|^t$ and therefore
\beqn
\qcap(G) \geq \frac{\log\qindep(G^{\boxtimes (t+1)})}{t+1} \geq \frac{t\log |V(G)|}{1+t}
\eeqn
as claimed.
To this end, we consider the following four-step protocol for transmitting a sequence ${{\bf x} = (x_1,\dots,x_t)\in V(G)^t}$. 
First, Alice prepares $d$-dimensional quantum registers~$\mathcal A_1,\dots,\mathcal A_t$ to be in the states $\rho_{x_1},\dots,\rho_{x_t}$, respectively.
Second, Alice sends the sequence~${\bf x}$ through the channel by using it $t$ times in a row.
This will result in~$t$ channel-outputs on Bob's end of the channel from which he can infer that each~$x_i$ belongs to a particular clique in~$G$.
Third, Alice and Bob execute the remote state preparation scheme described in Section~\ref{sec:rsp} $t$ times in a row, once for each of the states $\rho_{x_1},\dots,\rho_{x_t}$ separately. (Recall that $\rho_{x_i} = f(x_i)f(x_i)^*/d$ where $f(x_i) \in \C^d$ has norm~$\sqrt{d}$, so in the notation of Section~\ref{sec:rsp} we are setting~$u = f(x_i)/\sqrt{d}$.)
This requires that Alice communicates a total of $t\ceil{\log d}$ bits to Bob.
To do so, Alice uses the channel one more time to send, without error, the bits required to perform the remote state preparation. This can be done if $\log \alpha(G) \geq t \ceil{\log d}$, which holds by our assumed bound~\eqref{eq:tbound}.
At this point Bob's quantum registers~$\mathcal B_1,\dots,\mathcal B_t$ are in states $\rho_{x_1},\dots,\rho_{x_t}$. 
Moreover, for each $x_i$ Bob knows a clique in the graph $G$ that contains~$x_i$ and by construction elements of a clique have pairwise orthogonal states. 
In the last step, for every $i \in [t]$, Bob can perform a measurement on register $\mathcal B_i$ such that he gets outcome~$x_i$ with probability one (due to Lemma~\ref{lem:helho}). 
Hence, Bob can recover any sequence~${\bf x} = (x_1,\dots,x_t)\in V(G)^t$ with zero probability of error, completing the proof.
\end{proof}

\begin{proof}[Proof of~\eqref{lem:qcapbound-2}]
Recall from the proof of Lemma~\ref{lem:qchrom} that $\log \xi'(H_k) \leq \log (k+1)$.
From  Lemma~\ref{lem:alphaHk}, we know that $\log \alpha(H_k)\geq (k-3)/2$.
Hence, for $k \geq 11$ we can choose $t = \left\lfloor (k-3)/(2 \log (k+1))\right\rfloor \leq \left\lfloor \log \alpha(H_k)/\log \xi'(H_k) \right\rfloor$. Note that we require $k \geq 11$ to ensure that $t \geq 1$.  
We now apply Lemma~\ref{lem:rsp-channel} and, recalling that $|V(H_k)| = 2^{(k-1)}$, we obtain 
\beqrn
\qcap(H_k) \geq (k-1) \frac{1}{1+\frac{1}{t}} \geq (k-1) \left(1-\frac{2\log(k+1)}{k-3}\right)
\eeqrn
which gives the result.
\end{proof}

\subsection{Upper bound on the Shannon capacity}

The upper bound (\ref{lem:newshan-cap-2}) on the Shannon capacity of $H_k$ (for certain values of $k$) stated in Theorem~\ref{thm:cap} is an easy corollary of Lemma~\ref{lem:newshan}.

\begin{corollary}
Let~$p$ be an odd prime, $\ell\in\N$ and set ${k = 4p^{\ell} - 1}$. Then, $c(H_k) \leq 0.846k$
\end{corollary}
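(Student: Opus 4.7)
The plan is to derive the bound directly from Lemma~\ref{lem:newshan} by taking logarithms and passing to the limit in the definition of the Shannon capacity. Since the statement is a corollary of the already-established bound on $\alpha(H_k^{\boxtimes m})$, no new combinatorial or algebraic machinery is needed.

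First, I would invoke Lemma~\ref{lem:newshan} which, for $k = 4p^\ell - 1$, gives the inequality
$$\alpha(H_k^{\boxtimes m}) < 2^{0.846\, k\, m}$$
for every positive integer $m$. Taking the base-$2$ logarithm of both sides and dividing by $m$ yields
$$\frac{1}{m}\log\alpha(H_k^{\boxtimes m}) < 0.846\, k,$$
where the right-hand side is independent of $m$.

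Next, I would recall from equation~\eqref{eq:capdef} that $c(H_k) = \lim_{n\to\infty} \frac{1}{n}\log\alpha(H_k^{\boxtimes n})$, and in fact equals the supremum of this sequence by sub-additivity arguments discussed after the definition. Passing to the limit (or taking supremum) in the displayed inequality above gives
$$c(H_k) \leq 0.846\, k,$$
which is the desired bound.

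There is no real obstacle here: the entire content lies in Lemma~\ref{lem:newshan}, whose proof uses the Barrington--Beigel--Rudich low-degree polynomial for the modular OR-function together with Alon's representation method. The only minor point to mention is that the strict inequality $<$ in the finite-$m$ bound becomes a non-strict $\leq$ in the limit, which is harmless.
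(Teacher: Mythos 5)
Your proposal is correct and follows exactly the paper's own argument: the paper also derives the corollary by taking logarithms of the bound in Lemma~\ref{lem:newshan}, dividing by $m$, and letting $m\to\infty$. Your remark about the strict inequality relaxing to a non-strict one in the limit is a fair point of care, though harmless as you note.
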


\begin{proof}
By taking the logarithm, dividing by $m$ and taking the limit~$m\to\infty$ on both sides of~\eqref{eq:newshan} we get the result.
\end{proof}

\section{Separation between classical and entangled source-channel cost rate}\label{sec:scratesep}

\subsection{Proof of Theorem \ref{thm:scratesep}}

Now we prove Theorem~\ref{thm:scratesep}, separately showing the two bounds (\ref{eq:qscratebound}) for $\qscrate$ and (\ref{eq:scratebound}) for
$\scrate$.

\begin{reptheorem}{thm:scratesep}
Let~$p$ be an odd prime and $\ell\in\N$ such that there exists a Hadamard matrix of size~$4p^{\ell}$.
Set $k = 4p^\ell - 1$.
Then, 
\beq\label{eq:qscratebound-2}
\qscrate(H_k,H_k) \leq \frac{\log (k+1)}{(k-1)\left(1-\frac{2\log(k+1)}{k-3}\right)
},
\eeq
\beq\label{eq:scratebound-2}
\scrate(H_k,H_k) >  \frac{0.154\, k-1} {k-1-\log(k+1)}.
\eeq
\end{reptheorem}

The bound~(\ref{eq:qscratebound-2}) is obtained by combining~(\ref{lem:qwitsbound}),~(\ref{lem:qcapbound}) with Proposition~\ref{prop:eta}.
The proof of~(\ref{eq:scratebound}) relies on the No-Homomorphism Lemma due to Albertson and Collins~\cite{Albertson:1985}.

An {\em automorphism} of $G$ is a permutation $\pi$ of $V(G)$ preserving edges, i.e., $\{\pi(u),\pi(v)\}\in E(G)$  if and only if $\{u,v\}\in E(G)$.
The graph $G$ is {\em vertex-transitive} if, for any $u,v\in V(G)$, there exists an automorphism $\pi$ of $G$ such that $v=\pi(u)$.

\begin{lemma}[No-Homomorphism Lemma, Albertson and Collins~\cite{Albertson:1985}]\label{lem:transbound} 
Let $H$ be a vertex-transitive graph. If there is a homomorphism from $G$ to $H$, then
\beqn
\frac{|V(G)|}{\alpha(G)} \leq  \frac{|V(H)|}{\alpha(H)}. 
\eeqn
\end{lemma}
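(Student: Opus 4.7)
The plan is to use a standard averaging argument over the automorphism group of $H$. Let $\phi : V(G) \to V(H)$ be a homomorphism and let $S \subseteq V(H)$ be a maximum independent set in $H$, so that $|S| = \alpha(H)$. The key observation is that for any automorphism $\pi$ of $H$, the image $\pi(S)$ is again an independent set of size $\alpha(H)$ in $H$, and therefore its preimage $\phi^{-1}(\pi(S))$ is an independent set in $G$. Indeed, if $u, u' \in \phi^{-1}(\pi(S))$ were adjacent in $G$, then $\{\phi(u), \phi(u')\}$ would be an edge of $H$ contained in $\pi(S)$, contradicting independence of $\pi(S)$. Hence
\[
|\phi^{-1}(\pi(S))| \leq \alpha(G) \quad \text{for every } \pi \in \aut(H).
\]

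Next, I would average this inequality over all automorphisms $\pi$ of $H$, chosen uniformly at random. The crucial input is vertex-transitivity: for any fixed $v \in V(H)$, the orbit of $v$ under $\aut(H)$ is all of $V(H)$, so by a standard orbit-stabilizer counting the probability that $\pi^{-1}(v) \in S$ (equivalently $v \in \pi(S)$) equals $|S|/|V(H)| = \alpha(H)/|V(H)|$, independent of $v$. Writing the cardinality as a sum of indicators and exchanging sum and expectation gives
\[
\Exp_{\pi}\bigl[|\phi^{-1}(\pi(S))|\bigr] = \sum_{u \in V(G)} \Pr_\pi[\phi(u) \in \pi(S)] = |V(G)| \cdot \frac{\alpha(H)}{|V(H)|}.
\]
Since every term in the average is at most $\alpha(G)$, the average is at most $\alpha(G)$, which rearranges to the desired inequality $|V(G)|/\alpha(G) \leq |V(H)|/\alpha(H)$.

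There is no real obstacle here; the only subtle point is verifying that vertex-transitivity yields the uniform hitting probability $\alpha(H)/|V(H)|$. This is where one must be careful to distinguish ``$v \in \pi(S)$'' from ``$\pi(v) \in S$'' and invoke the fact that for vertex-transitive graphs the automorphism group acts transitively, so each vertex is covered by the same number of translates of $S$. Everything else is bookkeeping, and no additional hypothesis on $G$ is needed.
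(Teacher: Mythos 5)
Your proof is correct: the averaging argument over $\aut(H)$, using that vertex-transitivity makes $\pi^{-1}(v)$ uniformly distributed on $V(H)$ so that each vertex lies in $\pi(S)$ with probability $\alpha(H)/|V(H)|$, is exactly the standard proof of the No-Homomorphism Lemma, and the step showing $\phi^{-1}(\pi(S))$ is independent is sound because a homomorphism sends edges to edges (so adjacent vertices cannot both land in an independent set). The paper itself gives no proof of this lemma --- it is simply cited from Albertson and Collins --- so there is nothing to compare against; your argument is complete and is the canonical one.
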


As observed in \cite{Briet:2012c}, the graph $H_k$ is vertex-transitive; indeed, for any $u\in V(H_k)$, the map $v\mapsto u\oplus v$ is an automorphism of $H_k$.
It is easy to see that vertex transitivity is preserved under strong products and complements. Hence, $\overline {H_k^{\boxtimes n}}$ is vertex-transitive for any $n\in \N$.

\begin{proof}[Proof of~\eqref{eq:scratebound-2}]
Recall the definition of  $\scrate(H_k,H_k)$ from (\ref{eqeta}).
Consider integers $m,n\in \N$ for which  $H_k^{\boxtimes m}\longrightarrow \overline{H_k^{\boxtimes n}}$.
Applying Lemma \ref{lem:transbound}, we deduce that 
\beq\label{eq:homineq}
\frac{\lvert V(H_k^{\boxtimes m}) \lvert}{\alpha(H_k^{\boxtimes m})} 
\leq 
\frac{\lvert V(\overline{H_k^{\boxtimes n}}) \lvert}{\alpha(\overline{H_k^{\boxtimes n}})} 
= 
\frac{\lvert V(\overline{H_k^{\boxtimes n}}) \lvert}{\omega(H_k^{\boxtimes n})}.
\eeq
From Corollary~\ref{cor:clique} we have $\omega(H_k^{\boxtimes n}) = (k+1)^n$.
As $|V(H_k)| = 2^{k-1}$ and applying Lemma~\ref{lem:newshan}, we get
\beqn
\frac{2^{(k-1)\, m}}{2^{k\, m\, 0.846}} 
\:\:\stackrel{\text{Lemma \ref{lem:newshan}}}{<} \:\:
\frac{\lvert V(H_k^{\boxtimes m}) \lvert}{\alpha(H_k^{\boxtimes m})}
\:\:\stackrel{\eqref{eq:homineq}}{\leq} \:\:
\frac{\lvert V(\overline{H_k^{\boxtimes n}}) \lvert}{\omega(H_k^{\boxtimes n})}
\:\:=\:\:
\frac{2^{(k-1)\, n}}{(k+1)^n}\,.
\eeqn
After a few elementary algebraic manipulations and taking logarithms the above inequality implies 
$$\frac{n}{m} >  \frac{0.154\, k-1} {k-1-\log(k+1)}.$$
\end{proof}

\section{Concluding remarks and open problems}\label{sec:concl}

We have shown a separation between classical and entangled-assisted coding for the zero-error source-channel, source and channel problems. 
Note that these separations do not hold if asymptotically vanishing error is allowed.
We have presented an infinite family of instances for which there is an exponential saving in the minimum asymptotic cost rate of communication for the source-channel and the source coding problems.  
Moreover, for the channel coding problem we showed an infinite family of channels for which the entangled Shannon capacity exceeds the classical Shannon capacity by a constant factor. 
It would be interesting to find a family of channels with a larger separation.

The main result in~\cite{Nayak:2006} is that, for the classical source-channel coding problem, there exist situations for which separate encoding is highly suboptimal. Does this happen also in the entanglement-assisted case?
This question has a positive answer if there exists a graph $G$ with $\qwits(G) > \qcap(\overline G)$.
In~\cite{Nayak:2006} a sufficient condition for a separate encoding to be optimal is also proven, namely that the characteristic or the confusability graph is a perfect graph.
It is straightforward to see that this is also a sufficient condition for a separate entangled-assisted encoding to be optimal. Are there weaker conditions that hold for the entangled case?

One of the most interesting open questions in zero-error classical information theory is the computational complexity of the Witsenhausen rate and of the Shannon capacity. The same question is also open for the entangled counterparts as well as for the parameters $\qchrom$ and $\qindep$.

In Section~\ref{sec:intro}, we have seen that the entangled chromatic and independence number generalize the parameters $\chi_q$ and $\alpha_q$ which arise in the context of Bell inequalities and non-local games.
In~\cite{Roberson:2012} it is conjectured that $\qindep(G)=\alpha_q(G)$ for every graph $G$.
A possible approach to show that $\qchrom$ and $ \chi_q$ are two separate quantities is to prove that the relationship between Kochen-Specker sets and $\chi_q$ found in~\cite{Mancinska:2012} does not hold for $\qchrom$.
Finally, we mention that the existence of a graph $G$ for which $\qchrom(G)< \chi_q(G)$ or $\alpha_q(G) < \qindep(G)$ would prove the existence of a non-local game such that every quantum strategy that wins with probability one does not use a maximally entangled state.

\section*{Acknowledgments}
We thank Ronald de Wolf for useful discussions and helpful comments on a preliminary version of this paper, and we thank Dion Gijswijt and Debbie Leung for helpful discussions, and the anonymous referees for helpful comments on an earlier version of this manuscript, for pointing out an error in a previous version of Lemma~\ref{lem:helho}, and for providing the reference for Theorem~\ref{lemma:hjw}.

J.~B.\ was partially supported by a Rubicon grant from the Netherlands Organization for Scientific Research (NWO).
J.~B.\ and H.~B.\ were supported by the European Commission under the project QCS (Grant No. 255961). 
H.~B.\ and T.~P.\ were supported by the EU grant SIQS.
G.~S.\ was supported by Ronald de Wolf's Vidi grant 639.072.803 from the Netherlands Organization for Scientific Research (NWO). This work was done while G.S.
was a PhD student at CWI, Amsterdam.

\end{document}